\tikzset{->, auto, >=stealth', font=\small}
\tikzset{state/.style={shape=circle, draw, fill=white, initial text=,
    inner sep=.5mm, minimum size=1.5mm}}
\tikzset{accepting/.style=accepting by arrow}
\tikzset{state with output/.style={shape=rectangle split, rectangle
    split parts=2, draw, fill=white,
    initial text=, inner sep=1mm}}
\tikzset{place/.style={shape=circle, draw, minimum size=4mm}}
\tikzset{transition/.style={fill,minimum width=5mm,minimum height=1mm, inner sep=.1mm}}
\renewcommand\thefootnote{\@arabic\c@footnote}
\renewcommand*\epsilon{\varepsilon}
\renewcommand*\phi{\varphi}
\newcommand{\sq}{\square}
\newcommand*\isq{{\textup{\textsc{i}}}\mkern-1mu \sq}
\newcommand*\csq{{\textup{c}} \sq}
\newcommand*\op{\textup{\textsf{op}}}
\newcommand*\cat[1]{\text{\textup{\textsf{#1}}}}
\newcommand*\Set{\cat{Set}}
\newcommand*\Rel{\cat{Rel}}
\newcommand*\laxto{\mathrel{\tilde\rightarrow}}
\newcommand*\evord{\dashrightarrow}
\newcommand*\ilo[3]{\leftidx{_{#1}}{#2}{_{#3}}}
\newcommand*\ibullet{\vcenter{\hbox{\tiny $\bullet$}}}
\newcommand*\pibullet{\phantom{\ibullet}}
\newcommand*\loset[1]{\left[\begin{smallmatrix}#1\end{smallmatrix}\right]}
\newcommand*\ie{\textit{i.e.},\xspace}
\newcommand*\starter[2]{\leftidx{_{#2\!}}{{\uparrow}#1}{}}
\newcommand*\terminator[2]{{#1}{\downarrow}_{#2}}
\newcommand*\id{\textup{\textsf{id}}}
\newcommand*\iPoms{\cat{iPoms}}
\newcommand*\down{\mathord{\downarrow}}
\newcommand*\dotsq{\ensuremath{\text{$\sq$\llap{$\cdot$\hspace*{.4ex}}}}}
\newcommand*\dotisq{\ensuremath{\text{$\isq$\llap{$\cdot$\hspace*{.4ex}}}}}
\newcommand*\ev{\textup{\textsf{ev}}}
\newcommand*\iev{\textup{\textsf{iev}}}
\newcommand*\mcal[1]{\mathcal{#1}}
\newcommand*\src{\textsf{src}}
\newcommand*\tgt{\textsf{tgt}}
\newcommand*\arrO[1]{\mathrel{\nearrow^{#1}}}
\newcommand*\arrI[1]{\mathrel{\searrow_{#1}}}
\newcommand*\ST{\textup{\textsf{ST}}}
\newcommand*\smalloset[1]{[\begin{smallmatrix}#1\end{smallmatrix}]}
\newcommand*{\bst}{\textup{bst}}
\newcommand*{\btt}{\textup{btt}}
\newcommand*\iST{\cat{iST}}
\newcommand*\Id{\textup{\textsf{Id}}}
\begin{document}

\title{Variants of Higher-Dimensional Automata}

\author{Hugo Bazille}
\affiliation{%
  \institution{EPITA Research Laboratory (LRE)}
  \city{Rennes}
  \country{France}}
\author{Jérémy Dubut}
\affiliation{%
  \institution{LIX, Ecole polytechnique}
  \city{Palaiseau}
  \country{France}}
\author{Uli Fahrenberg}
\affiliation{%
  \institution{Formal Methods Laboratory \& Paris-Saclay University}
  \city{Gif-sur-Yvette}
  \country{France}}
\author{Krzysztof Ziemia\'nski}
\affiliation{%
  \institution{University of Warsaw}
  \city{Warsaw}
  \country{Poland}}

\begin{abstract}
The theory of higher-dimensional automata (HDAs) has seen rapid progress in recent years, and first applications, notably to Petri net analysis, are starting to show.  It has, however, emerged that HDAs themselves often are too strict a formalism to use and reason about.  In order to solve specific problems, weaker variants of HDAs have been introduced, such as HDAs with interfaces, partial HDAs, ST-automata or even relational HDAs.

In this paper we collect definitions of these and a few other variants into a coherent whole and explore their properties and translations between them.  We show that with regard to languages, the spectrum of variants collapses into two classes, languages closed under subsumption and those that are not.  We also show that partial HDAs admit a Kleene theorem and that, contrary to HDAs, they are determinizable.
\end{abstract}

\keywords{%
  higher-dimensional automaton,
  precubical set,
  concurrency theory,
  determinism,
  Kleene theorem}

\maketitle

\section{Introduction}
\label{s:intro}

The theory of higher-dimensional automata (HDAs) has seen rapid progress in recent years \cite{%
  DBLP:journals/mscs/FahrenbergJSZ21,
  DBLP:journals/lmcs/FahrenbergJSZ24,
  DBLP:journals/fuin/FahrenbergZ24,
  journals/tcs/AmraneBZ25,
  DBLP:conf/fscd/PassemardAF25},
and first applications, notably to Petri net analysis, are starting to show \cite{%
  DBLP:conf/apn/AmraneBFHS25,
  conf/msr/AmraneBFS25}.
In these and other works, a common theme is that HDAs themselves,
as introduced in \cite{%
  Glabbeek91-hda,
  DBLP:journals/tcs/Glabbeek06}
and taken back up in \cite{DBLP:journals/mscs/FahrenbergJSZ21},
often are too strict a formalism to use and reason about:

In \cite{DBLP:journals/lmcs/FahrenbergJSZ24}, the authors introduce HDAs with \emph{interfaces} (iHDAs)
in order to be able to properly glue HDAs (necessary for a Kleene theorem).
These are taken up again in \cite{DBLP:journals/fuin/FahrenbergZ24}
so as to have a nicer Nerode-type construction turning regular languages into iHDAs.

In \cite{DBLP:conf/apn/AmraneBFHS25}, the authors revise and extend the translation of Petri nets into HDAs from \cite{DBLP:journals/tcs/Glabbeek06}.
When Petri nets are extended with inhibitor arcs, the translation gives \emph{partial} HDAs (pHDAs) where some faces may be missing.
These had been introduced earlier \cite{DBLP:conf/fossacs/Dubut19} in order to properly represent unfoldings and trees.

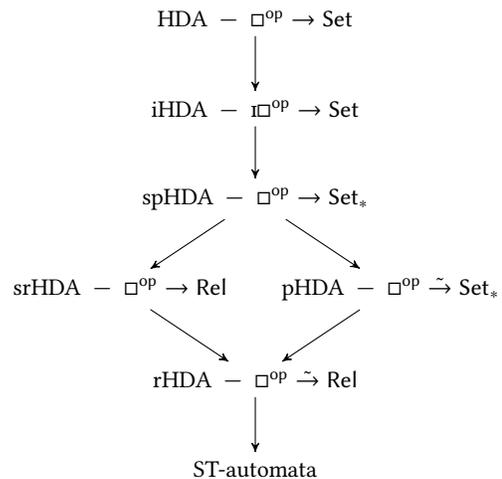
\begin{figure}[bp]
  \centering
  \newcommand*\sepa{\;---\;\ }
  \begin{tikzpicture}[x=1.8cm, y=1.2cm]
    \node (HDA) at (0,0) {HDA \sepa $\sq^\op\to \Set$};
    \node (iHDA) at (0,-1) {iHDA \sepa $\isq^\op\to \Set$};
    \node (spHDA) at (0,-2) {spHDA \sepa $\sq^\op\to \Set_*$};
    \node (srHDA) at (-1,-3) {\vphantom{p}srHDA \sepa $\sq^\op\to \Rel$};
    \node (pHDA) at (1,-3) {pHDA \sepa $\sq^\op\laxto \Set_*$};
    \node (rHDA) at (0,-4) {rHDA \sepa $\sq^\op\laxto \Rel$};
    \node (STA) at (0,-5) {ST-automata}; 
    \path (HDA) edge (iHDA);
    \path (iHDA) edge (spHDA);
    \path (spHDA) edge (srHDA) edge (pHDA);
    \path (srHDA) edge (rHDA);
    \path (pHDA) edge (rHDA);
    \path (rHDA) edge (STA);
  \end{tikzpicture}
  \caption{%
    Variants of HDAs together with their presheaf formulations (lax presheaves denoted $\laxto$).}
  \label{fig:hdaschema}
\end{figure}

The theme of missing faces is also seen in \cite{DBLP:conf/fscd/PassemardAF25}
which introduces HDAs on infinite words and notices that the usual equivalence between Muller and Büchi acceptance
does not hold for HDAs but conjectures that this problem disappears when passing to pHDAs.
Also note that \cite{DBLP:conf/calco/FahrenbergL15} introduces a different notion of pHDAs,
incompatible with the one of \cite{DBLP:conf/fossacs/Dubut19} and superceded by the latter.

Another notion which has raised its head multiple times is that of \emph{ST-automata}.
These are a variant of finite automata, on an infinite alphabet of so-called starters and terminators,
and have been used in \cite{%
  DBLP:journals/lmcs/FahrenbergJSZ24,
  DBLP:journals/fuin/FahrenbergZ24,
  DBLP:conf/fscd/PassemardAF25,
  conf/ramics/AmraneBCFZ24}
and other works to give operational semantics to HDAs and facilitate proofs.

Finally, recent unpublished work \cite{conf/fossacs/ChamounM26}
shows that \emph{relational} presheaves, specifically relational HDA,
may have useful applications in modeling concurrent systems.

In this paper we collect definitions of all these and a few other variants into a coherent whole,
explore their properties and translations between them,
and prove a result on determinizability and a Kleene theorem.
Central to this treatment are formulations as \emph{presheaves},
over different categories and both lax and strict,
which allow for a systematic analysis.
Figure~\ref{fig:hdaschema} shows the different variants of HDAs we consider here,
together with their presheaf formulation and their relationship;
see the following sections for details.

Specifically, we show that with respect to languages,
the spectrum of HDA variants collapses into two classes:
HDAs and iHDAs whose languages are closed under subsumption,
and all others which define a richer class of languages.
We~also show that, contrary to HDAs and iHDAs \cite{DBLP:journals/fuin/FahrenbergZ24}, partial HDAs are determinizable.
Finally, and similarly to HDAs and iHDAs \cite{DBLP:journals/lmcs/FahrenbergJSZ24},
pHDAs admit a Kleene theorem, but with a much simpler proof than the one for HDAs in \cite{DBLP:journals/lmcs/FahrenbergJSZ24}.

\section{Pomsets with Interfaces}

Let $\Sigma$ be a finite alphabet.
A \emph{concurrency list}, or \emph{conclist},
$U=(U, {\evord_U}, \lambda_U)$
consists of a finite set $U$,
a strict total order ${\evord_U}\subseteq U\times U$ (the event order),\footnote{%
  A strict \emph{partial} order is a relation which is irreflexive and transitive;
  a strict \emph{total} order is a relation which is irreflexive, transitive, and total;
  an \emph{acyclic} relation is one whose transitive closure is a strict partial order.}
and a labelling $\lambda_U: U\to \Sigma$.
The set of conclists over $\Sigma$ is denoted $\sq=\sq(\Sigma)$.
An isomorphism of conclists is a label-preserving order bijection.

A \emph{conclist with interfaces}, or \emph{iconclist},
is a tuple $(S, U, T)$, often written $\ilo{S}{U}{T}$,
consisting of a conclist $U$ and two subsets $S, T\subseteq U$.
These are called \emph{interfaces}; $S$ is the starting interface and $T$ the terminating interface.
The set of iconclists over $\Sigma$ is denoted $\isq=\isq(\Sigma)$.
An isomorphism of iconclists is an order bijection which preserves labels and interfaces.

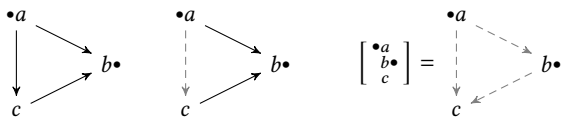
\begin{figure}[bp]
  \centering
  \begin{tikzpicture}[x=.9cm, y=.9cm]
    \begin{scope}[shift={(0,0)}]
      \node (a) at (0.4,0.7) {$\ibullet a$};
      \node (c) at (0.4,-0.7) {$c$};
      \node (b) at (1.8,0) {$b\ibullet$};
      \path (a) edge (b) (c) edge (b) (a) edge (c);
    \end{scope}
    \begin{scope}[shift={(2.5,0)}]
      \node (a) at (0.4,0.7) {$\ibullet a$};
      \node (c) at (0.4,-0.7) {$c$};
      \node (b) at (1.8,0) {$b\ibullet$};
      \path (a) edge (b) (c) edge (b);
      \path[densely dashed, gray]  (a) edge (c);
    \end{scope}
    \begin{scope}[shift={(6.5,0)}]
      \node (q) at (-0.5,0) {$ \loset {\ibullet a \pibullet \\ \pibullet b \ibullet\\ \pibullet c \pibullet}=$};
      \node (a) at (0.4,0.7) {$\ibullet a$};
      \node (c) at (0.4,-0.7) {$c$};
      \node (b) at (1.8,0) {$b\ibullet$};
      \path[densely dashed, gray] (a) edge (b) (b) edge (c) (a) edge (c);
    \end{scope}
  \end{tikzpicture}
  \caption{%
    Two ipomsets and an iconclist.
    Full arrows indicate precedence order,
    dashed arrows event order, and
    bullets, interfaces.}
  \label{fig:exmipoms}
\end{figure}

A \emph{pomset with interfaces}, or \emph{ipomset},
is a tuple $(P, {<_P}, {\evord_P},$ $S_P, T_P, \lambda_P)$
consisting of a finite set $P$,
relations ${<_P}, {\evord_P}\subseteq P\times P$
such that $<_P$ is a strict partial order and $\evord_P$ is acyclic,
a labelling function $\lambda_P : P \to \Sigma$,
and subsets $S_P, T_P \subseteq P$, called \emph{source} and \emph{target interfaces},
such that elements of $S_P$ are $<_P$-minimal and those of $T_P$ are $<_P$-maximal.
We further require that for every $x, y\in P$, exactly one of
$x=y$, $x<y$, $y<x$, $x\evord y$, and $y\evord x$ holds.
When no ambiguity may arise, we drop the subscripts to ease notation.

The intuition is that $<$ denotes \emph{precedence} of events,
whereas $\evord$ orders events $x$, $y$ which are concurrent, \ie for which neither $x<y$ nor $y<x$.
Figure~\ref{fig:exmipoms} shows some examples.
We denote interfaces using bullets and often omit the event order which by default goes downward.

An \emph{isomorphism} of ipomsets $P$ and $Q$
is a bijection $f: P\to Q$ for which
$f(S_P)=S_Q$, $f(T_P)=T_Q$, $\lambda_Q\circ f=\lambda_P$,
$f(x)<_Q f(y)$ iff $x<_P y$, and
$f(x)\evord_Q f(y)$ iff $x\evord_P y$.
$P$ and $Q$ being isomorphic is denoted $P\simeq Q$.
Isomorphisms between ipomsets are unique, so we may switch freely between ipomsets and their equivalence classes.

We recall the definition of the gluing operation of ipomsets.
Let $P$ and $Q$ be two ipomsets with $T_P\simeq S_Q$.
The \emph{gluing} of $P$ and $Q$ is defined
as $P*Q=(R, {<}, {\evord}, S, T, \lambda)$ where:
\begin{enumerate}
\item
  $R=\mathrm{colim}(P\hookleftarrow T_P\cong S_Q\hookrightarrow Q)$,
  the quotient of $P\sqcup Q$ by the identification  $T_P\cong S_Q$, and $i:P\to R$ and $j:Q\to R$ are the canonical maps;
 \item ${<}=
  \{(i(x), i(y))\mid x<_P y\}
  \cup \{(j(x), j(y))\mid x<_Q y\}
  \cup \{(i(x), j(y))\mid x\in P\setminus T_P, y\in Q\setminus S_Q\}$;
\item ${\evord}=\{(i(x), i(y)\mid x\evord_P y\}\cup \{(j(x), j(y)\mid x\evord_Q y\}$;
\item $S=i(S_P)$, $T=j(T_Q)$, $\lambda(i(x))=\lambda_P(x)$, and $\lambda(j(x))=\lambda_Q(x)$.
\end{enumerate}
Hence the precedence order in $P*Q$ puts events in $P$ before events in $Q$,
except for those which are in the terminating interface of $P$
(which is also the starting interface of $Q$)
for which no new order is introduced.
It is clear that gluing respects isomorphisms.

An ipomset $P$ is \emph{discrete} if $<_P$ is empty, hence $\evord_P$ is total.
Discrete ipomsets are precisely iconclists.
A \emph{starter} is an iconclist $\starter{U}{A}=\ilo{U\setminus A}{U}{U}$,
and a \emph{terminator} is $\terminator{U}{A}=\ilo{U}{U}{U\setminus A}$;
these are \emph{proper} if $A\ne \emptyset$.
The \emph{identity iconclists} are $\id_U=\ilo UUU$.
A \emph{step decomposition} of an ipomset $P$ is a presentation
$P=P_1*\dotsc* P_n$
as a gluing of starters and terminators.
The step decomposition is
\emph{sparse} if it is an alternating sequence of proper starters and terminators, or it is a single identity.

\begin{figure}[tp]
  \begin{tikzpicture}[y=.9cm]
  \begin{scope}
    \node (a) at (0,0) {$\ibullet a$};
    \node (b) at (0,-1) {$\pibullet b$};
    \path (a) edge[densely dashed, gray] (b);
    \node (c) at (1.5,0) {$c \pibullet $};
    \node (d) at (1.5,-1) {$d \ibullet$};
    \path (c) edge[densely dashed, gray] (d);
    \path (c) edge[densely dashed, gray] (b);
    \path (a) edge (c);
    \path (a) edge (d);
    \path (b) edge (d);
  \end{scope}
  \begin{scope}[xshift=5cm]
    \node () at (0,-0.5) {$\loset {\ibullet a \ibullet \\ \pibullet b \ibullet}*
                           \loset {\ibullet a \pibullet \\ \ibullet b \ibullet}*
                           \loset {\pibullet c \ibullet \\ \ibullet b \ibullet}*
                           \loset {\ibullet c \ibullet \\ \ibullet b \pibullet}*
                           \loset {\ibullet c \ibullet \\ \pibullet d \ibullet}*
                           \loset {\ibullet c \pibullet \\ \ibullet d \ibullet}$};
  \end{scope}
  \end{tikzpicture}
  \caption{Ipomset (left) and its sparse step decomposition.}
  \label{fig:ipomsetdecomposition}
\end{figure}
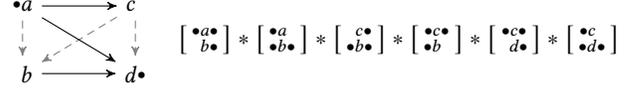

An ipomset $P$ admits a step decomposition if and only if $<_P$ 
is an \emph{interval order}~\cite{book/Fishburn85},
and in that case it admits a \emph{unique} sparse step decomposition \cite[Prop.~3.5]{DBLP:journals/fuin/FahrenbergZ24}.
All ipomsets in this paper will be interval, and the set of them is denoted $\iPoms$.
Figure~\ref{fig:ipomsetdecomposition} presents an example.

An ipomset may be \emph{refined} by adding precedence order (or equivalently removing concurrency).
The opposite operation (adding concurrency) is called \emph{subsumption}.
More formally, a \emph{subsumption} of an ipomset $P$ by $Q$
is a bijection $f: P\to Q$ for which
$f(S_P)=S_Q$, $f(T_P)=T_Q$, $\lambda_Q\circ f=\lambda_P$;
$f(x)<_Q f(y)$ implies $x<_P y$; and
$x\evord_P y$ implies $f(x)\evord_Q f(y)$.
We denote this subsumption $P \sqsubseteq Q$.
The subsumption closure of a set of ipomsets $L$ is $L\down = \{P | \exists Q \in L, P \sqsubseteq Q\}$.
A set of ipomsets $L$ is \emph{subsumption closed} if $L = L\down$.

\section{Variants of HDAs}

\subsection{The Category $\sq$}

Most variants of HDAs to be defined below are different kinds of presheaves on a common base category $\sq$,
so we take some care to introduce this one first.

A \emph{conclist inclusion} is a function $d_{A, B}: U\to V$,
for $U$ and $V$ conclists and $A, B\subseteq V$ such that
$A\cap B=\emptyset$ and $V\setminus U=A\cup B$, which is the identity on $U$. 
That is, $U$ is extended to $V$ by disjoint sets $A$ and $B$.
The composition of the conclist inclusions $d_{A, B}: U\to V$ and $d_{C, D}: V\to W$ is $d_{A\cup C, B\cup D}$.
Let $\dotsq$ be the category with objects $\sq$ and 
morphisms conclist inclusions $U\to V$.

Now let $\sq$ be the category generated by $\dotsq$ together with conclist isomorphisms on each object.
(Note the overloading of $\sq$ which will not pose problems below.)
Morphisms in $\sq(U, V)$ may be seen as injections $U\hookrightarrow V$ which preserve order and labelling
(called lo-maps in \cite{DBLP:journals/lmcs/FahrenbergJSZ24})
together with a presentation $V\setminus U=A\cup B$ (with $A\cap B=\emptyset$).
The intuition is that the events of $U$ are included in $V$,
while $A$ and $B$ are events in $V$ which have not yet started in $U$, respectively have terminated in $U$.

The category $\sq$ does not have any non-trivial automorphisms, indeed $\sq(U, U)=\{\id_U=d_{\emptyset,\emptyset}\}$ for every object $U$.
Hence $\sq$ is uniquely isomorphic to its quotient under conclist isomorphisms.
We will also denote that quotient by $\sq$ and switch freely between the two versions.
The objects of the quotient are conclists without event identity: only the labelling and the order of events matters.

\subsection{HDAs and HDAs with Interfaces}

A \emph{precubical set} is a presheaf on $\sq$, that is, a functor $X: \sq^\op\to \Set$.
For $x\in X[U]$ we write $\ev(x)=U$, defining a function $\ev: X\to \sq$.
We write $\delta_{A, B}=X[d_{A, B}]$ and call these \emph{face maps};
omitting $U$ from the notation will not cause trouble.

In elementary terms, that means that a precubical set is a set $X$ (of \emph{cells})
together with a function $\ev: X\to \sq$
and face maps $\delta_{A, B}$ which satisfy the \emph{precubical identity}
\begin{equation}
  \label{eq:pcid}
  \delta_{C, D} \delta_{A, B}=\delta_{A\cup C, B\cup D}.
\end{equation}
In a face $y=\delta_{A, B}(x)$, the events in $A\subseteq \ev(x)$ have been \emph{unstarted}
and the events in $B\subseteq \ev(x)$ have been \emph{terminated}.
We write $\delta_A^0=\delta_{A, \emptyset}$ and $\delta_B^1=\delta_{\emptyset, B}$ for lower and upper faces.

\begin{definition}
  \quad
  A \emph{higher-dimensional automaton} (HDA) $(X, \bot, \top)$ consists of a precubical set $X$
  together with subsets $\bot, \top$ of \emph{initial} and \emph{accepting} cells.
\end{definition}

\begin{figure}[bp]
  \begin{tikzpicture}[y=.9cm]
    \node (ss1) at (-3,-2) {$\ibullet a\pibullet$};
    \node (ss2) at (-3,-3) {$\ibullet b\ibullet$};
    \path (ss1) edge[densely dashed, gray] (ss2);
    \node (t1) at (0,-1) {$\ibullet a\pibullet$};
    \node (t2) at (0,-2) {$\pibullet c\ibullet$};
    \node[right] at (t2) {$\;\;A$};
    \node (t3) at (0,-3) {$\ibullet b\ibullet$};
    \node (t4) at (0,-4) {$\ibullet e\pibullet$};
    \node[right] at (t4) {$\;\;B$};
    \path (t1) edge[densely dashed, gray] (t2);
    \path (t2) edge[densely dashed, gray] (t3);
    \path (t3) edge[densely dashed, gray] (t4);
    \path (ss1) edge (t1);
    \path (ss2) edge (t3);
  \end{tikzpicture}
  \caption{An iconclist map.
    Annotations $A$ and $B$
    indicate events that have not yet started resp.~terminated.
    Being in the target interface, $c\ibullet$ must be marked $A$ rather than $B$.}
  \label{fig:iconclistMap}
\end{figure}
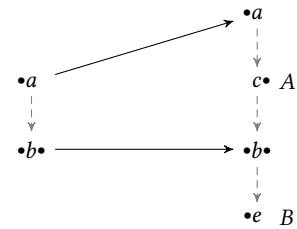

We now proceed to HDAs with interfaces which are essentially presheaves on a category of iconclists.
An \emph{iconclist inclusion} $d_{A, B}: (S_U, U, T_U)\to (S_V, V, T_V)$,
for $(S_U, U, T_U)$ and $(S_V, V, T_V)$ iconclists,
is a conclist inclusion $U\to V$ such that $A\cap S_V=B\cap T_V=\emptyset$, $S_U=S_V\setminus B$, and $T_U=T_V\setminus A$.
The intuition is that the interfaces mark events whose status may not be changed:
the $A$-events in $V$, not yet started in $U$, cannot be part of the starting interfaces $S_V$ neither $S_U$,
and the $U$-terminated $B$-events in $V$ cannot be part of the terminating interfaces.
See Fig.~\ref{fig:iconclistMap} for an example.

Let $\dotisq$ be the category with objects $\isq$,
morphisms iconclist inclusions,
and composition defined as for $\dotsq$,
and let $\isq$ be the category generated by $\dotisq$ together with iconclist isomorphisms on each object.
Again, $\isq$ is uniquely isomorphic to its quotient under iconclist isomorphisms,
and we will switch freely between $\isq$ and its quotient.

A \emph{precubical set with interfaces} is a presheaf on $\isq$, \ie a functor $X: \isq^\op\to \Set$.
We write $\iev(x)=\ilo{S}{U}{T}$ for $x\in X[\ilo{S}{U}{T}]$ and $\delta_{A, B}=X[\ilo{A}{U}{B}]$.

\begin{definition}
  A \emph{higher-dimensional automaton with interfaces} (iHDA) $(X, \bot, \top)$ consists of a precubical set with interfaces $X$
  together with subsets $\bot, \top$ of \emph{initial} and \emph{accepting} cells
  such that $S=U$ for all $x\in \bot$ and $T=U$ for all $x\in \top$ with $\iev(x)=\ilo{S}{U}{T}$.
\end{definition}

We will later see that iHDAs are strict partial HDAs in which the partiality is event-based,
in the sense that whether or not $\delta_{A, B}(x)$ is defined for a cell $x$ depends only on $\iev(x)$.

\subsection{Partial and Relational HDAs}

Let $\Set_*$ be the category of sets and partial functions (\ie pointed sets) and $\Rel$ the category of relations.
A \emph{partial precubical set} is a lax $\Set_*$-valued presheaf on $\sq$
(a lax functor $\sq^\op\laxto \Set_*$),
and a \emph{relational precubical set} is a lax $\Rel$-valued presheaf on $\sq$
(a lax functor $\sq^\op\laxto \Rel$).
As above, we denote lax functors using $\laxto$; see \cite{journals/tac/Niefield10} for definitions and context.

In elementary terms, a relational precubical set thus consists of a set $X$,
a function $\ev: X\to \sq$,
and face relations $\delta_{A, B}\subseteq X[U]\times X[U\setminus(A\cup B)]$ which satisfy the \emph{lax precubical identity}
\begin{equation}
  \label{eq:lpcid}
  \delta_{C, D} \delta_{A, B}\subseteq\delta_{A\cup C, B\cup D}
\end{equation}
(we write relational composition from right to left).
In a partial precubical set, the face relations are partial functions;
we will generally prefer to write relations and partial functions as multi-valued functions below.

We will also have occasion to use strict variants of the above presheaves,
\ie non-lax functors $\sq^\op\to \Set_*$ and $\sq^\op\to \Rel$;
in that case the identity \eqref{eq:lpcid} is replaced by \eqref{eq:pcid}.

\renewcommand*\thedefinition{\arabic{section}.\arabic{definition}(s)p/r}
\begin{definition}
  A (\emph{strict}) \emph{partial} / \emph{relational} \emph{higher-dim\-en\-sio\-nal automaton} ((s)p/rHDA) $(X, \bot, \top)$
  consists of a (strict) partial / relational precubical set $X$
  together with subsets $\bot, \top$ of \emph{initial} and \emph{accepting} cells.
\end{definition}
\renewcommand*\thedefinition{\arabic{section}.\arabic{definition}}

Note that the above definition introduces four different notions.
We detail these and
give an intuition about their differences with HDAs.
The category $\Set_*$
formalizes the fact that some face maps may be ``missing''.
It can be because the face does not exist, as depicted in Fig.~\ref{fig:diff_sphda_phda} below
where the two-dimensional square $x$ has only one lower face, as $\delta^0_{a}(x)$ is not defined.
It can also be because while the cell exists, the face map does not.
For example in Fig.~\ref{fig:phda_non_geometric}, $\delta^1_{\{a, b\}}(x)$, $\delta^1_{b}(x)$ and $\delta^1_{a}(x)$ are defined, but $\delta^1_{a}\delta^1_b(x)$ is not: if event $b$ terminates before $a$ does, then $a$ cannot terminate.
However, they can terminate simultaneously.

\begin{figure}
\centering
\begin{tikzpicture}[x=1cm, y=1cm]
    \path[fill=black!10!white] (0,0) -- (2,0) -- (2,2) -- (0,2) -- (0,0);
    \node () at (1,1) {$x$};
    \node[state, rectangle] (00) at (0,0) {};
    \node[] (phantom1) at (1.7,2) {};
    \node[state, rectangle] (01) at (0,2) {};
    \node[state, rectangle] (10) at (2,0) {};
    \node[state, rectangle] (11) at (2,2) {};
    \path[] (00) edge node[below]  {$a$} (10);
    \path[draw] (01) -- (phantom1);
    \path[draw] (10) -- (11);
    \path[] (00) edge node[left]  {$b$} (01);
  \end{tikzpicture}
  \caption{pHDA such that $\delta_a^1 (\delta_b^1(x))$ is not defined.}
  \label{fig:phda_non_geometric}
\end{figure}


In addition of being partial, face maps respect \eqref{eq:lpcid}.
Here the notation $\delta_{C, D} \delta_{A, B}\subseteq\delta_{A\cup C, B\cup D}$ means that
if $\delta_{A, B}$ and $\delta_{C, D}$ are defined,
then also $\delta_{A\cup C, B\cup D}$ is defined and equal to the composition $\delta_{C, D} \delta_{A, B}$;
but $\delta_{A\cup C, B\cup D}$
may be defined without one or both of the maps on the left-hand side being defined.

In spHDAs, face maps respect the stricter relation \eqref{eq:pcid}.
As a consequence, if a face is missing, then so are its own lower and upper faces.

In rHDAs, as there are face relations instead of face maps,
a cell may have zero, one, or several lower/upper faces for a given set of events.
This is illustrated in Fig.~\ref{fig:diff_srhda_rhda},
where the two-dimensional cell $x$ has two lower faces when $a$ and $b$ are unstarted,
with both cells depicted in the lower left corner.

In srHDAs, face relations again respect the stricter (\ref{eq:pcid}).
This is illustrated in Fig.~\ref{fig:diff_sphda_srhda},
where again $x$ has two lower faces when $a$ and $b$ are unstarted, but now the lower and left edges are connected to both corners.

\subsection{ST-Automata}

We now define ST-automata and some (new) generalizations which will be useful later.

\begin{definition}
  A \emph{P-automaton} is a tuple
  $\mcal{A}=(Q, E, s, t, \lambda, \mu,$ $\bot, \top)$,
  where $Q$ is a finite set of \emph{states},
  $E$ is a finite set of \emph{transitions},
  $s, t: E\to Q$ are the \emph{source} and \emph{target} functions,
  $\mu: Q\to \sq$ and $\lambda: E\to \iPoms$ are \emph{labellings},
  and $\bot, \top\subseteq Q$ are initial and accepting states.
  We require that $\mu(s(e))=S_{\lambda(e)}$ and $\mu(t(e))=T_{\lambda(e)}$ for every transition $e\in E$.
\end{definition}

Also P-automata are presheaves, but that observation is of little interest to us here and we will not use it.

\begin{definition}
  A P-automaton is 
  \begin{itemize}
  \item an \emph{ST-automaton} if $\lambda(e)$ is a starter or a terminator;
  \item a \emph{gST-automaton} if $\lambda(e)$ is discrete;
  \item \emph{proper} if $s(e)\not\in \top$ and $t(e)\not\in \bot$;
  \item \emph{without silent transitions} if $\lambda(e)$ is not an identity; 
  \end{itemize}
  for all $e\in E$.
\end{definition}

\subsection{Paths and Languages}

A \emph{path} in a P-automaton $\mcal{A}$ is defined as usual for automata: a sequence 
$\alpha=(q_0, e_1, q_1, \dotsc, e_n, q_n)$
such that $q_i\in Q$, $e_i\in E$ and $s(e_i)=q_{i-1}$, $t(e_i)=q_i$.
The path $\alpha$ is \emph{accepting} if $q_0=\src(\alpha)\in \bot$ and $q_n=\tgt(\alpha)\in\top$.
The ipomset recognized by $\alpha$ is $\ev(\alpha)=\lambda(e_1)*\dotsm*\lambda(e_n)$.

\begin{definition}
  The \emph{language} of a P-automaton $\mcal{A}$ is
  $L(\mcal{A})=\{\ev(\alpha)\mid \alpha \text{ accepting path in } \mcal{A}\}$.
\end{definition}

\begin{lemma}
  \label{le:nosilent}
  For every P-automaton there exists a P-automaton without silent transitions that recognizes the same language.
\end{lemma}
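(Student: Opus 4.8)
The plan is to eliminate silent transitions (those labelled by an identity iconclist $\id_U$) one at a time, or rather all at once, by a standard "short-cutting" construction familiar from $\epsilon$-transition elimination in ordinary automata, adapted to the fact that here gluing with an identity iconclist is a genuine identity for the $*$ operation. First I would observe that if $e$ is a silent transition with $s(e)=q$ and $t(e)=q'$, then because $\lambda(e)=\id_U$ we have $\mu(q)=S_{\id_U}=U=T_{\id_U}=\mu(q')$, so $q$ and $q'$ carry the same $\sq$-label; this is what makes it legitimate to merge them or to route transitions around $e$.

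The key steps, in order: (1) Define the silent reachability preorder on states: $q\rightsquigarrow q'$ if there is a path from $q$ to $q'$ using only silent transitions; note $q\rightsquigarrow q'$ forces $\mu(q)=\mu(q')$ by the remark above, and $\ev$ of such a path is $\id_{\mu(q)}$. (2) Build a new P-automaton $\mcal{A}'$ on the same state set $Q$ (same $\mu$, same $E$ minus the silent transitions) but with transitions redefined so that each non-silent transition $e:q_1\to q_2$ is replaced by all transitions $q_1'\to q_2$ with $q_1'\rightsquigarrow q_1$ — i.e. we pre-compose with silent runs — and then take the new accepting set $\top'=\{q\mid \exists q'\in\top,\ q\rightsquigarrow q'\}$ while keeping $\bot'=\bot$. (The source/target labelling condition $\mu(s(e))=S_{\lambda(e)}$, $\mu(t(e))=T_{\lambda(e)}$ is preserved because $\mu(q_1')=\mu(q_1)=S_{\lambda(e)}$.) (3) Check $L(\mcal{A}')=L(\mcal{A})$: given an accepting path in $\mcal{A}$, collapse each maximal block of silent transitions into the preorder relation and absorb it into the next non-silent transition (or into the accepting condition at the end), using that $\id_{\mu(q)}*\lambda(e)=\lambda(e)$ and $\lambda(e)*\id_{\mu(q)}=\lambda(e)$ for the recognized ipomset; conversely any accepting path in $\mcal{A}'$ lifts to one in $\mcal{A}$ by reinserting the silent runs, again without changing $\ev$. (4) Finitely many iterations are not even needed since the construction removes all silent transitions in one pass; $\mcal{A}'$ is finite because $Q$ and $E$ are.

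The main obstacle I anticipate is purely bookkeeping rather than conceptual: one must be careful that the new automaton still satisfies the structural constraint $\mu(s(e))=S_{\lambda(e)}$ and $\mu(t(e))=T_{\lambda(e)}$ after rerouting transitions, and that "absorbing silent runs into the accepting condition" is handled correctly when an accepting path ends with a silent tail (hence the definition of $\top'$ via $\rightsquigarrow$), and symmetrically when it begins with a silent prefix — one could instead pre-absorb into $\bot'$, but a cleaner choice is to absorb both into the transition set by allowing the first transition to be preceded by a silent run and the $\top'$-closure to handle the suffix; whichever convention is chosen, the verification of language equality in both directions is the place where a careless argument could go wrong. I would also note in passing that since an accepting path consisting of a single silent transition recognizes $\id_U$, the case of the empty/identity word must be tracked explicitly so it survives into $\mcal{A}'$ (it does, via $\bot\cap\top'\neq\emptyset$ whenever such a path exists).
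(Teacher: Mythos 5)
Your proposal is correct and is precisely the ``standard closure construction for finite automata'' that the paper invokes in its one-line proof, spelled out in detail: silent transitions are those labelled by identities, gluing with an identity is neutral for $*$, and the usual $\epsilon$-closure rerouting plus closure of the accepting set handles all cases including the all-silent path.
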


\begin{proof}
  The standard closure construction for finite automata applies.
\end{proof}

Using the translations defined in Sect.~\ref{se:rel}, the above definition introduces languages for all models defined in this paper.
For completeness, we also define paths and languages for rHDAs.
Let $(X, \bot, \top)$ be an rHDA.
An \emph{upstep} in $X$ is $x\arrO{A} y$ for $x, y\in X$ with $x\in \delta_A^0(y)$.
A \emph{downstep} is $y\arrI{B} z$ for $y, z\in X$ with $z\in \delta_B^1(y)$.

A \emph{path} in $X$ is a sequence
$\alpha=(x_0\arrO{A_1}e_1\arrI{B_1}x_1\arrO{A_2}\dotsm \arrI{B_n} x_n)$
of up- and downsteps;
the sequence does not need to be alternating and may start with a downstep and terminate with an upstep.
If the sequence \emph{is} alternating, \ie~every upstep is followed by a downstep and vice versa,
then the path is called \emph{sparse}.

Path $\alpha$ is \emph{accepting} if $x_0\in \bot$ and $x_n\in \top$.
Paths $\alpha$ and $\beta$ may be \emph{concatenated}
if the last cell of $\alpha$ is equal to the first cell of $\beta$,
and their concatenation is written $\alpha*\beta$.
Thus, paths themselves are simply concatenations of up-and downsteps.

The ipomset recognized by an upstep $x\arrO{A} y$ is the starter $\ev(x\arrO{A} y)=\starter{\ev(y)}{A}$,
the one recognized by a downstep is $\ev(y\arrI{B} z)=\terminator{\ev(y)}{B}$.
The ipomset $\ev(\alpha)$ recognized by a path $\alpha$ is the concatenation of the ones of the steps,
and the \emph{language} of $X$ is $L(X)=\{\ev(\alpha)\mid \alpha \text{ accepting path in } X\}$.

\begin{lemma}
  \label{le:sparse_existence}
  For every path $\alpha$ in an (s)(p/r)HDA $X$ there exists a sparse path $\beta$
  such that $\src(\beta)=\src(\alpha)$, $\tgt(\beta)=\tgt(\alpha)$ and $\ev(\beta)=\ev(\alpha)$.
\end{lemma}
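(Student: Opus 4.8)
The plan is to prove Lemma~\ref{le:sparse_existence} by giving a local rewriting procedure on paths that merges or reorders adjacent steps of the same type (two consecutive upsteps, or two consecutive downsteps) without changing the source cell, target cell, or recognized ipomset, and then to argue that this procedure terminates in a sparse path.

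First I would observe that a path $\alpha$ fails to be sparse exactly when it contains two consecutive upsteps $x \arrO{A} e \arrO{A'} y$ or two consecutive downsteps. Consider the upstep case: we have $e \in \delta^0_{A'}(y)$ and $x \in \delta^0_A(e)$, hence by the lax precubical identity~\eqref{eq:lpcid} (and, for the strict variants, by~\eqref{eq:pcid}) we get $x \in \delta^0_{A \cup A'}(y)$, so $x \arrO{A \cup A'} y$ is a single upstep. One checks on event orders that $A$ and $A'$ are disjoint (they are disjoint subsets of $\ev(y)$, since $\ev(e) = \ev(y) \setminus A'$ and $A \subseteq \ev(e)$), so this replacement is well-defined. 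The recognized ipomset is preserved because $\starter{\ev(y)}{A'} * \starter{\ev(e)}{A} = \starter{\ev(y)}{A \cup A'}$ — this is the standard fact that gluing two starters on overlapping event sets yields a single starter, which also holds when $A$ or $A'$ is empty (an identity step). The downstep case is symmetric, using terminators in place of starters. Replacing $x \arrO{A} e \arrO{A'} y$ by $x \arrO{A \cup A'} y$ strictly decreases the number of steps, and likewise for downsteps, so the rewriting cannot go on forever.

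The remaining subtlety is the boundary behaviour: the resulting path must still begin and end "correctly" — recall a path may start with a downstep and end with an upstep. Collapsing two consecutive upsteps at the very start, or two consecutive downsteps at the very end, is handled by the same rule and poses no problem. After exhaustively applying both rules we obtain a path in which no two consecutive steps have the same type, i.e.\ a sparse path $\beta$; since each rewriting step preserves the first cell, the last cell, and the recognized ipomset, we have $\src(\beta) = \src(\alpha)$, $\tgt(\beta) = \tgt(\alpha)$, and $\ev(\beta) = \ev(\alpha)$, as required.

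I expect the main obstacle to be purely bookkeeping rather than conceptual: verifying carefully that the starter/terminator gluing identities $\starter{\ev(y)}{A'} * \starter{\ev(e)}{A} = \starter{\ev(y)}{A\cup A'}$ behave correctly with respect to the event order (which is inherited unchanged from $\ev(y)$ in all three ipomsets), and treating uniformly the degenerate cases where one of the two steps is a silent identity step. None of this requires new ideas — one could even first invoke Lemma~\ref{le:nosilent} to eliminate identity transitions, though working directly with possibly-identity steps is just as easy. The inclusion~\eqref{eq:lpcid} is exactly what is needed for the lax (p/r) cases, and it suffices: we only ever use that if two faces compose then their union-face is defined and equals the composite, which is the "$\supseteq$"-usable direction of the lax identity.
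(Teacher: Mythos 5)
Your proof is correct and is exactly the paper's argument (the paper's entire proof reads: ``Using \eqref{eq:lpcid}, every pair of consecutive upsteps or downsteps can be merged into a single one''), just spelled out with the disjointness, label-preservation, and termination checks made explicit. The only slip is notational: for consecutive upsteps $x\arrO{A}e\arrO{A'}y$ the glued label should be written in path order as $\starter{\ev(e)}{A}*\starter{\ev(y)}{A'}=\starter{\ev(y)}{A\cup A'}$, not with the factors transposed as you have it.
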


\begin{proof}
  Using \eqref{eq:lpcid},
  every pair of consecutive upsteps or downsteps can be merged into a single one.
\end{proof}

\begin{lemma}
  \label{le:decomp}
  Let $P,Q\in\iPoms$, $T_P=S_Q$. Assume that the sparse step decomposition of $P*Q$ is a concatenation of sparse step decompositions of $P$ and $Q$. Then every path $\alpha$ such that $\ev(\alpha)=P*Q$ can be decomposed as $\alpha=\beta*\gamma$, where $\ev(\beta)=P$ and $\ev(\gamma)=Q$.
\end{lemma}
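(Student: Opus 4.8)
The plan is to reduce everything to a combinatorial statement about step decompositions of the interval ipomset $P*Q$. Writing $\alpha$ as a sequence of cells $c_0,\dots,c_N$ linked by up- and downsteps, let $U_i$ be the starter or terminator recognized by the $i$-th step, so that $U_1*\cdots*U_N$ is a step decomposition of $P*Q$. It is then enough to exhibit an index $k$ with $U_1*\cdots*U_k\simeq P$ and $U_{k+1}*\cdots*U_N\simeq Q$: for then $\beta=(c_0\to\cdots\to c_k)$ and $\gamma=(c_k\to\cdots\to c_N)$ are paths, $\alpha=\beta*\gamma$, and $\ev(\beta)=P$, $\ev(\gamma)=Q$. (Note that one may not simply replace $\alpha$ by a sparse path via Lemma~\ref{le:sparse_existence}, since the conclusion speaks about $\alpha$ itself.)

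To find $k$, I would invoke the standard fact behind \cite[Prop.~3.5]{DBLP:journals/fuin/FahrenbergZ24}: any step decomposition can be brought to the unique sparse one by repeatedly merging two adjacent starters, merging two adjacent terminators, and deleting identity factors, and none of these moves changes the glued ipomset. Applying this to $U_1*\cdots*U_N$ yields the sparse step decomposition $S_1*\cdots*S_m$ of $P*Q$ together with a monotone map $\phi$ sending each non-identity index $i$ to the index of the sparse factor into which $U_i$ is absorbed, with $U_i$ and $S_{\phi(i)}$ of the same type. The hypothesis says that for some $\ell$, $S_1*\cdots*S_\ell$ is the sparse decomposition of $P$ and $S_{\ell+1}*\cdots*S_m$ that of $Q$. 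Take $k$ to be the largest index with $U_k$ non-identity and $\phi(k)\le\ell$ (or $k=0$ if there is none). Monotonicity of $\phi$, together with the fact that each $S_j$ has a non-identity $\phi$-preimage, forces $\phi(k)=\ell$ and forces the first non-identity step after $k$ to map to $S_{\ell+1}$; since a sparse decomposition alternates, $S_\ell$ and $S_{\ell+1}$ have different types, so no merging move bridges the gap after $U_k$. Hence $U_1,\dots,U_k$ already sparsifies, on its own, to $S_1,\dots,S_\ell$ and $U_{k+1},\dots,U_N$ to $S_{\ell+1},\dots,S_m$, whence $U_1*\cdots*U_k\simeq P$ and $U_{k+1}*\cdots*U_N\simeq Q$. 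The boundary cases in which $P$, $Q$, or $P*Q$ is an identity are treated directly.

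The main obstacle is precisely the step just sketched: showing that no single step $U_i$ of $\alpha$ straddles the cut between $P$ and $Q$. This is exactly what the hypothesis buys — if the concatenated sparse decompositions of $P$ and $Q$ were not already sparse (say $P$ ends and $Q$ begins with a starter), then those two starters would merge into one sparse factor $S_j$, and a step $U_i$ absorbed into $S_j$ could carry events of both $P$ and $Q$, so that $\alpha$ would admit no decomposition of the desired form. A secondary technicality is the bookkeeping around identity (silent) steps of $\alpha$, which may land on either side of the cut; the choice of $k$ above is designed to absorb them, and alternatively one could first delete all identity steps from $\alpha$, locate the cut, and reinsert them afterwards.
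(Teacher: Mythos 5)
Your argument is correct: tracking how an arbitrary step decomposition $U_1*\cdots*U_N$ collapses onto the unique sparse decomposition, and using the hypothesis to ensure that the junction between $P$ and $Q$ falls between two sparse factors of \emph{different} types (so no single step of $\alpha$ can straddle the cut), is exactly the right mechanism, and your bookkeeping for interspersed identity steps is sound. The paper itself gives no details here --- it simply defers to \cite[Lemma~4.13]{DBLP:journals/fuin/FahrenbergZ24} and asserts that its proof extends --- so your proof is a self-contained rendering of essentially the same sparse-decomposition argument that the cited lemma relies on.
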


\begin{proof}
  The lemma is an extension of \cite[Lemma~4.13]{DBLP:journals/fuin/FahrenbergZ24}, as is its proof.
\end{proof}

\section{Example}


\begin{figure}
  \centering
  \begin{tikzpicture}[x=.8cm, y=1.1cm]
    \begin{scope}
      \node[place, label=above:$p_1$, tokens=1] (1) at (0,0) {};
      \node[place, label=above:$p_2$, tokens=2] (2) at (1.5,0) {};
      \node[place, label=above:$p_3$, tokens=1] (3) at (3,0) {};
      \node[place, label=left:$p_4$] (4) at (1.5,-2) {};
      \draw[-latex, double] (2) to[bend right=1.5cm] (4);
      \node[transition, label=left:$a$] (t124) at (0.75,-1) {} edge[pre] (1) ;
      \node[transition, label=below:$b$] (t32) at (2.25,-1) {} edge[pre] (3) edge[post] (2);
    \end{scope}
  \end{tikzpicture}
  \caption{Petri net with transfer arc.}
  \label{fig:petritrans}
\end{figure}

Before we proceed, we give an example which motivates some of the definitions and constructions.
This section assumes some knowledge of Petri nets and may safely be skipped.

The paper \cite{DBLP:conf/apn/AmraneBFHS25} defines translations from various extensions of Petri nets to HDAs
and notes that the latter have to be partial for this to make sense.
As an example, Fig.~\ref{fig:petritrans} shows a Petri net with a transfer arc
from $p_2$ to $p_4$, through the transition $a$.
Whenever transition $a$ is fired, \emph{all} tokens present in $p_2$ are transferred to $p_4$.

The HDA semantics of Petri nets, introduced in \cite{DBLP:journals/tcs/Glabbeek06}
and refined in \cite{DBLP:conf/apn/AmraneBFHS25},
intends to faithfully capture both reachability and concurrency.
Here, $a$ and $b$ may be fired concurrently, but also sequentially in any order.

In the HDA semantics, firings have durations.
When a transition starts firing, it takes the tokens in its preplaces
and keeps them until it finishes firing.
So in our example, if $a$ starts firing before $b$ terminates, then two tokens are transferred to $p_4$;
but if $b$ terminates before $a$ starts,
then there will be an extra token in $p_2$ which $a$ will transfer to~$p_4$.


\begin{figure}
\begin{tikzpicture}[x=1.8cm, y=1cm]
	\path[use as bounding box] (0,-.2) -- (2,-.2) -- (2,2.6) -- (0,2.6) -- (0,-.2);
	\path[fill=black!10!white] (0,0) -- (2,0) -- (2,2) -- (0,2) -- (0,0);
	\node[state, rectangle, initial] (00) at (0,0) {};
	\node[state, rectangle] (10) at (2,0) {}; 
	\node[state, rectangle] (01) at (0,2) {}; 
	\node[state, rectangle] (11) at (2,2) {}; 
	\node[state, rectangle] (02) at (2,2.6) {}; 
	\node[coordinate] (01a) at (.6,2) {};
        \node at (11.east) {$\;\;\;\;\;\;m_1$};
        \node at (02.east) {$\;\;\;\;\;\;m_2$};
	\path (00) edge node[swap] {$a$} (10);
	\path (00) edge node {$b$} (01);
	\path (10) edge node[swap] {$b$} (11);
	\path (01) edge node {$a$} (02);
	\path (01a) edge node[swap, pos=.5] {$a$} (11);
	\node at (1,1) {$\loset{a\\b}\quad x$};
\end{tikzpicture}

\caption{%
  pHDA semantics of the Petri net in Fig.~\ref{fig:petritrans}.}
\label{fig:broken_phda}
\end{figure}

Figure~\ref{fig:broken_phda} shows the partial-HDA semantics of our example net.
The cell marked $m_2$ corresponds to the marking with three tokens in $p_4$
which is reached when terminating $b$ before starting $a$;
all other paths reach the cell marked $m_1$ with two tokens in $p_4$.
The pHDA has no missing cells, but its geometry is ``broken'', in the sense that
$\delta_a^0(\delta_b^1(x))$ is undefined, but $\delta_b^1(\delta_a^0(x))$ exists.
It is not a strict pHDA.

\section{Relationship}
\label{se:rel}

We now detail the translations between the models in Fig.~\ref{fig:hdaschema}.

The inclusion of precubical sets into precubical sets with interfaces (adding empty interfaces)
is functorial and has a right adjoint, \emph{closure} which adds missing faces and in turn has a right adjoint, \emph{resolution};
see \cite{DBLP:journals/lmcs/FahrenbergJSZ24} for details.
Indeed, closure is the essential geometric morphism induced by the forgetful functor $\isq\to \sq$
\cite{DBLP:journals/apal/StruthZ26}.
Closure and resolution may be transferred to (i)HDAs and preserve languages \cite{DBLP:journals/lmcs/FahrenbergJSZ24}.

\begin{figure}
  \centering
  \begin{tikzpicture}[x=1cm, y=1cm]
    \begin{scope}
      \path[fill=black!10!white] (0,0) -- (2,0) -- (2,2) -- (0,2) -- (0,0);
      \node () at (1,1) {$x$};
      \node[state, rectangle] (00) at (0,0) {};
      \node[state, rectangle] (01) at (0,2) {};
      \node[state, rectangle] (10) at (2,0) {};
      \node[state, rectangle] (11) at (2,2) {};
      \path[draw] (00) -- (10);
      \path[draw] (01) edge node[above]  {$a$} (11);
      \path[draw] (00) -- (01);
      \path[draw] (10) edge node[right]  {$b$}  (11);
      \node[below] at (1,0) {$\bot$};
      \node[above right] at (2,2) {$\top$};
    \end{scope}
    \begin{scope}[shift={(4,0)}]
      \path[fill=black!10!white] (0,0) -- (2,0) -- (2,2) -- (0,2) -- (0,0);
      \node () at (1,1) {$x$};
      \node[] (00) at (0,0) {};
      \node[] (01) at (0,2) {};
      \node[state, rectangle] (10) at (2,0) {};
      \node[state, rectangle] (11) at (2,2) {};
      \path[draw] (00) -- (10);
      \path[draw] (01) edge node[above]  {$a$} (11);
      \path[draw] (10) edge node[right]  {$b$}  (11);
      \node[below] at (1,0) {$\bot$};
      \node[above right] at (2,2) {$\top$};
    \end{scope}
  \end{tikzpicture}
  \caption{HDA and corresponding iHDA.}
  \label{fig:hda-to-ihda}
\end{figure}


When including HDAs into iHDAs, some cells may need to be removed due to the condition
that start events may not be unstarted and accept events may not be terminated.
This removal does not change the language recognized, see Fig.~\ref{fig:hda-to-ihda} for an example.

For the translation from iHDA to spHDA, some care is required.
It involves transforming a presheaf over one category into a 
presheaf over another category, via a form of pushforward.
A convenient framework to describe this process is
when there is a 
functor $F:C\to\,D$ between small categories such that for every object $c$ of $C$ 
and every morphism $f:F(c)\to d'$ there is at most one morphism $g:c\to c'$ such that $F(g)=f$,
that is, $F$ is almost a discrete op-fibration. 
Under this condition, there is a generic way to turn a functor
$G:C\to\Set_*$ into a functor $H:D\to\Set_*$, 
in a manner reminiscent of left Kan extensions:
\begin{equation*}
  H[d]=\coprod_{c \mid F(c)=d} G[c]
\end{equation*}
for object $d$ of $D$, and for every morphism $f:d\to d'$,
\[
  H[f](x)=
  \begin{cases}
    G[g](x) & \text{if $x\in G[c]$ and $\exists\,g$ with $F[g]=f$}, \\
    * & \text{otherwise}.
  \end{cases}
\]

In the case of the forgetful functor $\isq^\op\to\sq^\op$,
this gives the construction $\text{iHDA}\to\text{spHDA}$ which translates as follows. 
For an iHDA $Y$ we define $X=I(Y)$ by
\begin{equation*}
  X[U]=\coprod_{S, T\subseteq U} Y[\ilo SUT]
\end{equation*}
with
\[
  \delta_{A,B}(y)=
  \begin{cases}
    \delta_{A,B}(y) & \text{if $A\cap S=\emptyset=B\cap T$},\\
    * & \text{otherwise}.
  \end{cases}
\]
for $y\in Y[\ilo SUT]\subseteq X[U]$ and $A,B\subseteq U$, $A\cap B=\emptyset$.
Intuitively, if a face map is ``missing'' in the $\text{iHDA}$, we define all its values in $X$ as undefined.

The inclusions $\text{spHDA}\hookrightarrow \text{srHDA}$
and $\text{pHDA}\hookrightarrow \text{rHDA}$ are induced by the forgetful functor $\Set_*\to \Rel$, and
the inclusions $\text{spHDA}\hookrightarrow \text{pHDA}$ and $\text{spHDA}\hookrightarrow \text{srHDA}$
come from the fact that every strict functor is lax.

Finally, the translation $\text{rHDA}\to \text{STA}$
was introduced for HDAs in \cite{conf/ramics/AmraneBCFZ24}; we repeat it here and extend it to rHDAs.
Let $(X, \bot, \top)$ be an rHDA,
then its \emph{operational semantics} is given by the ST-automaton
$\ST(X)=(X, E, s, t, \lambda, \ev, \bot, \top)$ with
\begin{gather*}
  \begin{aligned}
    E={} &\{(p, \starter{\ev(q)}{A}, q)\mid A\subseteq \ev(q), p\in \delta_A^0(q)\} \\
    {}\cup{} &\{(q, \terminator{\ev(q)}{B}, r)\mid B\subseteq \ev(q), r\in \delta_B^1(q)\},
  \end{aligned} \\
  s((p, U, q))=p, \quad t((p, U, q))=q, \quad
  \lambda((p, U, q))=U.
\end{gather*}
That is, the transitions of $\ST(X)$ precisely mimic the starting and terminating of events in~$X$.
Consequently, there is a one-to-one correspondence between 
paths in $X$ and paths in $\ST(X)$ that preserves 
labelling and acceptance.


\begin{figure}
  \centering
  \begin{tikzpicture}[x=1cm, y=1cm]
    \path[fill=black!10!white] (0,0) -- (2,0) -- (2,2) -- (0,2) -- (0,0);
    \node () at (1,1) {$x$};
    \node[] (00) at (0,0) {};
    \node[] (01) at (0,2) {};
    \node[state, rectangle] (10) at (2,0) {};
    \node[state, rectangle] (11) at (2,2) {};
    \path[draw] (00) edge node[below]  {$a$} (10);
    \path[draw] (01) -- (11);
    \path[draw] (10) edge node[right]  {$b$}  (11);
  \end{tikzpicture}
  \caption{iHDA, but not HDA.}
  \label{fig:diff_hda_ihda}
\end{figure}

Next we give some examples to separate the classes.
Figure~\ref{fig:diff_hda_ihda} shows an iHDA $X$ with $X[\smalloset{\pibullet a \\ \ibullet b}]=\{x\}$.
(In all examples, $\bot=\top=\emptyset$.)
Given that $b$ cannot be unstarted, $X$ misses the faces $\delta_a^0(x)$, $\delta_b^0 \delta_a^0(x)$ and $\delta_b^1 \delta_a^0(x)$,
so $X$ is not an HDA.


\begin{figure}
\centering
\begin{tikzpicture}[x=1cm, y=1cm]
    \path[fill=black!10!white] (0,0) -- (2,0) -- (2,2) -- (0,2) -- (0,0);
    \node () at (1,1) {$x$};
    \node[] (00) at (0,0) {};
    \node[state, rectangle] (01) at (0,2) {};
    \node[state, rectangle] (10) at (2,0) {};
    \node[state, rectangle] (11) at (2,2) {};
    \path[draw] (00) edge node[below] {$a$} (10);
    \path[draw] (01) -- (11);
    \path[draw] (10) edge node[right] {$b$} (11);
    \path[draw] (00) -- (01);
  \end{tikzpicture}
  \caption{spHDA, but not iHDA.}
  \label{fig:diff_ihda_sphda}
\end{figure}

\begin{figure}
\centering
\begin{tikzpicture}[x=1cm, y=1cm]
    \path[fill=black!10!white] (0,0) -- (2,0) -- (2,2) -- (0,2) -- (0,0);
    \node () at (1,1) {$x$};
    \node[state, rectangle] (00) at (0,0) {};
    \node[state, rectangle] (01) at (0,2) {};
    \node[state, rectangle] (10) at (2,0) {};
    \node[state, rectangle] (11) at (2,2) {};
    \path[draw] (00) edge node[below]  {$a$} (10);
    \path[draw] (01) -- (11);
    \path[draw] (10) edge node[right]  {$b$} (11);
  \end{tikzpicture}
  \caption{pHDA, but not spHDA.}
  \label{fig:diff_sphda_phda}
\end{figure}


In Fig.~\ref{fig:diff_ihda_sphda} we give an spHDA $X$ with $X[\loset{a \\ b}]=\{x\}$
in which $\delta_{a, b}^0(x)=\delta_b^0(\delta_a^0(x))=\delta_a^0(\delta_b^0(x))=\emptyset$.
This is not an iHDA as the event $a$ cannot be unstarted only before $b$ has started.
Figure~\ref{fig:diff_sphda_phda} displays the same $\loset{a \\ b}$-square,
but now $\delta_a^0(x)=\emptyset$.
As $\delta_{a, b}^0(x)$ is defined but $\delta_b^0(\delta_a^0(x))$ is not, this is a pHDA but not an spHDA.


\begin{figure}
\centering
\begin{tikzpicture}[x=1cm, y=1cm]
    \path[fill=black!10!white] (0,0) -- (2,0) -- (2,2) -- (0,2) -- (0,0);
    \node () at (1,1) {$x$};
    \node[state, rectangle] (00) at (0,0) {\tiny $u$};
    \node[state, rectangle, densely dashed] (00bis) at (.2,.2) {\tiny $v$};
    \node[] (phantom1) at (1.5,0) {};
    \node[] (phantom2) at (0,1.5) {};
    \node[state, rectangle] (01) at (0,2) {};
    \node[state, rectangle] (10) at (2,0) {};
    \node[state, rectangle] (11) at (2,2) {};
    \path[draw] (00) edge node[below]  {$a$} (10);
    \path[draw] (01) -- (11);
    \path[draw] (10) -- (11);
    \path[draw] (00) edge node[left]  {$b$} (01);
    \path[] (00bis.0) edge [-, dashed, bend right=10] (phantom1.west);
    \path[] (00bis.north) edge [-, dashed, bend left=10] (phantom2.south);
  \end{tikzpicture}
  \caption{srHDA, but not spHDA.}
  \label{fig:diff_sphda_srhda}
\end{figure}


\begin{figure}
\centering
\begin{tikzpicture}[x=1cm, y=1cm]
    \path[fill=black!10!white] (0,0) -- (2,0) -- (2,2) -- (0,2) -- (0,0);
    \node () at (1,1) {$x$};
    \node[state, rectangle] (00) at (0,0.15) {\tiny $u$};
    \node[state, rectangle] (00bis) at (0,-0.15) {\tiny $v$};
    \node[] (00ter) at (0,0) {};
    \node[state, rectangle] (01) at (0,2) {};
    \node[state, rectangle] (10) at (2,0) {};
    \node[state, rectangle] (11) at (2,2) {};
    \path[] (00ter) edge node[below] {$a$} (10);
    \path[draw] (01) -- (11);
    \path[draw] (10) edge node[right] {$b$} (11);
  \end{tikzpicture}
  \caption{rHDA, but not srHDA.}
  \label{fig:diff_srhda_rhda}
\end{figure}

Figure~\ref{fig:diff_sphda_srhda} shows an srHDA
with $\delta_{a, b}^0(x)=\delta_b^0(\delta_a^0(x))=\delta_a^0(\delta_b^0(x))=\{u,v\}$
which is not an spHDA,
and Fig.~\ref{fig:diff_srhda_rhda} displays an rHDA
with $\delta_{a, b}^0(x)=\delta_b^0(\delta_a^0(x))=\{u,v\}$ but $\delta_a^0(x)=\emptyset$
which is neither an srHDA nor a pHDA.

\begin{figure}
\centering
\begin{tikzpicture}[x=1cm, y=1cm]
    \node[state,circle] (00) at (0,0) {$\varepsilon$};
    \node[state,circle] (10) at (2,0) {$a$};
    \node[state,circle] (11) at (2,2) {$\loset{a \\ b}$};
    \path[draw] (00) edge node[below]  {$a\ibullet$} (10);
    \path[draw] (10) edge node[right]  {\tiny $\loset{\ibullet a\ibullet \\ b\ibullet}$} (11);
  \end{tikzpicture}
  \caption{ST-automaton, but not rHDA.}
  \label{fig:diff_rhda_sta}
\end{figure}


Finally, Fig.~\ref{fig:diff_rhda_sta} shows an ST-automaton
with $Q[\emptyset]=\{p\}$, $Q[a]=\{q\}$, $Q[\loset{a\\b}]=\{r\}$
and no other states and transitions $p\to q\to r$.
The composite transition $p\to r$ is missing,
violating \eqref{eq:lpcid}, so this is not the translation of an rHDA.
Strictness of the inclusions
$\text{STA}\hookrightarrow \text{gSTA}\hookrightarrow \text{PA}$ follows straight from the definition.

\section{Language Equivalence}

We show that for every P-automaton there exists a strict partial HDA that recognizes the same language.
We do so by introducing a notion of \emph{reduced} gST-automaton which essentially corresponds to a pHDA
and showing that these may in fact be translated to strict pHDA.
Using Lemma \ref{le:nosilent} we may assume that there are no silent transitions.

First, an easy lemma that P-automata may be turned into ST-automata.

\begin{lemma}
  \label{l:PtoST}
  For every P-automaton $\mcal{A}$ there exists a language equivalent ST-automaton.
\end{lemma}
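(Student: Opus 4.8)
The plan is to take an arbitrary P-automaton $\mcal{A}=(Q,E,s,t,\lambda,\mu,\bot,\top)$ and replace each transition $e$ by a little gadget: a path whose consecutive labels form the \emph{sparse step decomposition} of $\lambda(e)$. Since every ipomset in this paper is interval (stated just after Fig.~\ref{fig:ipomsetdecomposition}), each $\lambda(e)$ has a unique sparse step decomposition $\lambda(e)=P_1^e*\dotsm*P_{k_e}^e$ into proper starters and terminators (or a single identity). For each $e$ I introduce fresh intermediate states $q_1^e,\dotsc,q_{k_e-1}^e$, labelled by the appropriate conclists coming from the decomposition, and fresh transitions $s(e)\to q_1^e\to\dotsm\to q_{k_e-1}^e\to t(e)$ carrying the labels $P_1^e,\dotsc,P_{k_e}^e$. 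The new automaton $\mcal{A}'$ keeps $Q$ (with labelling $\mu$) plus all the fresh states, keeps $\bot,\top$, and its transitions are exactly these gadget transitions. By construction every transition label is a starter or a terminator, so $\mcal{A}'$ is an ST-automaton; the compatibility condition $\mu(s(e'))=S_{\lambda(e')}$, $\mu(t(e'))=T_{\lambda(e')}$ holds because the labels of the intermediate states are read off from the decomposition precisely so that consecutive source/target interfaces match (this is exactly the content of "$\lambda(e)=P_1^e*\dotsm*P_{k_e}^e$ is a valid gluing").

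The language equivalence is then a routine path-rewriting argument in both directions. Given an accepting path $\alpha=(q_0,e_1,q_1,\dotsc,e_n,q_n)$ in $\mcal{A}$, replace each $e_i$ by its gadget path; concatenating gives an accepting path in $\mcal{A}'$ whose recognized ipomset is $\bigl(P_1^{e_1}*\dotsm*P_{k_{e_1}}^{e_1}\bigr)*\dotsm*\bigl(P_1^{e_n}*\dotsm*P_{k_{e_n}}^{e_n}\bigr)=\lambda(e_1)*\dotsm*\lambda(e_n)=\ev(\alpha)$, using associativity of gluing. Conversely, any accepting path in $\mcal{A}'$ visits the $Q$-states in some order $q_0,\dotsc,q_n\in Q$ with $q_0\in\bot$, $q_n\in\top$, and between two consecutive $Q$-states it can only traverse one whole gadget (the intermediate states $q_j^e$ each have exactly one incoming and one outgoing transition in $\mcal{A}'$, so once you enter a gadget you must run through it to its end); the segment from $q_{i-1}$ to $q_i$ is therefore the gadget of some transition $e_i$ with $s(e_i)=q_{i-1}$, $t(e_i)=q_i$, and it recognizes $\lambda(e_i)$. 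Hence the original path $(q_0,e_1,\dotsc,e_n,q_n)$ is an accepting path in $\mcal{A}$ recognizing the same ipomset, so $L(\mcal{A}')=L(\mcal{A})$.

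One small technical wrinkle is that the sparse step decomposition of $\lambda(e)$ may be a \emph{single identity} $\id_U$ when $\lambda(e)$ is an identity iconclist; in that case the gadget is just a single silent transition $s(e)\to t(e)$ labelled $\id_U$, which is still a (trivially) valid starter/terminator-free ST-automaton transition. Since we have already reduced to the case without silent transitions via Lemma~\ref{le:nosilent}, or alternatively since Lemma~\ref{le:nosilent} can be re-applied to $\mcal{A}'$ afterwards, this causes no difficulty; but it is worth stating explicitly rather than silently assuming every $\lambda(e)$ has a genuinely alternating decomposition.

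I expect the main obstacle to be purely bookkeeping: verifying that the labels assigned to the intermediate states make every consecutive pair of transitions in a gadget a legal composition in the P-automaton sense, i.e.\ that if $P_j^e$ is a starter $\starter{U_j}{A_j}$ or terminator $\terminator{U_j}{A_j}$ then $T_{P_j^e}=S_{P_{j+1}^e}$, so that $\mu$ can consistently be defined on $q_j^e$. This is exactly what it means for $P_1^e*\dotsm*P_{k_e}^e$ to be a well-formed gluing yielding $\lambda(e)$, so it is immediate from the existence of the sparse step decomposition; the only care needed is to phrase the construction so that $\mu(q_j^e)$ is defined as this common interface rather than attempting to prove the match after the fact. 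Everything else — associativity of $*$, the one-to-one correspondence of paths — is standard and can be dispatched in a sentence each.
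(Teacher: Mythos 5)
Your proposal is correct and is essentially the paper's own proof: the paper likewise replaces each transition whose label is not a starter or terminator by a chain of fresh states and transitions labelled by the factors of an ST-decomposition of $\lambda(e)$ (doing so one transition at a time rather than all at once, which makes no difference). Your extra care about the language-equivalence direction, the silent-identity case, and the consistency of $\mu$ on intermediate states only fills in details the paper leaves implicit.
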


\begin{proof}
	Let $e_0$ be a transition such that $\lambda(e_0)$ is neither a starter nor a terminator. Let $\lambda(e_0)=P_1*\dotsc* P_n$ be an ST-decomposition. Let $\mcal{A}'$ be a P-automaton that is obtained from $\mcal{A}$ by replacing the transition $e$ by the sequence
	\[
	\begin{tikzpicture}
		\node[state,rectangle] (0) at (0,0) {};
		\node[state,rectangle] (1)  at (1.5,0) {};
		\node[state,rectangle] (2)  at (3,0) {};
		\node[state,rectangle] (n1) at (5,0) {};
		\node[state,rectangle] (n) at (6.5,0) {};
		\node at (4,0) {\dots};
		  \path[] (0) edge node[above]  {$P_1$} (1);
		  \path[] (1) edge node[above]  {$P_2$} (2);
		  \path[] (n1) edge node[above]  {$P_n$} (n);
	\end{tikzpicture}
	\]
	By repeating this construction we eliminate all labels that are neither starters  nor terminators.
\end{proof}

Fix a gST-automaton $\mcal{A}=(Q,E,s,t,\lambda,\mu,\bot,\top)$ for the rest of the section.
We say that $\mcal{A}$ is \emph{reduced} if is has no silent transitions and satisfies the following conditions:
\begin{itemize}
\item[(a)] $q\in\bot\implies t^{-1}(q)=\emptyset$;
\item[(b)] $q\in\top\implies s^{-1}(q)=\emptyset$;
\item[(c)] $\lambda(e)=\ilo SUT\;\land\; T=U\implies t(e)\in\top$;
\item[(d)] $\lambda(e)=\ilo SUT\;\land\; S=U\implies s(e)\in\bot$;
\item[(e)] $q\in\bot\implies |s^{-1}(q)|\leq 1$;
\item[(f)] $q\in\top\implies |t^{-1}(q)|\leq 1$;
\end{itemize}	
for all $q\in Q$ and $e\in E$.
That is, $\mcal{A}$ has no transitions incoming to initial states, neither outgoing from accepting states ((a) and (b));
the only pure starter transitions are to accepting states, and pure terminator transitions come from initial states ((c) and (d));
and initial states have at most one outgoing transition and accepting states at most one incoming transition ((e) and (f)).
We will successively transform $\mcal{A}$ in order to satisfy all these conditions,
preserving languages in the process.

\begin{lemma}
  \label{l:gSTab}
  There exists a gST-automaton $\mcal{A}'$ that satisfies (a) and (b) and $L(\mcal{A})=L(\mcal{A}')$.
\end{lemma}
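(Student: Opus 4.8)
The plan is a standard state-duplication argument. The only points requiring care are the labelling constraint $\mu(s(e))=S_{\lambda(e)}$, $\mu(t(e))=T_{\lambda(e)}$ — which forbids collapsing all initial states into one fresh state, so each initial (resp.\ accepting) state must be duplicated individually — and the states that are simultaneously initial and accepting.

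Concretely I would build $\mcal{A}'=(Q',E',s',t',\lambda',\mu',\bot',\top')$ as follows. Put $Q'=Q\sqcup\{\hat q\mid q\in\bot\}\sqcup\{\check q\mid q\in\top\}\sqcup\{q^\circ\mid q\in\bot\cap\top\}$ and let $\mu'$ restrict to $\mu$ on $Q$ and send each of $\hat q,\check q,q^\circ$ to $\mu(q)$. Keep every transition of $E$ unchanged (as a transition between states of $Q$), and for every $e\in E$ with source $p$ and target $q$ add a transition labelled $\lambda(e)$ from $\hat p$ to $q$ whenever $p\in\bot$, from $p$ to $\check q$ whenever $q\in\top$, and from $\hat p$ to $\check q$ whenever $p\in\bot$ and $q\in\top$. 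Finally set $\bot'=\{\hat q\mid q\in\bot\}\cup\{q^\circ\mid q\in\bot\cap\top\}$ and $\top'=\{\check q\mid q\in\top\}\cup\{q^\circ\mid q\in\bot\cap\top\}$. Since we only copy labels and match $\mu'$ to the original $\mu$, $\mcal{A}'$ is again a gST-automaton (and has no silent transitions if $\mcal{A}$ has none). Conditions (a) and (b) are then immediate by inspection: no transition of $\mcal{A}'$ has a $\hat q$ or a $q^\circ$ as target, and none has a $\check q$ or a $q^\circ$ as source.

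It remains to check $L(\mcal{A})=L(\mcal{A}')$, which I would do by matching accepting paths, noting that $\ev$ is preserved because labels are copied verbatim and $\mu'$-consistency is inherited. The empty accepting paths of $\mcal{A}$ live at states of $\bot\cap\top$ and correspond to the empty paths at the $q^\circ$; these are also the only empty accepting paths of $\mcal{A}'$, since $\bot'\cap\top'=\{q^\circ\mid q\in\bot\cap\top\}$ and the $q^\circ$ carry no transitions. A non-empty accepting path $q_0\to\dots\to q_n$ of $\mcal{A}$ (with $q_0\in\bot$, $q_n\in\top$) lifts: for $n=1$ use the doubly-redirected copy $\hat q_0\to\check q_1$; for $n\ge 2$, replace the first transition by its $\hat q_0\to q_1$ copy, the last by its $q_{n-1}\to\check q_n$ copy, and keep the middle transitions. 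Conversely, a non-empty accepting path of $\mcal{A}'$ must start at some $\hat q_0$ (the $q^\circ$ have no outgoing edges); its first edge is one of the added copies out of $\hat q_0$ (landing in $Q$ or in some $\check q_1$), every later edge is either an original edge of $\mcal{A}$ (staying in $Q$) or an added copy into some $\check q$, and once in a $\check q$ the path must stop; erasing all decorations yields an accepting path of $\mcal{A}$ with the same $\ev$. The one place the construction is not completely routine is exactly this interaction between the initial/accepting overlap and paths of length $0$ and $1$, which is precisely what the states $q^\circ$ and the doubly-redirected transition copies handle.
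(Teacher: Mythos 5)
Your construction is exactly the paper's: your $\hat q,\check q,q^\circ$ are its $q_\bot,q^\top,q_\bot^\top$, your added transition copies are its $e_\bot,e^\top,e_\bot^\top$, and the language-equivalence argument (redirecting the first and last transitions, with the special cases for paths of length $0$ and $1$) matches the paper's proof step for step. The proposal is correct and takes essentially the same approach.
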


\begin{proof}
  Define $\mcal{A}'=(Q',E',s',t',\lambda',\mu',\bot',\top')$ as follows:
  \begin{itemize}
  \item $Q'=Q\cup\{q_\bot\mid q\in \bot\}\cup\{q^\top\mid q\in \top\}\cup\{q_\bot^\top\mid q\in\bot\cap\top\}$,
    adding fresh initial and accepting states;
  \item $E'=E\cup\{e_\bot\mid s(e)\in \bot\}\cup \{e^\top\mid t(e)\in \top\}\cup \{e_\bot^\top\mid s(e)\in\bot,\; t(e)\in \top\}$,
    with
  \item  $s'(e_\bot)=s(e)_\bot$, $s'(e^\top)=s(e)$, $s'(e_\bot^\top)=s(e)_\bot$,
  \item $t'(e_\bot)=t(e)$, $t'(e^\top)=t(e)^\top$, $t'(e_\bot^\top)=t(e)^\top$,
  \item and otherwise $s'(e)=s(e)$ and $t'(e)=t(e)$;
  \item $\bot'=\{q_\bot\mid q\in \bot\}\cup\{q_\bot^\top\mid q\in\bot\cap\top\}$ and
    $\top'=\{q^\top\mid q\in \top\}\cup\{q_\bot^\top\mid q\in\bot\cap\top\}$; and
  \item $\lambda', \mu'$ extensions of $\lambda, \mu$ by
    $\lambda'(e_\bot)=\lambda'(e^\top)=\lambda'(e_\bot^\top)=\lambda(e)$ and
    $\mu'(q_\bot)=\mu'(q^\top)=\mu'(q_\bot^\top)=\mu(q)$.
  \end{itemize}
  
  Verifying that $\mcal{A}'$ satisfies (a) and (b) is trivial.
  Using the functional simulation $p:\mcal{A}'\to\mcal{A}$
  that sends $q,q_\bot,q^\top,q_\bot^\top$ to $q$ and $e,e_\bot,e^\top,e_\bot^\top$ to $e$,
  we have $L(\mcal{A}')\subseteq L(\mcal{A})$.

  To show the other direction, let $\alpha=(q_0, e_1, q_1,\dotsc, q_{n-1}, e_n, q_n)$ be an accepting path in $\mcal{A}$,
  then $\alpha'=((q_0)_\bot, (e_1)_\bot, q_1,\dotsc, q_{n-1},$ $(e_n)^\top, (q_n)^\top)$
  is an accepting path in $\mcal{A}'$ that recognizes the same ipomset.
  (If the length of $\alpha$ is less than $2$, we need to modify this argument.
  If $\alpha=(q, e, r)$, we take $\alpha'=(q_\bot, e_\bot^\top, r^\top)$;
  if $\alpha=(q)$ is constant, we let $\alpha'=(q_\bot^\top)$.)
\end{proof}

Now we assume that $\mcal{A}$ satisfies (a) and (b), and we modify it to assure (c) and (d).
We say that $e\in E$ is a \emph{starter transition} if $\lambda(e)$ is a starter,
its \emph{dimension} is $|\lambda(e)|$,
and it is \emph{bad} if its does not satisfy (c), \ie~$t(e)\notin \top$.
Similarly we define terminator transitions and bad terminator transitions.
Let $\bst_n(\mcal{A})$ and $\btt_n(\mcal{A})$ be the number of bad starter and bad terminator transitions,
respectively, in $\mcal{A}$.

For any starter transition $c\in E$, define a gST-automaton $\mcal{A}_{c}=(Q, E', s', t', \lambda', \mu, \bot, \top)$
by letting $E'=E\setminus \{c\}\cup \{e^*\mid e\in s^{-1}(t(c))\}$
and extending $s$, $t$, and $\lambda$ by
$s'(e^*)=s(c)$, $t'(e^*)=t(e)$, and $\lambda'(e^*)=\lambda(c)*\lambda(e)$.
That is, we remove $c$ and add transitions from the source of $c$ to all targets of transitions out of the target of $c$,
adjusting labels accordingly.
(This is similar to one the closure construction for removing silent transitions.)
Clearly $\mcal{A}_c$ still satisfies (a) and (b), as
any new transitions start or terminate at states that already had outgoing or incoming transitions, respectively.

\begin{lemma}
  \label{l:gSTcd}
  Assume that $c\in E$ is a bad starter transition of dimension $n$. Then 
  \begin{enumerate}
  \item $L(\mcal{A})=L(\mcal{A}_c)$,
  \item $\bst_n(\mcal{A}_c)=\bst_n(\mcal{A})-1$,
  \item $\bst_k(\mcal{A}_c)=\bst_k(\mcal{A})$ for $k<n$,
  \item $\btt_k(\mcal{A}_c)=\btt_k(\mcal{A})$ for all $k$,
  \end{enumerate}
\end{lemma}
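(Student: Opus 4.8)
The plan is to prove (1) via two language inclusions and (2)--(4) by tracking the labels, sources and targets of the transitions of $\mcal{A}_c$. Write $\lambda(c)=\starter UA$ with $|U|=n$; recall that we may assume $\mcal{A}$ has no silent transitions, so $A\ne\emptyset$. Since $\mu(s(c))=S_{\lambda(c)}=U\setminus A$ while $\mu(t(c))=T_{\lambda(c)}=U$, and $A\ne\emptyset$, these differ, so $s(c)\ne t(c)$ and $c$ is never a loop. The inclusion $L(\mcal{A}_c)\subseteq L(\mcal{A})$ is immediate: in an accepting path of $\mcal{A}_c$, replace each new transition $e^*$ by the two steps $c$ then $e$ (legal since $s'(e^*)=s(c)$, $t(c)=s(e)$, $t'(e^*)=t(e)$); as $\lambda'(e^*)=\lambda(c)*\lambda(e)$ and gluing of ipomsets is associative, this changes neither the recognized ipomset nor the endpoints nor $\bot,\top$.

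For $L(\mcal{A})\subseteq L(\mcal{A}_c)$ I would use that $c$ is \emph{bad}: since $t(c)\notin\top$, $c$ can never be the last transition of an accepting path, so every occurrence of $c$ in such a path is immediately followed by some $e\in s^{-1}(t(c))$, and $e\ne c$ because $c$ is not a loop, hence $e^*\in E'$. Now induct on the number of occurrences of $c$ in an accepting path $\alpha$: if there are none, $\alpha$ already lies in $\mcal{A}_c$; otherwise take the last occurrence of $c$ together with its successor $e$ and contract the fragment $c,\,t(c),\,e$ into the single transition $e^*$. This yields an accepting path with one fewer occurrence of $c$, the same endpoints, and — again by associativity of gluing — the same recognized ipomset, so the induction terminates.

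For (2)--(4): every transition of $E\setminus\{c\}$ keeps its label, source and target, hence is a bad starter, resp.\ bad terminator, of the same dimension in $\mcal{A}_c$ as in $\mcal{A}$; the only change among them is that the bad starter $c$, of dimension $n$, has disappeared. So it remains to show that no new transition $e^*$ is a bad starter of dimension $\le n$, nor a terminator at all. From $e\in s^{-1}(t(c))$ and $\mcal{A}$ being a gST-automaton we get $S_{\lambda(e)}=\mu(s(e))=\mu(t(c))=U$, so, writing $V=U_{\lambda(e)}\supseteq U$, one computes $\lambda'(e^*)=\lambda(c)*\lambda(e)=\ilo{U\setminus A}{V}{T_{\lambda(e)}}$. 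If this is a starter then $T_{\lambda(e)}=V$, so $\lambda(e)=\ilo UVV$ is a starter; being non-silent it is proper, so $|V|>|U|=n$, and $\lambda'(e^*)$ is a starter of dimension $|V|>n$. And $\lambda'(e^*)$ is never a terminator, since that would require its underlying set $V$ to equal its source interface $U\setminus A$, forcing $A=\emptyset$. Hence the new transitions contribute nothing to $\bst_k$ for $k\le n$ and nothing to any $\btt_k$, which gives (2), (3) and (4).

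I expect the only delicate point to be this last computation: one must be sure that prepending the proper starter $\lambda(c)$ to $\lambda(e)$ cannot synthesize a \emph{new} bad starter of dimension $\le n$, nor any terminator — both of which rest on $\lambda(e)$ being non-silent and on $A\ne\emptyset$. Everything else is routine rewriting of paths.
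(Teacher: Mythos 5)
Your proof is correct and follows essentially the same route as the paper's: replace each occurrence of $c$ together with its successor by the corresponding $e^*$ (and vice versa) for the language equality, and observe that no $e^*$ is a terminator while any $e^*$ that is a starter has dimension strictly greater than $n$. You merely spell out the gluing computation $\lambda(c)*\lambda(e)=\ilo{U\setminus A}{V}{T_{\lambda(e)}}$ and the non-loop argument in more detail than the paper does, which is welcome but not a different argument.
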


\begin{proof}
  Let $\alpha=(q_0, e_1,\dotsc, e_n, q_n)$ be an accepting path in $\mcal{A}$.
  Clearly $e_n\neq c$ (since $t(c)\not\in \top$),
  and $e_i=c$ implies $e_{i+1}\neq c$ (since $\lambda(c)$ is a proper starter we have   $s(c)\neq t(c)$).
  By replacing every sequence $c, e_i$ in $\alpha$ by $e_i^*$
  we obtain an accepting path $\alpha$ in $\mcal{A}_c$ such that $\ev(\alpha)=\ev(\alpha')$.
  Thus $L(\mcal{A})\subseteq L(\mcal{A}_e)$.

  Conversely, if $\beta$ is an accepting path in $\mcal{A}_c$,
  then by replacing every step $e^*$ by a sequence $c, e$
  we obtain an accepting path in $\mcal{A}$, showing that the converse inclusion also holds.

  For the other claims, note that no $e^*$ is a terminator transition (showing 4),
  and if $e^*$ is a starter transition, then $|\lambda(e^*)|>|\lambda(c)|$ (showing (3)).
  (2) is true because we have removed $c$.
\end{proof}

There is a symmetric lemma for bad terminator transitions.

\begin{lemma}
  \label{l:gSTcd2}
  There exists a gST-automaton $\mcal{A}'$ that satisfies (a)--(d) and $L(\mcal{A})=L(\mcal{A}')$.
\end{lemma}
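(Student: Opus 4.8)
The plan is to produce \(\mcal{A}'\) in three stages. First, invoke Lemma~\ref{l:gSTab} to obtain a gST-automaton \(\mcal{B}\) satisfying (a) and (b) with \(L(\mcal{B})=L(\mcal{A})\); this \(\mcal{B}\) still has no silent transitions, since that construction only duplicates transitions and their labels. Second, apply Lemma~\ref{l:gSTcd} repeatedly to \(\mcal{B}\), each time removing one bad starter transition, until none remain; the discussion preceding Lemma~\ref{l:gSTcd} already records that each step keeps (a) and (b), and it also keeps the absence of silent transitions, because every new label \(\lambda(c)*\lambda(e)\) has a proper starter as its left factor (its source interface is strictly smaller than its carrier) and so is not an identity. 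Third, run the symmetric procedure for bad terminator transitions on the resulting automaton \(\mcal{C}\), giving the desired \(\mcal{A}'\). The two points needing genuine care are that the second stage terminates and that the third stage does not reintroduce a violation of (c).

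To show that the bad-starter elimination terminates, let \(D=\max_{q\in Q}|\mu(q)|\). Any starter transition \(e\) has \(|\lambda(e)|\le D\): if \(\lambda(e)=\starter{U}{A}\) then its target interface is \(U\), whence \(|\lambda(e)|=|U|=|\mu(t(e))|\le D\). Now inspect the construction \(\mcal{A}\mapsto\mcal{A}_c\): a new transition \(e^*\) satisfies \(\lambda'(e^*)=\lambda(c)*\lambda(e)\), which is a starter only when \(\lambda(e)\) is one; in that case \(\lambda(e)=\starter{U_e}{A_e}\) with \(A_e\ne\emptyset\) (no silent transitions), so \(\lambda'(e^*)=\starter{U_e}{A_e\cup A_c}\) has dimension \(|U_e|=\dim(c)+|A_e|>\dim(c)\), and, since \(t'(e^*)=t(e)\), the transition \(e^*\) is bad exactly when \(e\) is — in which case \(e\) is itself a bad starter of that same dimension. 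Hence, if we always choose \(c\) of \emph{maximal} dimension \(n\) among the bad starter transitions present, so that \(\bst_m=0\) for every \(m>n\), then no bad starter transition of dimension \(>n\) can be created, and, combined with Lemma~\ref{l:gSTcd}(2)--(3), the vector of bad-starter counts \((\bst_D,\bst_{D-1},\dotsc,\bst_1)\in\mathbb N^D\) strictly decreases in lexicographic order at each step. Since that order is well-founded, finitely many applications of Lemma~\ref{l:gSTcd} suffice to reach \(\mcal{C}\) with no bad starter transitions, i.e.\ satisfying (a)--(c), and \(L(\mcal{C})=L(\mcal{A})\) by Lemma~\ref{l:gSTcd}(1).

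For the terminator stage, apply the symmetric version of Lemma~\ref{l:gSTcd} to \(\mcal{C}\). Termination follows by the same measure with \(\btt_k\) replacing \(\bst_k\) (terminator transitions also have dimension at most \(D\)), and preservation of (a), (b) and of the language is the dual of what was used above. It remains to check that this stage keeps (c): here the new transitions carry labels \(\lambda(e)*\lambda(c)\) with \(\lambda(c)\) a proper terminator, and such a gluing is never a starter — its target interface is the terminating interface of \(\lambda(c)\), a proper subset of its carrier — so no new starter transition, hence no new bad starter transition, appears. (This is precisely the dual of Lemma~\ref{l:gSTcd}(4).) The automaton \(\mcal{A}'\) so obtained therefore satisfies (a)--(d) with \(L(\mcal{A}')=L(\mcal{A})\).

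The main obstacle is the termination of the bad-starter elimination, because a single application of Lemma~\ref{l:gSTcd} can \emph{increase} the total number of bad starter transitions, ruling out a naive induction on that number. The argument is rescued by three observations: newly created bad starters have strictly larger dimension than the one removed; dimensions are bounded by \(\max_q|\mu(q)|\); and removing a bad starter of maximal dimension pins every higher-dimensional count at zero, so that the lexicographic measure \((\bst_D,\dotsc,\bst_1)\) really decreases. Getting this bookkeeping right, together with the dual fact that the terminator stage spawns no starter transitions and hence leaves (c) untouched, is the crux of the proof.
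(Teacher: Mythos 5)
Your proof is correct and follows the same basic strategy as the paper's: starting from Lemma~\ref{l:gSTab}, iterate Lemma~\ref{l:gSTcd} and its terminator dual until no bad transitions remain, with termination controlled by dimension-indexed counts of bad transitions and the bound $\max_{q}|\mu(q)|$ on transition dimensions. The one genuine difference is how termination is organized. The paper argues by contradiction and implicitly uses that Lemma~\ref{l:gSTcd} never creates bad transitions of dimension $\le n$, so the vector of counts decreases lexicographically with \emph{low} dimensions most significant; new bad starters of higher dimension are tolerated and eventually flushed out because dimension is bounded. You instead always remove a bad starter of \emph{maximal} dimension $n$ and observe that then no new bad starter appears at all (a new transition $e^*$ is a bad starter exactly when the corresponding $e$ is a bad starter of dimension strictly greater than $n$, of which there are none), so the total number of bad starters simply drops by one; your lexicographic measure with high dimensions most significant is then stronger than needed. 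Both arguments are sound. Yours trades the paper's brevity for an explicit selection rule and measure, and your checks that the constructions preserve the absence of silent transitions and that the terminator stage spawns no starter transitions (so condition (c) survives the third stage) make precise exactly the points the paper leaves implicit.
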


\begin{proof}
  Assume otherwise.
  Then there exists $k, n>0$ such that every gST-automaton $\mcal{B}$ that satisfies (a), (b) and $L(\mcal{B})=L(\mcal{A})$
  has at least $k$ bad transitions of dimension $n$.
  But then using Lemma \ref{l:gSTcd} we can remove one bad starter or terminator transition of dimension $n$
  and not introduce any new bad transitions of dimension less than or equal to $n$.
  As the dimension of transitions in finite gST-automata is bounded, this is a contradiction.
\end{proof}

Now assume that $\mcal{A}$ satisfies (a)--(d). Define a gST-automaton $\mcal{A}'=(Q',E,s',t',\lambda,\mu',\bot',\top')$ as follows:
\begin{itemize}
\item $Q'=Q\setminus (\bot\cup\top)\cup (\bot\cap \top)\cup\{e_\bot\mid e\in s^{-1}(\bot)\}\cup \{e^\top\mid e\in s^{-1}(\top)\}$,
\item $\mu'(e_\bot)=\mu(s(e))$, $\mu'(e^\top)=\mu(t(e))$,
\item $\bot'=\{e_\bot\mid e\in s^{-1}(\bot)\}\cup(\bot\cap \top)$,
\item $\top'=\{e^\bot\mid e\in t^{-1}(\top)\}\cup(\bot\cap \top)$, and
\end{itemize}
\begin{equation*}
  s'(e)=
  \begin{cases}
    s(e) & \text{if $s(e)\not\in \bot$},\\
    e_\bot & \text{if $s(e)\in \bot$},
  \end{cases}
  \quad
  t'(e)=
  \begin{cases}
    t(e) & \text{if $t(e)\not\in \top$},\\
    e^\top & \text{if $t(e)\in \top$}.
  \end{cases}
\end{equation*}

That is, we duplicate initial and accepting states so that each has at most one outgoing or incoming transition, respectively,
except for states which are both initial and accepting which by (a) and (b) already have no transitions at all.

\begin{lemma}
  \label{l:gSTef}
  $\mcal{A}'$ is reduced and $L(\mcal{A})=L(\mcal{A}')$.
\end{lemma}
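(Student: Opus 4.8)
The plan is to verify the two assertions separately: first that the construction preserves languages, then that $\mcal{A}'$ satisfies all of (a)--(f). For language equivalence I would exhibit a bijective correspondence between accepting paths in $\mcal{A}$ and accepting paths in $\mcal{A}'$ that preserves the recognized ipomset. Given an accepting path $\alpha=(q_0,e_1,\dotsc,e_n,q_n)$ in $\mcal{A}$, note that by conditions (a) and (b) the initial state $q_0$ has no incoming transitions and the accepting state $q_n$ has no outgoing ones, so $q_0$ occurs only at the start and $q_n$ only at the end; moreover no intermediate state $q_i$ ($0<i<n$) lies in $\bot\cup\top$, again by (a) and (b). Hence the only states touched by $\alpha$ that get relocated in $\mcal{A}'$ are $q_0$ (replaced by $(e_1)_\bot$) and $q_n$ (replaced by $(e_n)^\top$); mapping $\alpha$ to $((e_1)_\bot,e_1,q_1,\dotsc,e_n,(e_n)^\top)$ gives an accepting path in $\mcal{A}'$ recognizing the same ipomset. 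The degenerate cases (paths of length $0$ or $1$, and the case $q_0=q_n\in\bot\cap\top$) need to be handled as in the proof of Lemma~\ref{l:gSTab}: a constant path at a state of $\bot\cap\top$ maps to the constant path at the corresponding state of $Q'$, which by the construction remains both initial and accepting. The reverse map sends an accepting path in $\mcal{A}'$ back by collapsing $(e_1)_\bot\mapsto s(e_1)$ and $(e_n)^\top\mapsto t(e_n)$; since every $e_\bot$ has a unique outgoing transition (namely $e$ itself relocated) and every $e^\top$ a unique incoming one, these two maps are mutually inverse, giving $L(\mcal{A})=L(\mcal{A}')$.

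For reducedness, first observe that $\mcal{A}'$ has no silent transitions because we have not altered $\lambda$ and $\mcal{A}$ was assumed to satisfy that already (it satisfies (a)--(d), which presupposes no silent transitions). Conditions (a) and (b): the new initial states $e_\bot$ have, by definition of $t'$, no incoming transitions, since a transition $e'$ with $t'(e')=e_\bot$ would require $t(e')\in\bot$, but in $\mcal{A}$ this is impossible by (a) for $\mcal{A}$; symmetrically for (b). A state in $\bot\cap\top$ carried over to $Q'$ stays isolated because it had no incident transitions in $\mcal{A}$ by (a) and (b). Conditions (c) and (d): these concern pure starter/terminator transitions, and $\lambda$ is unchanged, so a pure starter transition $e$ in $\mcal{A}'$ corresponds to one in $\mcal{A}$ with $t(e)\in\top$ by (c) for $\mcal{A}$; hence in $\mcal{A}'$ we have $t'(e)=e^\top\in\top'$, so (c) holds, and (d) symmetrically. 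Conditions (e) and (f): each new initial state $e_\bot$ is by construction the source of exactly one transition (the relocated $e$), and the carried-over states in $\bot\cap\top$ have zero outgoing transitions, so $|s'^{-1}(q)|\le1$ for every $q\in\bot'$; symmetrically $|t'^{-1}(q)|\le1$ for $q\in\top'$.

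I do not expect a serious obstacle here; the construction is a routine "unfolding of boundary states" and the work is bookkeeping. The one place that demands a little care is the interaction of the degenerate short-path cases with the $\bot\cap\top$ states — one must make sure these states are not also duplicated (the definition explicitly keeps $\bot\cap\top$ in $Q'$ as-is, relying on (a) and (b) to guarantee they are isolated), and that constant accepting paths at them survive the translation. A second minor point is confirming that the map $e\mapsto e_\bot$ (resp.\ $e\mapsto e^\top$) on the relevant transition sets is well-defined and injective, so that $\bot'$ and $\top'$ are genuinely in bijection with $s^{-1}(\bot)$ and $t^{-1}(\top)$; this is immediate from the disjoint-union form of $Q'$. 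With these checks in place, both claims follow.
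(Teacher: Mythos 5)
Your proposal is correct and follows essentially the same route as the paper's proof: verify (a)--(f) directly from the construction (with (c), (d) inherited via the unchanged labelling and the relocation $t'(e)=e^\top$, and (e), (f) holding because each duplicated boundary state carries exactly one transition), and establish $L(\mcal{A})=L(\mcal{A}')$ via the evident correspondence of accepting paths. The paper states the path correspondence in one line; your version merely spells out the bookkeeping (intermediate states avoid $\bot\cup\top$ by (a) and (b), and the degenerate constant-path case at $\bot\cap\top$), which is exactly the justification the paper leaves implicit.
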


\begin{proof}
  We have $\bot'\cap\top'=\bot\cap \top$,
  thus there are no transitions that start or terminate at a state which is both initial and accepting.
  The other initial and accepting states are those of type $e_\bot$ and $e^\top$
  which satisfy (e) and (f) by construction.
  Thus $\mcal{A}'$ satisfies (a), (b), (e) and (f).

  If $\lambda(e)$ is a starter, then $t(e)\in\top$ (since $\mcal{A}$ satisfies (c)), so $t'(e)=e^\top\in\top'$.
  Thus $\mcal{A}'$ also satisfies (c).
  Similarly, $\mcal{A}'$ satisfies (d) and thus is reduced.
  Accepting paths in $\mcal{A}'$ correspond to those in $\mcal{A}$, so $L(\mcal{A})=L(\mcal{A}')$.
\end{proof}

By combining Lemmas \ref{l:PtoST} to \ref{l:gSTef} we obtain the following.

\begin{proposition}
  For every P-automaton there is a reduced gST-automaton that recognizes the same language.
\end{proposition}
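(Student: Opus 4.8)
The plan is to read this proposition as a simple composition of the lemmas already established in this section, carried out in the right order. Given an arbitrary P-automaton $\mcal{A}$, I would first apply Lemma~\ref{le:nosilent} to remove all silent transitions, then Lemma~\ref{l:PtoST} to replace every transition whose label is neither a starter nor a terminator by a chain of starter/terminator transitions, yielding a language-equivalent ST-automaton (in particular a gST-automaton, since starters and terminators are discrete). Note that the construction of Lemma~\ref{l:PtoST} subdivides transitions and hence introduces no new silent transitions, so the no-silent-transitions property is preserved; if one worries about this, a second application of Lemma~\ref{le:nosilent} at this point does no harm.

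Next I would fix the resulting gST-automaton and run the sequence of transformations from Lemmas~\ref{l:gSTab}, \ref{l:gSTcd2}, and \ref{l:gSTef}. Lemma~\ref{l:gSTab} produces a language-equivalent gST-automaton satisfying (a) and (b); Lemma~\ref{l:gSTcd2} then, working from a gST-automaton with (a) and (b), produces one satisfying (a)--(d); and finally Lemma~\ref{l:gSTef}, applied to a gST-automaton with (a)--(d), produces a \emph{reduced} gST-automaton recognizing the same language. Composing these four (or five) language-preserving steps gives a reduced gST-automaton with $L$ equal to the language of the original $\mcal{A}$, which is exactly the claim.

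The one point that needs a moment's care — and the only place where the argument could go wrong — is checking that each transformation preserves the hypotheses required by the next. Concretely: the output of Lemma~\ref{l:PtoST} must be a gST-automaton with no silent transitions (it is, by the subdivision remark above); the output of Lemma~\ref{l:gSTab} must still have no silent transitions (the new transitions $e_\bot, e^\top, e_\bot^\top$ copy the label of $e$, so this holds) and satisfy (a), (b); the output of Lemma~\ref{l:gSTcd2} must satisfy (a)--(d) and still have no silent transitions (the construction $\mcal{A}_c$ replaces $c,e$ by a transition labelled $\lambda(c)*\lambda(e)$, a gluing of a proper starter with something, hence not an identity, so no silent transitions appear); and Lemma~\ref{l:gSTef} needs precisely (a)--(d) as input, which is what we have. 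Since ``reduced'' by definition bundles ``no silent transitions'' together with (a)--(f), the final object is reduced. No single step is hard; the only obstacle is this bookkeeping of invariants across the chain, and it checks out.

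\begin{proof}
  Let $\mcal{A}$ be a P-automaton. By Lemma~\ref{le:nosilent} we may assume $\mcal{A}$ has no silent transitions. Applying Lemma~\ref{l:PtoST} yields a language-equivalent ST-automaton; since starters and terminators are discrete, this is a gST-automaton, and since the construction of Lemma~\ref{l:PtoST} only subdivides transitions (relabelling the pieces by the factors $P_i$ of a step decomposition, none of which is an identity), it still has no silent transitions. Successively applying Lemmas~\ref{l:gSTab}, \ref{l:gSTcd2}, and~\ref{l:gSTef} then produces gST-automata satisfying (a)--(b), (a)--(d), and (a)--(f) respectively, each recognizing the same language; in each step the newly added transitions copy or glue existing non-identity labels, so no silent transitions are introduced. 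The final automaton is therefore reduced and recognizes $L(\mcal{A})$.
\end{proof}
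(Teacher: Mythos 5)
Your proposal is correct and matches the paper's argument, which likewise obtains the proposition by composing Lemma~\ref{le:nosilent} with Lemmas~\ref{l:PtoST} through~\ref{l:gSTef} in exactly this order. Your extra bookkeeping that no step reintroduces silent transitions is a welcome (and valid) elaboration of what the paper leaves implicit.
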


Now let $\mcal{A}=(Q,E,s,t,\lambda,\mu,\bot,\top)$ be a reduced gST-au\-to\-ma\-ton.
We present two conversions of $\mcal{A}$ to a partial HDA.
The first is simpler and the second produces a strict pHDA.

Let $X(\mcal{A})$ be the partial HDA with cells $Q\sqcup E/\sim$, where $e\sim s(e)$ whenever $\lambda(e)$ is a terminator and $e\sim t(e)$ if $\lambda(e)$ is a starter. We put
$\ev(q)=\mu(q)$ for $q\in Q$, and for $e\in E$ with $\lambda(e)=\ilo{S}{U}{T}$, $\ev(e)=U$,
with only defined face maps
\begin{equation*}
  \delta^0_{U\setminus S}(e)=s(e),\quad
  \delta^1_{U\setminus T}(e)=t(e),
\end{equation*}
and $\bot_{X(\mcal{A})}=\bot_{\mcal{A}}$, $\top_{X(\mcal{A})}=\top_{\mcal{A}}$. (Note that if $e\sim q$, then $\ev(e)=\ev(q)$ and one of the equations above reduces to $\delta^\varepsilon_{\emptyset}(e)=q$.)

\begin{lemma}
  $X(\mcal{A})$ is a pHDA.
\end{lemma}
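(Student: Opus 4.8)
The plan is to verify that $X(\mcal{A})$, as described, is a well-defined lax $\Set_*$-valued presheaf on $\sq$; that $\bot$ and $\top$ are subsets of its cells is immediate. Two points require attention: \textbf{(i)} that $\ev$ and the prescribed faces $\delta^0_{U\setminus S}(e)=s(e)$ and $\delta^1_{U\setminus T}(e)=t(e)$ descend to well-defined \emph{single-valued} partial functions on the quotient $Q\sqcup E/{\sim}$ (together with $\delta_{\emptyset,\emptyset}=\id$), and \textbf{(ii)} that these faces satisfy the lax precubical identity~\eqref{eq:lpcid}. Both follow from a classification of the $\sim$-classes using that $\mcal{A}$ is reduced.

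For~(i), since $\mcal{A}$ has no silent transitions, each $\lambda(e)$ is a proper starter (so $e\sim t(e)$), a proper terminator (so $e\sim s(e)$), or an iconclist $\ilo{S}{U}{T}$ with $S\subsetneq U$ and $T\subsetneq U$ (so $e$ is identified with nothing). Hence a $\sim$-class containing a state $q$ consists of $q$ together with the starters ending at $q$ and the terminators starting at $q$. If a starter ends at $q$ then $q\in\top$ by~(c), so it is the \emph{unique} transition into $q$ by~(f); symmetrically a terminator starting at $q$ is the unique transition out of $q$ by~(d) and~(e); and these possibilities are mutually exclusive, for otherwise $q\in\bot\cap\top$ and then $t^{-1}(q)=\emptyset$ by~(a), contradicting the existence of a starter into $q$. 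So every $\sim$-class contains \emph{at most one transition}, and when $q=t(e^{\mathrm{in}})$ (resp.\ $q=s(e^{\mathrm{out}})$) is identified with a starter (resp.\ terminator) $\ev$ remains consistent because $\mu(q)=T_{\lambda(e^{\mathrm{in}})}$ (resp.\ $\mu(q)=S_{\lambda(e^{\mathrm{out}})}$). A single transition $e$ prescribes values for the two face maps $\delta^0_{U\setminus S}$ and $\delta^1_{U\setminus T}$, which are distinct because $\lambda(e)$ is not an identity; so no face map is prescribed twice on a cell, and the face maps are well-defined single-valued partial functions, with $\delta_{\emptyset,\emptyset}=\id$ consistent with the prescriptions (e.g.\ $\delta^1_\emptyset(e^{\mathrm{in}})=q$).

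For~(ii), the crucial point is that every non-identity face of $X(\mcal{A})$ lands in a singleton $\sim$-class, which carries no non-identity face. A non-identity lower face sends its cell to $s(e)$ for some transition $e$ whose label is not a proper terminator. The state $s(e)$ cannot be identified with any transition: a starter into $s(e)$ would give $s(e)\in\top$ by~(c) and hence $s^{-1}(s(e))=\emptyset$ by~(b), impossible as $e\in s^{-1}(s(e))$; and a terminator out of $s(e)$ would, by~(d) and~(e), have to be $e$ itself, impossible since that terminator is proper while $\lambda(e)$ is not. Symmetrically (using~(c),~(f),~(d),~(a)) the target $t(e)$ of a non-identity upper face is also a singleton class. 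Consequently no composite of two non-identity faces exists: if $\delta_{A,B}(c)=c'$ with $(A,B)\ne(\emptyset,\emptyset)$ then $c'$ is a singleton, so $\delta_{C,D}(c')$ is defined only when $(C,D)=(\emptyset,\emptyset)$, giving $\delta_{C,D}\delta_{A,B}(c)=c'=\delta_{A\cup C,B\cup D}(c)$; and for $(A,B)=(\emptyset,\emptyset)$ the containment~\eqref{eq:lpcid} is immediate. Thus $X(\mcal{A})$ is a lax $\Set_*$-presheaf, i.e.\ a pHDA.

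The main obstacle is precisely this classification: once one knows, from conditions~(a)--(f), that every non-identity face has a dead-end cell as target, single-valuedness and the lax identity both drop out and everything else is routine. It is also worth noting that $X(\mcal{A})$ need not be \emph{strict}, since a non-identity lower face $\delta^0_{U\setminus S}(e)$ with $|U\setminus S|\ge 2$ admits no factorization through an intermediate face and so can violate~\eqref{eq:pcid}; this is consistent with the claim being only about pHDAs, the strict version being the content of the second construction.
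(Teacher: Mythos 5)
Your proof is correct and follows essentially the same route as the paper's: using conditions (a)--(f) to show that every non-trivial face lands in a state that is alone in its $\sim$-class and hence has no faces, so no two non-trivial face maps compose and \eqref{eq:lpcid} holds vacuously. The extra check that the face maps descend to well-defined single-valued partial functions on the quotient is a welcome addition that the paper leaves implicit, but it does not change the core argument.
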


\begin{proof}
  We will check that no two face maps are composable, making condition \eqref{eq:lpcid} trivially satisfied. If $x\in X(\mcal{A})$ is a (non-trivial) lower face of $[y]$, then $x=[s(e)]$ and $y=[e]$ for $e\in E$ such that $\lambda(e)$ is not a terminator ($S_{\lambda(e)}\neq \lambda(e)$). If $s(e)\sim e'$ for some $e'$, then there are two possibilities:
  \begin{itemize}
  \item $\lambda(e')$ is a starter and $s(e')=s(e)=:q$. By condition (d), $q\in \bot$ and then $e'=e$ by condition (e). Since $\lambda(e)$ is not a terminator we get the contradiction.
  \item $\lambda(e')$ is a terminator and  $t(e')=s(e)=:q$. Then $q\in \top$ by condition (c) which contradicts (b).
  \end{itemize}
  Thus, $s(e)$ stands alone in its equivalence class  and hence has no faces. The case when $s(e)$ is an upper face is similar.
\end{proof}

\begin{lemma}
  $L(X(\mcal{A}))=L(\mcal{A})$.
\end{lemma}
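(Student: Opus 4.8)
The plan is to exhibit a language-preserving correspondence between accepting paths of $\mcal{A}$ and accepting paths of $X(\mcal{A})$, obtained from a careful description of the cells and nontrivial steps of $X(\mcal{A})$.

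First I would pin down the $\sim$-classes. Using reducedness --- in particular conditions (c)--(f) together with the facts that a transition labelled by a proper starter has its target in $\top$ and one labelled by a proper terminator has its source in $\bot$ --- one checks that each class is exactly one of: a singleton $\{q\}$ with $q\in Q$; a singleton $\{e\}$ with $\lambda(e)$ discrete but neither a starter nor a terminator; a pair $\{e,t(e)\}$ with $\lambda(e)$ a proper starter; or a pair $\{e,s(e)\}$ with $\lambda(e)$ a proper terminator. In particular every class contains at most one element of $Q$, so $q\mapsto[q]$ is injective and $\bot_{X(\mcal{A})}$, $\top_{X(\mcal{A})}$ correspond bijectively to $\bot$, $\top$. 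The same bookkeeping shows that the nontrivial steps of $X(\mcal{A})$ are precisely: for $e$ with $\lambda(e)=\ilo SUT$ a proper starter, the upstep $[s(e)]\arrO{U\setminus S}[e]$ (where $[e]=[t(e)]$), which recognizes $\lambda(e)$; for $e$ with $\lambda(e)=\ilo SUT$ a proper terminator, the downstep $[e]\arrI{U\setminus T}[t(e)]$ (where $[e]=[s(e)]$), which recognizes $\lambda(e)$; and for $e$ with $\lambda(e)=\ilo SUT$ neither, the upstep $[s(e)]\arrO{U\setminus S}[e]$ followed by the downstep $[e]\arrI{U\setminus T}[t(e)]$, whose concatenation recognizes $\starter{U}{U\setminus S}*\terminator{U}{U\setminus T}=\ilo SUT=\lambda(e)$. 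Here I use that $\mcal{A}$ has no silent transitions, so every $\lambda(e)$ falls into one of these three cases.

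For $L(\mcal{A})\subseteq L(X(\mcal{A}))$ I would take an accepting path $\alpha=(q_0,e_1,\dots,e_n,q_n)$ in $\mcal{A}$, replace each $e_i$ by the associated step or pair of steps (which runs from $[q_{i-1}]=[s(e_i)]$ to $[q_i]=[t(e_i)]$ and recognizes $\lambda(e_i)$), and concatenate; this yields an accepting path in $X(\mcal{A})$ recognizing $\lambda(e_1)*\dots*\lambda(e_n)=\ev(\alpha)$ (with $n=0$ giving the empty path). For the converse, take an accepting path $\beta$ in $X(\mcal{A})$ and delete its trivial steps, which neither move nor contribute to the label; each remaining step is then one of those listed above. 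The crucial point is that a cell $\{e\}$ with $\lambda(e)$ neither a starter nor a terminator is not initial, not accepting, has $[s(e)]\arrO{U\setminus S}[e]$ as its only incoming nontrivial step and $[e]\arrI{U\setminus T}[t(e)]$ as its only outgoing nontrivial step; hence $\beta$ cannot start or end there and enters and leaves such a cell only through that fixed up-/downstep pair. Therefore $\beta$ decomposes uniquely into consecutive blocks, one per transition $e_1,\dots,e_k$, and $e_1\dots e_k$ spells an accepting path in $\mcal{A}$: it is valid because successive blocks meet at a cell $[t(e_j)]=[s(e_{j+1})]$ which, by injectivity of $q\mapsto[q]$, forces $t(e_j)=s(e_{j+1})$; it is accepting for the same reason at the endpoints; and it recognizes $\lambda(e_1)*\dots*\lambda(e_k)=\ev(\beta)$.

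The computational parts --- the gluing identity $\starter{U}{U\setminus S}*\terminator{U}{U\setminus T}=\ilo SUT$ and the matching of endpoints with $\bot,\top$ --- are immediate from the definitions. The part I expect to be the real work is the case analysis pinning down the $\sim$-classes and, equivalently, the claim that the cells $\{e\}$ with $\lambda(e)$ neither a starter nor a terminator are ``pass-through'' cells with a unique incoming and a unique outgoing nontrivial step; this is exactly where all of reducedness conditions (a)--(f) are used, and it underlies both the injectivity of $q\mapsto[q]$ and the block decomposition. A more economical route, at the cost of invoking earlier machinery, is to note that $L(X(\mcal{A}))=L(\ST(X(\mcal{A})))$ by the path correspondence of Section~\ref{se:rel}, and that $\ST(X(\mcal{A}))$ is, up to removing silent self-loops (Lemma~\ref{le:nosilent}) and renaming states along $q\mapsto[q]$, just $\mcal{A}$ with every transition whose label is neither a starter nor a terminator split into a proper starter followed by a proper terminator --- a special case of the construction in the proof of Lemma~\ref{l:PtoST}, which preserves languages.
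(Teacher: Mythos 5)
Your proposal is correct and takes essentially the same route as the paper: translate each transition $e_i$ into its up-/downstep pair (one of which degenerates when $\lambda(e_i)$ is a starter or terminator) for one inclusion, and argue that by construction these are the only accepting paths for the other. The paper's converse direction is a one-line ``by construction''; your class analysis and block decomposition merely spell out that claim (and partly duplicate the preceding lemma's case analysis), so no new ideas are involved.
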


\begin{proof}
  Any accepting path $(q_0,e_1,q_1,\dotsc,q_n)$ in $\mcal{A}$ translates to a path
  \begin{equation*}
    (q_0\arrO{A_1}e_1\arrI{B_1}q_1\arrO{A_2}\dotsm \arrI{B_n} q_n)
  \end{equation*}
  where $A_k=U\setminus S$, $B_k=U\setminus T$ for $\ev(e_k)=\ilo SUT$. (The first upstep becomes trivial if $\lambda(e_1)$ is a terminator; the last downstep, if $\lambda(e_n)$ is a terminator.)
  Conversely, by construction these are the only accepting paths in $X(\mcal{A})$.
\end{proof}

In order to convert $\mcal{A}$ to a \emph{strict} pHDA instead, we introduce a new variant of HDAs,
similar but incomparable to iHDAs.

\begin{definition}
  The \emph{cone precubical category} $\csq$ has objects $\isq$,
  and morphisms are generated by
  \begin{itemize}
  \item $d^0_A:\ilo{S}{(U\setminus A)}{U\setminus A}\to \ilo SUT$ for $\emptyset\neq A\subseteq U\setminus S$,
  \item $d^1_B:\ilo{U\setminus B}{(U\setminus B)}{T}\to \ilo SUT$ for $\emptyset\neq B\subseteq U\setminus T$
  \end{itemize}
  together with iconclist isomorphisms on each object.
  Composition is defined as in $\Box$.
\end{definition}

A \emph{cone HDA} is a presheaf $X: \csq^\op\to \Set$ together with start and accept cells $\bot, \top\subseteq X$.
The intuition is that $X$ has a set of special \emph{cone cells} $x$
which only have cone-shaped faces $\delta_A^0(x)$ and $\delta_B^1(x)$,
whereas all other cells form part of such cone.
As an example, Fig.~\ref{fig:cone_hda} shows a cone HDA with one cone cell $y_{ab, b}^*$.

\begin{lemma}
  Every HDA is a cone HDA, and every cone HDA is a strict partial HDA.
\end{lemma}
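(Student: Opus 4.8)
The plan is to handle both halves by change of base along the forgetful functor $\csq\to\sq$ that sends an iconclist $\ilo{S}{U}{T}$ to its underlying conclist $U$, the generator $d^0_A$ to the conclist inclusion $d_{A,\emptyset}$, and $d^1_B$ to $d_{\emptyset,B}$. First I would record the normal form of morphisms in $\csq$: no proper lower generator $d^0_A$ and proper upper generator $d^1_B$ are composable in either order---the target of a proper $d^0_A$ has its source interface strictly below its carrier, whereas the source of a proper $d^1_B$ has source interface equal to its carrier, and dually---while a chain of lower generators collapses to a single $d^0_{A\cup A'}$ and likewise for upper ones. Up to isomorphism, every $\csq$-morphism is thus an identity, a single $d^0_A$, or a single $d^1_B$, and the assignment above visibly respects composition, so it is a functor. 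Precomposing an HDA $X\colon\sq^\op\to\Set$ with the induced functor $\csq^\op\to\sq^\op$ yields a presheaf $\hat X$ on $\csq$ with $\hat X[\ilo{S}{U}{T}]=X[U]$ and $\hat X[d^0_A]=\delta_A^0$, $\hat X[d^1_B]=\delta_B^1$; taking for $\bot$ and $\top$ the copies of the original initial and accepting cells sitting at the iconclists $\ilo{U}{U}{U}$ makes $(\hat X,\bot,\top)$ a cone HDA, which settles the first half.

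For the second half I would invoke the generic pushforward recalled above for the iHDA-to-spHDA translation, applied to the forgetful functor $\csq^\op\to\sq^\op$. This functor is almost a discrete op-fibration: given $\ilo{S}{U}{T}$ and a conclist inclusion $d_{A,B}\colon U\setminus(A\cup B)\to U$, a $\csq$-morphism over it starting at $\ilo{S}{U}{T}$ exists precisely when $B=\emptyset$ and $A\subseteq U\setminus S$ (then it is $d^0_A$, into $\ilo{S}{U\setminus A}{U\setminus A}$), or $A=\emptyset$ and $B\subseteq U\setminus T$ (then it is $d^1_B$), or $A=B=\emptyset$, and it is then unique. The construction therefore produces a $\Set_*$-valued presheaf $\tilde X$ on $\sq$ with $\tilde X[U]=\coprod_{S,T\subseteq U}X[\ilo{S}{U}{T}]$ whose only defined face maps are the lower faces $\delta_A^0$, defined at $x\in X[\ilo{S}{U}{T}]$ exactly for $A\subseteq U\setminus S$, and the upper faces $\delta_B^1$, defined for $B\subseteq U\setminus T$, both inherited from the presheaf $X$. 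Since $\tilde X$ has exactly the cells of $X$, the initial and accepting cells carry over unchanged.

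The real work---and the main obstacle---is to verify that $\tilde X$ satisfies the strict precubical identity \eqref{eq:pcid} and not merely its lax form \eqref{eq:lpcid}; this is where the cone shape of $\csq$ is used. I would argue by cases on $\delta_{C,D}\circ\delta_{A,B}$ with $(C\cup D)\cap(A\cup B)=\emptyset$. If $B=D=\emptyset$, then $\delta_A^0$ is defined at $x\in X[\ilo{S}{U}{T}]$ iff $A\subseteq U\setminus S$, and $\delta_C^0$ is defined at $\delta_A^0(x)$ iff $C\subseteq (U\setminus A)\setminus S$, so $\delta_C^0\circ\delta_A^0$ and $\delta_{A\cup C}^0$ have the same domain of definition, and on it they agree because $d^0_C\circ d^0_A=d^0_{A\cup C}$ in $\csq$; the case $A=C=\emptyset$ is symmetric. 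In every remaining case $\delta_{A\cup C,B\cup D}$ is undefined---a mixed conclist inclusion $d_{A',B'}$ with $A',B'$ both nonempty has no lift in $\csq$---and the composite is undefined too: a mixed $\delta_{A,B}$ is already undefined at $x$, while after a proper lower face the resulting cell lies over an iconclist whose terminating interface is the whole carrier, so every proper upper face on it is undefined, and dually after a proper upper face. Hence \eqref{eq:pcid} holds on the nose, $\tilde X$ is a strict partial precubical set, and $(\tilde X,\bot,\top)$ is a strict partial HDA. The remaining verifications---compatibility with isomorphisms, and that cone HDAs and strict pHDAs impose no constraint on $\bot,\top$---are routine.
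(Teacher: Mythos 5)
Your proposal is correct and follows the paper's own route: the paper's proof simply states that the translations are the ones from HDAs to iHDAs (change of base along the forgetful functor) and from iHDAs to spHDAs (the pushforward along the almost-discrete-op-fibration) from Section~\ref{se:rel}, instantiated for $\csq\to\sq$. You merely spell out what the paper leaves implicit, in particular the normal form of $\csq$-morphisms and the case analysis showing that the pushforward satisfies the strict identity \eqref{eq:pcid}, both of which check out.
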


\begin{proof}
  The translations are the same as the ones from HDAs to iHDAs and from iHDAs to spHDAs in Section~\ref{se:rel}.
\end{proof}

To separate the classes, we provide several examples.
Figure~\ref{fig:cone_hda} shows a cone HDA which is not an HDA neither an iHDA. In an iHDA, when a corner does not exist, then at least one of the events of the square cannot be unstarted or terminated. Hence (at least) one of the bottom and right edges should not exist for this example to be an iHDA.
In Fig.~\ref{fig:diff_ihda_sphda} we have an spHDA which is not a cone HDA: in a cone HDA, when several events of a cell can be unstarted, then their union can be unstarted, which is not the case here as the lower left corner does not exist.
Finally, Fig.~\ref{fig:diff_hda_ihda} shows an iHDA which is not a cone HDA: in a cone HDA, a lower face cannot have an upper face, but here the lower right corner is an upper face of the lower edge.

Now let $\csq^{\ilo AUB}$ denote the presheaf represented by the cone cell $\ilo AUB$. 
Define the following cells of $\csq^{\ilo AUB}$:
\begin{itemize}
\item the middle cell $w^*_{\ilo AUB}=\id_{\ilo AUB}\in Y(\ilo AUB)[\ilo AUB]$,
\item for $A\neq U$, the initial cell $w^0_{\ilo AUB}=\delta^0_{U\setminus A}(w^*_{\ilo AUB})\in{}$
  $\smash{\csq^{\ilo AUB}[\ilo AAA]}$,
\item for $B\neq U$, the final cell $w^1_{\ilo AUB}=\delta^1_{U\setminus B}(w^*_{\ilo AUB})\in{}$
  $\smash{\csq^{\ilo AUB}[\ilo BBB]}$.
\end{itemize}
Note that the only cell of $\csq^{\ilo UUU}$ is the middle cell $w^*_{\ilo UUU}$.

\begin{figure}
\centering
\begin{tikzpicture}[x=1cm, y=1cm]
    \path[fill=black!10!white] (0,0) -- (2,0) -- (2,2) -- (0,2) -- (0,0);
    \node () at (1,1) {$y^*_{ab,b}$};
    \node[state, rectangle]  (00) at (0,0) {};
    \node[] (01) at (0,2) {};
    \node[] (10) at (2,0) {};
    \node[] (11) at (2,2) {};
    \path[draw] (00) edge node[below]  {$a$} (10);
    \path[draw] (10) -- (11);
    \path[draw] (00) edge node[left]  {$b$} (01);
    \node[below left] at (00) {$y_\bot^{ab,b}$};
    \node[right] at (2,1) {$y^\top_{ab,b}$};
      \node[state,circle] (a00) at (-4,0) {$\varepsilon$};
      \node[state,circle] (a1h) at (-2,1) {$b$};
       \path[draw,bend left=20] (a00) edge node[above left]  {\tiny $\loset{\phantom{\ibullet} a \phantom{\ibullet} \\ \phantom{\ibullet} b \bullet}$ } (a1h);
  \end{tikzpicture}
  \caption{A gST-transition and its cone HDA
    $Y(\tiny \loset{a\pibullet \\ b \ibullet} )$.}
  \label{fig:cone_hda}
\end{figure}

Informally, the cone HDA $X_{\mcal{A}}$ corresponding to the reduced gST-automaton $\mcal{A}$ is obtained by replacing every transition $e$ of $\mcal{A}$ with a cone cell $\csq^{\lambda(e)}$ while preserving its states,  see Fig.~\ref{fig:cone_hda}.

In order to give a precise construction, define
\begin{itemize}
\item $E_0=\{e\in E\mid S_{\lambda(e)}=\lambda(e)\}$;
\item $E_1=\{e\in E\mid T_{\lambda(e)}=\lambda(e)\}$;
\item $E_*=E\setminus(E_0\cup E_1)$;
\item $Q_0=s(E_0)$, $Q_1=t(E_1)$, $Q_*=Q\setminus(Q_0\cup Q_1)$.
\end{itemize}
Note that $Q_0\subseteq \bot$ and $Q_1\subseteq \top$ by conditions (c) and (d) and $E_0\cap E_1=\emptyset$ because there are no silent transitions.

For $e\in E$ and $q\in Q_*$ denote $Y_e=\csq^{\lambda(e)}$ and $Z_q=\csq^{\id_{\mu(q)}}$.
Let $y_e^*$ be the middle cell of $Y_e$ and $z_q$ the unique cell of $Z_q$.
For $e\in E\setminus E_0$ let $y_e^0\in Y_e$ be the initial cell, and for $e\in E\setminus E_1$ let $y^1_e\in Y_e$ be the final cell. 
Define $X_{\mcal{A}}$ as the quotient
\begin{equation*}
    \coprod_{e\in E} Y_e\sqcup \coprod_{q\in Q_*} Z_q/\sim
\end{equation*}
by relations $y^0_e\sim z_{s(e)}$ for every $e\in E\setminus E_0$ and $y^1_e\sim z_{t(e)}$ for $e\in E\setminus E_1$.
The start and accept cells of $X_{\mcal{A}}$ are defined by 
\begin{gather*}
	\bot_{X_{\mcal{A}}}=\{z_{q}\mid q\in \bot_{\mcal{A}}\cap Q_*\}\cup \{y^*_e\mid e\in E_0\},\\
	\top_{X_{\mcal{A}}}=\{z_{q}\mid q\in \top_{\mcal{A}}\cap Q_*\}\cup \{y^*_e\mid e\in E_1\}.
\end{gather*}

Let $j_e:\csq^{\lambda(e)}=Y_e\to X$ denote the canonical injections. For $e\in E$ define a path
\[
	\beta_e
	=
	\begin{cases}
		(q^0_e \arrO{\lambda(e)\setminus S_{\lambda(e)}} q^*_e \arrI{\lambda(e)\setminus T_{\lambda(e)}} q^1_e) & \text{for $e\in E_*$}\\
		(q^*_e \arrI{\lambda(e)\setminus T_{\lambda(e)}} q^1_e) & \text{for $e\in E_0$}\\
		(q^0_e \arrO{\lambda(e)\setminus S_{\lambda(e)}} q^1_e) & \text{for $e\in E_1$}
	\end{cases}
\]

\begin{lemma}
\label{l:FacesInXA}
	Let $x$ be a cell of $X_{\mcal A}$.	
	If $x$ is both a lower and an upper face (of possibly two distinct cells) then $x=z_q$ for some $q\in Q$.
	If $x$ has both a lower and an upper face, then $x=y^*_e$ for $e\in E_*$. 
\end{lemma}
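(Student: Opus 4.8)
The plan is to reduce everything to an explicit description of the representable cone presheaves out of which $X_{\mcal A}$ is glued. First I would pin down $Y_e=\csq^{\lambda(e)}$, writing $\lambda(e)=\ilo SUT$: up to isomorphism, the only morphisms of $\csq$ into $\ilo SUT$ are $\id$, the lower faces $d^0_A$ ($\emptyset\neq A\subseteq U\setminus S$) and the upper faces $d^1_B$ ($\emptyset\neq B\subseteq U\setminus T$), because a $d^0$-generator composes only with $d^0$-generators and a $d^1$-generator only with $d^1$-generators --- the domain of $d^0_A$ has its right interface equal to its whole underlying set, which leaves no room for a nontrivial upper face, and dually. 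Hence $Y_e$ consists of the \emph{middle cell} $y^*_e$ in degree $\ilo SUT$, a \emph{lower cone} of cells $\delta^0_A(y^*_e)$ in degrees $\ilo{S}{U\setminus A}{U\setminus A}$ (bottoming out at $y^0_e$, provided $S\neq U$), and an \emph{upper cone} $\delta^1_B(y^*_e)$ in degrees $\ilo{U\setminus B}{U\setminus B}{T}$ (ending at $y^1_e$, provided $T\neq U$) --- and nothing else. The crucial consequences are: a lower-cone cell has only lower faces, an upper-cone cell has only upper faces, and $y^*_e$ is the unique cell of $Y_e$ possessing both a lower and an upper face --- and it has a lower one iff $S\neq U$ (i.e.\ $e\notin E_0$) and an upper one iff $T\neq U$ (i.e.\ $e\notin E_1$). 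The one-cell presheaf $\csq^{\id_{\mu(q)}}$ contributes only $z_q$, which has no nontrivial face.

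Next I would note that the congruence $\sim$ defining $X_{\mcal A}$ relates only the cells $y^0_e$, $y^1_e$ and $z_q$, so every middle cell and every interior cone cell stands alone in its $\sim$-class; and since the $z_q$ have identity degree, any nontrivial application of a face map in $X_{\mcal A}$ is actually computed inside some $Y_e$ and returns a lower-cone, respectively upper-cone, cell of that $Y_e$.

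For the first statement, if $x$ is both a lower face of some cell and an upper face of some cell, then $x$ is represented by a lower-cone cell of some $Y_e$ and also by an upper-cone cell of some $Y_{e'}$; these are distinct elements of $\coprod_{e}Y_e$ (for fixed $e$ the nontrivial $d^0$- and $d^1$-maps differ, and distinct indices lie in disjoint summands), so they can be identified only through $\sim$. As $\sim$ touches only $y^0$-, $y^1$- and $z$-cells, this forces the lower-cone cell to be $y^0_e$ and the upper-cone cell to be $y^1_{e'}$, with $z_{s(e)}$ and $z_{t(e')}$ in the same class; hence $s(e)=t(e')=:q\in Q_*$ and $x=z_q$. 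For the second statement, if $x$ has both a nontrivial lower and a nontrivial upper face, then its degree $\ilo{S'}{U'}{T'}$ satisfies $S'\subsetneq U'$, so $x$ cannot sit in any $Z_q$; inside its $Y_e$ the only cells of such degree are the middle cell $y^*_e$ (when $e\notin E_0$) and the interior lower-cone cells, and the latter have no upper face and stand alone in their $\sim$-classes --- contradiction. So $x=y^*_e$, and possessing both a lower and an upper face then forces $e\notin E_0$ and $e\notin E_1$, i.e.\ $e\in E_*$.

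The one genuinely delicate point is the first step: showing that the generators of $\csq$ never ``compose across the corner'', so that each representable cone presheaf is exactly a middle cell with a lower cone and an upper cone attached at its tips. Once that structure is established --- and the observation that $\sim$ meets these presheaves only at those tips --- both assertions are routine bookkeeping with degrees and $\sim$-classes.
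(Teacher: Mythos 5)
Your proof is correct and follows essentially the same route as the paper's: an explicit description of the representable cone presheaves (middle cell plus disjoint lower and upper cones), the observation that the gluing relation $\sim$ only touches the tip cells $y^0_e$, $y^1_e$, $z_q$, and then degree bookkeeping to locate $x$. You spell out in detail the structural facts about $\csq$ that the paper's two-line argument leaves implicit, but the underlying idea is identical.
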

\begin{proof}
	In the first case, we have $x\in X_{\mcal A}[\ilo SUT]$ for $S\neq U$ and $T\neq U$. Every such cell has the form $y^*_e$ for $e\in E_*$.
	To prove the second statement, note that no cell in $Y_e$ (and in $Z_q$) is both a lower and an upper face of another cell. Thus, $x$ must be represented be a non-trivial abstraction class, and hence, by $z_q$ for $q\in Q_*$.
\end{proof}

\begin{lemma}
\label{l:PathsInXA}
	Every non-constant sparse accepting path $\alpha$ in $X_{\mcal A}$ has the form $\beta_{e_1}*\dotsc * \beta_{e_n}$ for $e_i\in E$, $t(e_i)=s(e_{i+1})$, $s(e_1)\in \bot$ and $t(e_n)\in\top$. If $\alpha$ is constant, then $\alpha=(z_q)$ for $q\in \bot\cap \top$.
\end{lemma}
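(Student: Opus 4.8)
The plan is to induct on the length of $\alpha$, peeling off the blocks $\beta_{e_i}$ one at a time; the key inputs are Lemma~\ref{l:FacesInXA} and the explicit generators of $\csq$, which together constrain the possible up- and downsteps in $X_{\mcal A}$. First I would record the local structure. Call a cell of $X_{\mcal A}$ \emph{intermediate} if it is a proper face $\delta^0_A(y^*_e)$ resp.\ $\delta^1_B(y^*_e)$ of a middle cell with $A$ resp.\ $B$ a nonempty \emph{proper} subset of $\lambda(e)\setminus S_{\lambda(e)}$ resp.\ $\lambda(e)\setminus T_{\lambda(e)}$; then every cell is a middle cell $y^*_e$, a cell $z_q$, or intermediate. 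From the generators of $\csq$ one reads off: a middle cell is never a face of any cell; a cell $z_q$ has neither a lower nor an upper face; an intermediate cell $\delta^0_A(y^*_e)$ is a lower face of some cell but is never an upper face of any cell, and dually $\delta^1_B(y^*_e)$ is an upper face of some cell but never a lower face; and, by Lemma~\ref{l:FacesInXA}, the only cells possessing \emph{both} a lower and an upper face are the $y^*_e$ with $e\in E_*$. Moreover no intermediate cell lies in $\bot_{X_{\mcal A}}$ or $\top_{X_{\mcal A}}$, $z_q\in\bot_{X_{\mcal A}}$ (resp.\ $\top_{X_{\mcal A}}$) forces $q\in\bot$ (resp.\ $q\in\top$), and $y^*_e\in\bot_{X_{\mcal A}}$ (resp.\ $\top_{X_{\mcal A}}$) iff $e\in E_0$ (resp.\ $e\in E_1$). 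The crucial consequence, to be established before the main induction, is that an intermediate cell never occurs in a sparse accepting path: it is neither initial nor accepting, so cannot be an endpoint, and it carries only a lower face or only an upper face, so an alternating path reaching it could neither continue past it nor stop there.

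The constant case is then immediate: $x\in\bot_{X_{\mcal A}}\cap\top_{X_{\mcal A}}$ cannot be any $y^*_e$ (as $E_0\cap E_1=\emptyset$), hence is some $z_q$ with $q\in\bot\cap\top$. For a non-constant $\alpha$, discard its trivial steps (which changes neither endpoints nor the recognised ipomset), so that all face indices are nonempty, and peel off the first block. The initial cell $x_0$ is $z_q$ or $y^*_e$ with $e\in E_0$; since $z_q$ has no upper face, and $y^*_e$ ($e\in E_0$) has no lower face and is never a face, the first step is either an upstep out of a $z_q$ or a downstep out of such a $y^*_e$. In the downstep case, its target is an upper face of $y^*_e$, hence equals $y^1_e=z_{t(e)}$ (the intermediate alternative being excluded), so the first block is $\beta_e$ for $e\in E_0$. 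In the upstep case, its target has $z_q$ as a lower face, hence equals $y^*_e$ for some $e\in s^{-1}(q)\setminus E_0$, with index $\lambda(e)\setminus S_{\lambda(e)}$ (the intermediate alternative being excluded; the index is forced since otherwise $x_0$ would be intermediate, not initial); if $e\in E_1$ then $y^*_e$ is accepting and has no upper face, so $\alpha$ stops here with the single block $\beta_e$, $e\in E_1$; if $e\in E_*$ then $y^*_e$ is not accepting, so $\alpha$ continues with a downstep, which by the same reasoning lands on $y^1_e=z_{t(e)}$ via index $\lambda(e)\setminus T_{\lambda(e)}$, giving the first block $\beta_e$, $e\in E_*$.

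In every case the first segment of $\alpha$ is some $\beta_{e_1}$, and either $\alpha$ has stopped at a cell of $\top_{X_{\mcal A}}$, or it continues from $\tgt(\beta_{e_1})=z_{t(e_1)}$ with an upstep --- which is exactly the ``upstep out of a $z_q$'' situation just analysed. So by induction on length the remaining path equals $\beta_{e_2}*\dotsm*\beta_{e_n}$, and hence $\alpha=\beta_{e_1}*\dotsm*\beta_{e_n}$. The conditions on the $e_i$ now follow by matching sources and targets along the concatenation: if $e_i\in E_1$ then $\tgt(\beta_{e_i})=y^*_{e_i}$ is not a $z$-cell, and if $e_{i+1}\in E_0$ then $\src(\beta_{e_{i+1}})=y^*_{e_{i+1}}$ is not a $z$-cell, so (using $E_0\cap E_1=\emptyset$) $e_i\in E_1$ forces $i=n$ and $e_{i+1}\in E_0$ forces $i+1=1$; for $1\le i<n$ both $\tgt(\beta_{e_i})$ and $\src(\beta_{e_{i+1}})$ are $z$-cells and their equality gives $t(e_i)=s(e_{i+1})$; and $\src(\alpha)=\src(\beta_{e_1})\in\bot_{X_{\mcal A}}$ gives $s(e_1)\in\bot$ (directly, or via condition~(d) when $e_1\in E_0$), while $\tgt(\alpha)=\tgt(\beta_{e_n})\in\top_{X_{\mcal A}}$ gives $t(e_n)\in\top$ (directly, or via condition~(c) when $e_n\in E_1$).

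The main obstacle is the preliminary claim that intermediate cells never occur in a sparse accepting path; getting it right is where the whole apparatus is used. The exact generating set of $\csq$ is needed to see that $\delta^0$- and $\delta^1$-intermediate cells are ``one-sided'' (carrying only a lower, resp.\ only an upper, face), and Lemma~\ref{l:FacesInXA} --- which itself rests on the reducedness conditions (a)--(f) --- is needed to confine the ``turning'' cells to the $y^*_e$ with $e\in E_*$, so that once a sparse path makes an up--down turn it has nowhere to go but along a single $\beta_e$. With that claim and the $\bot/\top$ bookkeeping in hand, the induction itself is routine.
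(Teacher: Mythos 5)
Your proof is correct and follows essentially the same route as the paper's: both use Lemma~\ref{l:FacesInXA} (plus the explicit face structure of the representables $\csq^{\ilo AUB}$) to force a sparse accepting path to alternate through $z_q$-cells and middle cells, with the only possible leading/trailing irregularities being the one-step blocks $\beta_e$ for $e\in E_0$ resp.\ $e\in E_1$. The paper states this decomposition in two sentences; you supply the details (exclusion of intermediate cells, the endpoint bookkeeping via conditions (c) and (d)) that the paper leaves implicit, and they check out.
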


\begin{proof}
	In light of Lemma \ref{l:FacesInXA}, 
	\[
		\alpha=\eta* (z_{q_0}\arrO{} y_{e_1} \arrI{} z_{q_1}\arrO{}\dotsm \arrO{} y_{e_k}\arrI{} z_{q_k})*\omega,
	\]
	where $\eta$ is either constant or an downstep and $\omega$ is constant or an upstep. Clearly, $s(e_i)=q_i$ and $t(e_i)=q_{i+1}$ and then $\alpha=\eta*\beta_{e_1}*\dotsm*\beta_{e_k}*\omega$. If $\eta$ is a downstep, then it must be $(y^e_*\arrI{} z_{t(e)})=\beta_e$ for $e\in E_0$: these are the only start cells that have upper faces. Similarly, $\omega$ is either constant or $\beta_e$ for $e\in E_1$.
\end{proof}

\begin{lemma}
  $L(X_{\mcal{A}})=L(\mcal{A})$.
\end{lemma}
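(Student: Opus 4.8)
The plan is to establish the two inclusions by matching accepting paths in $\mcal{A}$ with accepting paths in $X_{\mcal{A}}$, using Lemma~\ref{l:PathsInXA} as the workhorse for the harder direction. First I would handle $L(\mcal{A}) \subseteq L(X_{\mcal{A}})$: given an accepting path $\alpha = (q_0, e_1, q_1, \dotsc, e_n, q_n)$ in $\mcal{A}$, I form the concatenation $\beta_{e_1} * \dotsm * \beta_{e_n}$ of the local paths defined just before Lemma~\ref{l:FacesInXA}. This is well-defined because $t(e_i) = s(e_{i+1})$ implies the endpoint cells match (the shared state $q_i$ is either $z_{q_i}$ when $q_i \in Q_*$, or coincides with $y^0_{e_{i+1}}$ resp.\ $y^1_{e_i}$ under the quotient relations $\sim$ when $q_i \in Q_0 \cup Q_1$). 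Since $\ev(\beta_e) = \lambda(e)$ by construction of the starters/terminators in $\beta_e$, and since $s(e_1) \in \bot$ and $t(e_n) \in \top$ translate to the start/accept cells of $X_{\mcal{A}}$ as specified, this is an accepting path recognizing $\lambda(e_1) * \dotsm * \lambda(e_n) = \ev(\alpha)$. The constant case $\alpha = (q)$ with $q \in \bot \cap \top$ maps to $(z_q)$, which is both a start and accept cell since $q \in Q_*$ by conditions (a),(b).

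For the reverse inclusion $L(X_{\mcal{A}}) \subseteq L(\mcal{A})$, I would take an accepting path in $X_{\mcal{A}}$, invoke Lemma~\ref{le:sparse_existence} to replace it by a language-equivalent sparse path $\alpha$, and then apply Lemma~\ref{l:PathsInXA}: either $\alpha$ is constant, equal to $(z_q)$ for $q \in \bot \cap \top$, yielding the constant accepting path $(q)$ in $\mcal{A}$; or $\alpha = \beta_{e_1} * \dotsm * \beta_{e_n}$ with $t(e_i) = s(e_{i+1})$, $s(e_1) \in \bot$, $t(e_n) \in \top$. In the latter case $(s(e_1), e_1, t(e_1), \dotsc, e_n, t(e_n))$ is an accepting path in $\mcal{A}$, and reading off $\ev(\alpha) = \ev(\beta_{e_1}) * \dotsm * \ev(\beta_{e_n}) = \lambda(e_1) * \dotsm * \lambda(e_n)$ gives exactly $\ev$ of that path. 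So both inclusions hold.

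The main obstacle is verifying that $\ev(\beta_e) = \lambda(e)$ cleanly in all three cases of the definition of $\beta_e$, and that the gluing of these $\ev(\beta_e)$'s along the shared states reproduces $\lambda(e_1) * \dotsm * \lambda(e_n)$ rather than something merely subsumption-related. Concretely: for $e \in E_*$, $\beta_e$ is an upstep $\arrO{\lambda(e)\setminus S_{\lambda(e)}}$ followed by a downstep $\arrI{\lambda(e)\setminus T_{\lambda(e)}}$, whose recognized ipomset is $\starter{\lambda(e)}{\lambda(e)\setminus S_{\lambda(e)}} * \terminator{\lambda(e)}{\lambda(e)\setminus T_{\lambda(e)}}$; I need this to equal $\lambda(e)$, which holds because starting the events of $\lambda(e) \setminus S_{\lambda(e)}$ and then terminating those of $\lambda(e) \setminus T_{\lambda(e)}$ on a discrete ipomset recomposes it (this is the standard two-step decomposition of a discrete ipomset of a gST-transition). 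For $e \in E_0$, $\lambda(e)$ is a pure starter, $S_{\lambda(e)} = \lambda(e)$, so the upstep is trivial ($\id$) and only the terminator remains, giving $\lambda(e)$; symmetrically for $E_1$. The only subtlety requiring care is that a trivial (identity) upstep or downstep contributes an identity ipomset which is a unit for gluing, so it does not disturb the concatenation — this is where I would lean on the fact that $\mcal{A}$ has no silent transitions (so $E_0 \cap E_1 = \emptyset$ and no $\lambda(e)$ is itself an identity), ensuring each $\beta_e$ contributes exactly $\lambda(e)$ and the total is the intended gluing.
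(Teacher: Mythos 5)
Your proposal is correct and follows essentially the same route as the paper: the forward inclusion via the explicit accepting path $\beta_{e_1}*\dotsm*\beta_{e_n}$ with $\ev(\beta_e)=\lambda(e)$, and the reverse inclusion via Lemma~\ref{l:PathsInXA}. The extra details you supply (reducing to a sparse path via Lemma~\ref{le:sparse_existence} before invoking Lemma~\ref{l:PathsInXA}, the constant-path case, and the verification that $\ev(\beta_e)=\lambda(e)$ in all three cases) are all correct and merely make explicit what the paper leaves implicit.
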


\begin{proof}
  We have $\ev(\beta_e)=\lambda(e)$ for all $e\in E$. Thus Lemma \ref{l:PathsInXA} implies that $L(X_{\mcal{A}})\subseteq L(\mcal{A})$. To prove the converse, note that 
  if $\alpha=(q_0,e_1,q_1,e_2,\dotsc,q_n)$ is an accepting path in $\mcal{A}$, then $\beta=\beta_{e_1}*\dotsc*\beta_{e_n}$ is an accepting path in $X$ with
  $\ev(\beta)=\ev(\beta_{e_1})*\dotsm*(\beta_{e_n})=\ev(e_1)*\dotsm*(e_n)=\ev(\alpha)$.
\end{proof}

The construction of $X_{\mcal{A}}$ and the proof of its correctness does not require conditions (e)--(f): multiple transitions outcoming from a start cell become separated by the construction itself.

\section{A Kleene Theorem for Partial HDAs}

With the machinery developed in the previous section we are now able to give a simple proof for a Kleene-type theorem for (strict) pHDAs.
The proof proceeds along the lines of the standard Thompson algorithm, which should be a pleasant surprise
given the complexity of the proof of the Kleene theorem for HDAs (and iHDAs) in \cite{DBLP:journals/lmcs/FahrenbergJSZ24}.

The \emph{rational operations} on sets $L, M\subseteq \iPoms$ are
$L\cup M$,
$L M=\{ P*Q\mid P\in L, Q\in M, T_P\simeq S_Q\}$, and
$L^+=\bigcup_{n\ge 1} L^n$.
Compared to \cite{DBLP:journals/lmcs/FahrenbergJSZ24},
the definition of $L M$ does not include closure under subsumption,
given that pHDA languages are not subsumption-closed.

Let $\iST$ denote the set of starters and terminators.
The \emph{basic} ipomset languages are $\emptyset$ and $\{P\}$ for all $P\in \iST$.
Note that this is an infinite set.
The \emph{rational languages} are the smallest class of languages generated by the basic languages under the rational operations.

Again compared to \cite{DBLP:journals/lmcs/FahrenbergJSZ24},
note that we do not include parallel composition $\|$.
It comes at the price of expanding the set of basic ipomsets, going from singletons to any iconclist.

\begin{theorem}
  A set of ipomsets $L$ is rational iff there is a pHDA $X$ such that $L=L(X)$.
\end{theorem}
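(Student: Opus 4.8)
The plan is to route everything through P-automata and transfer to pHDAs at the very end. By Lemma~\ref{l:PtoST} and the Proposition above, every P-automaton has a language-equivalent reduced gST-automaton, and the construction $X(\mcal{A})$ turns the latter into a pHDA recognising the same language; conversely, for any pHDA $X$ the operational semantics $\ST(X)$ is an ST-automaton, hence a P-automaton, with $L(\ST(X))=L(X)$. So it suffices to prove that a set of ipomsets is rational iff it is the language of some P-automaton. (Using $X_{\mcal{A}}$ in place of $X(\mcal{A})$ one even gets the sharper statement with ``pHDA'' replaced by ``strict pHDA''.)

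For the direction ``rational $\Rightarrow$ P-automaton'' I would use the Thompson construction. The empty language is recognised by the automaton with no states; for $P\in\iST$, the language $\{P\}$ is recognised by the two-state ST-automaton with a single transition labelled $P$ from an initial state labelled $S_P$ to an accepting state labelled $T_P$. Closure under $\cup$ is disjoint union. For $LM$, take P-automata $\mcal{A},\mcal{B}$ recognising $L,M$ and build $\mcal{C}$ as their disjoint union together with one fresh silent transition $q\to q'$ for every pair with $q\in\top_{\mcal{A}}$, $q'\in\bot_{\mcal{B}}$ and $\mu(q)=\mu(q')$, keeping $\bot_{\mcal{A}}$ and $\top_{\mcal{B}}$ as initial and accepting states; for $L^{+}$, add such a silent transition to $\mcal{A}$ itself for every $q\in\top_{\mcal{A}}$, $q'\in\bot_{\mcal{A}}$ with $\mu(q)=\mu(q')$, keeping $\bot_{\mcal{A}},\top_{\mcal{A}}$. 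Since the new ``bridge'' transitions only lead from the $\mcal{A}$-part to the $\mcal{B}$-part (resp.\ from $\top_{\mcal{A}}$-states to $\bot_{\mcal{A}}$-states), an accepting path in $\mcal{C}$ uses exactly one bridge and splits as (accepting path of $\mcal{A}$) $\cdot$ bridge $\cdot$ (accepting path of $\mcal{B}$), so that $\ev$ of it is of the form $P*\id_{\mu(q)}*Q=P*Q$ with $P\in L$, $Q\in M$, $T_P\simeq S_Q$, and every such $P*Q$ arises; in the $L^{+}$ case one splits at all bridges and obtains a concatenation $P_1*\dotsm*P_n$ with $P_i\in L$. Constant accepting paths (recognising some $\id_U$) are covered since $\id_U\in\iST$.

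For ``P-automaton $\Rightarrow$ rational'', fix $\mcal{A}=(Q,E,s,t,\lambda,\mu,\bot,\top)$ and run Kleene's state-elimination algorithm. The one extra ingredient is that $\{P\}$ is rational for \emph{every} $P\in\iPoms$, not merely for $P\in\iST$: since $P$ is interval it has a step decomposition $P=P_1*\dotsm*P_n$ into starters and terminators, and then $\{P\}=\{P_1\}\dotsm\{P_n\}$ because the gluing conditions $T_{P_i}\simeq S_{P_{i+1}}$ all hold. Enumerate $Q=\{q_1,\dots,q_m\}$ and let $R^{(k)}_{i,j}\subseteq\iPoms$ be the ipomsets recognised by paths from $q_i$ to $q_j$ whose intermediate states lie in $\{q_1,\dots,q_k\}$. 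Then $R^{(0)}_{i,j}=\bigcup_{e:\,q_i\to q_j}\{\lambda(e)\}$, together with $\{\id_{\mu(q_i)}\}$ when $i=j$, which is rational, and
\[
  R^{(k)}_{i,j}=R^{(k-1)}_{i,j}\;\cup\;R^{(k-1)}_{i,k}R^{(k-1)}_{k,j}\;\cup\;R^{(k-1)}_{i,k}\bigl(R^{(k-1)}_{k,k}\bigr)^{+}R^{(k-1)}_{k,j},
\]
so all $R^{(k)}_{i,j}$ are rational and $L(\mcal{A})=\bigcup_{q_i\in\bot,\,q_j\in\top}R^{(m)}_{i,j}$ is rational.

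The point that needs a little care — and the reason the middle summand is singled out above — is that the concatenation operation on ipomset languages has a unit, namely $\{\id_U\mid U\in\sq\}$, but this unit is an \emph{infinite} union of basic languages and hence not rational, so the usual $L^{*}$ is not available; writing ``zero or more loops'' as ``no loop'' plus ``$L^{+}$'' circumvents this. The remaining effort is routine: tracking source/target interfaces across gluings, handling constant and length-$\le 1$ paths, and checking that the Thompson gadgets are genuinely P-automata so that the translations of Section~\ref{se:rel} deliver the required pHDA.
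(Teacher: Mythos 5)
Your proof is correct and follows essentially the same route as the paper: Thompson-style silent bridge transitions between accepting and initial states for concatenation and Kleene plus (the paper performs this on reduced gST-automata and invokes conditions (a)--(b), where you invoke the absence of back-edges, and both routes then reach pHDAs via the automaton-to-pHDA constructions of the preceding section), and for the converse the classical Kleene theorem, which the paper cites at the word level over the alphabet $\iST$ with the same $x^*\mapsto\epsilon+x^+$ replacement that your three-term state-elimination recursion implements explicitly. The only genuine addition on your side is the observation that $\{P\}$ is rational for every interval ipomset $P$ via its step decomposition, which you need only because you eliminate states directly on a P-automaton instead of first converting to an ST-automaton as the paper does.
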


\begin{proof}
  The proof of the backwards direction is like in \cite{DBLP:journals/lmcs/FahrenbergJSZ24}:
  Convert $X$ to an ST-automaton $\mcal{A}$,
  forget the state labelling
  and add extra initial states to not recognize the empty word,
  then use the standard Kleene theorem to generate a rational (word) expression on $\iST$
  in which any subexpression $x^*$ may be converted to $\epsilon+x^+$
  and which, when concatenation is understood as gluing, generates $L$.

  For the forward direction, the basic languages are recognized by the empty iHDA
  and by the representable objects $\sq^P$ with start cell $Y(S_p)$ and accept cell $Y(T_P)$.
  For the rational operations, we may work with gST-automata.
  First, $L(\mcal{A})\cup L(\mcal{B})=L(\mcal{A}\sqcup \mcal{B})$ (the disjoint union).

  For concatenation, let $\mcal{A}$ and $\mcal{B}$ be reduced gST-automata
  and define $\mcal{A}*\mcal{B}=(Q, E, s, t, \lambda, \mu, \bot, \top)$ as follows:
  \begin{itemize}
  \item $Q=Q_\mcal{A}\cup Q_\mcal{B}$, $\mu=\mu_\mcal{A}\cup \mu_\mcal{B}$, $\bot=\bot_\mcal{A}$, $\top=\top_\mcal{B}$
  \item $E=E_\mcal{A}\cup E_\mcal{B}\cup \{(p^\top, q_\bot)\subseteq \top_\mcal{A}\times \bot_\mcal{B}\mid
    \mu_\mcal{A}(p^\top)\simeq \mu_\mcal{B}(q_\bot)\}$
  \item $s(e)=s_\mcal{A}(e)$ if $e\in E_\mcal{A}$, $s_\mcal{B}(e)$ if $e\in E_\mcal{B}$, and $s((p^\top, q_\bot))=p^\top$ otherwise,
    and similarly for $t$
  \item $\lambda(e)=\lambda_\mcal{A}(e)$ if $e\in E_\mcal{A}$, $\lambda_\mcal{B}(e)$ if $e\in E_\mcal{B}$,
    and $\lambda((p^\top, q_\bot))=\id_{\mu_\mcal{A}(p^\top)}$ otherwise
  \end{itemize}
  Then $L(\mcal{A}*\mcal{B})=L(\mcal{A}) L(\mcal{B})$ because of properties (a) and (b) of reduced gST-automata.

  Finally, for the Kleene plus, let $\mcal{A}$ be a reduced gST-automaton,
  then we define $\mcal{A}^+=(Q, E, s, t, \lambda, \mu, \bot, \top)$ as follows:
  \begin{itemize}
  \item $Q=Q_\mcal{A}$, $\mu=\mu_\mcal{A}$, $\bot=\bot_\mcal{A}$, $\top=\top_\mcal{A}$
  \item $E=E_\mcal{A}\cup \{(p^\top, q_\bot)\subseteq \top_\mcal{A}\times \bot_\mcal{A}\mid
    \mu_\mcal{A}(p^\top)\simeq \mu_\mcal{A}(q_\bot)\}$
  \item $s(e)=s_\mcal{A}(e)$ if $e\in E_\mcal{A}$ and $s((p^\top, q_\bot))=p^\top$ otherwise, similarly for $t$
  \item $\lambda(e)=\lambda_\mcal{A}(e)$ if $e\in E_\mcal{A}$ and $\lambda((p^\top, q_\bot))=\id_{\mu_\mcal{A}(p^\top)}$ otherwise
  \end{itemize}
  Then $L(\mcal{A}^+)=L(\mcal{A})^+$ because of properties (a) and (b).
\end{proof}

\section{Determinization}

In this section we show that every partial HDA can be determinized, that is, there exists  a deterministic pHDA that recognizes the same language. The following definition generalizes \cite[Def.\@ 6.1]{DBLP:journals/fuin/FahrenbergZ24}.

\begin{definition}
	A pHDA $X$ is \emph{deterministic} if
	\begin{enumerate}
	\item for every $U\in\sq$ there exists at most one initial cell in $X[U]$,
	\item for all $A\subseteq V\in \sq$ and $x\in X[V-A]$ there exists at most one cell $y\in X[V]$ such that $x=\delta^0_A(y)$.
	\end{enumerate}
\end{definition}

\begin{figure}[bp]
\centering
\begin{tikzpicture}[x=1cm, y=1cm]
    \path[fill=black!10!white] (0,0) -- (2,0) -- (2,2) -- (0,2) -- (0,0);
    \node () at (1,1) {$x$};
    \node[state, rectangle] (00) at (0,0) {};
    \node[state, rectangle] (01) at (0,2) {};
    \node[state, rectangle] (10) at (2,0) {};
    \node[state, rectangle] (11) at (2,2) {};
    \node[state, rectangle] (20) at (4,0) {};
    \node[state, rectangle] (30) at (6,0) {};
    \path[] (00) edge node[below]  {$a$} (10);
    \path[] (10) edge node[below]  {$b$} (20);
    \path[] (20) edge node[below]  {$c$} (30);
    \path[draw] (01) -- (11);
    \path[draw] (10) -- (11);
    \path[] (00) edge node[left]  {$b$} (01);
    \node[] (bot) at (-0.3,0) {$\bot$};
    \node[] (top1) at (2.3,2) {$\top$};
    \node[] (top2) at (6,0.3) {$\top$};
  \end{tikzpicture}
  \caption{An HDA that cannot be determined as an srHDA.}
  \label{fig:hda_not_determinizable}
\end{figure}


\cite{DBLP:journals/fuin/FahrenbergZ24} shows that for HDAs, a language $L$ is accepted by a deterministic HDA iff it is \emph{swap invariant}: for all $P, Q, P', Q' \in \iPoms$ such that $PP' \in L, QQ' \in L$, and $P \sqsubseteq Q$, also $QP' \in L$.
Figure~\ref{fig:hda_not_determinizable} shows an HDA whose language $L$ is not swap invariant: $(ab)c \in L$, $\loset{a\\b} \in L$, and $ab \sqsubseteq \loset{a \\ b}$, but $\loset{a \\ b} c \not\in L$.
This HDA cannot be determinized as an (i/sp/sr)HDA.
Intuitively, in these four models, a square with a lower left and an upper right corner needs to contain all its four edges.
Then, there would be only one suitable path for $ab$ through these edges,
and extending it to obtain $abc$ would necessarily create a path that also recognizes $\loset{a \\ b} c$.

\begin{figure}
\centering
\begin{tikzpicture}[x=1cm, y=1cm]
    \path[fill=black!10!white] (0,0) -- (2,0) -- (2,2) -- (0,2) -- (0,0);
    \node at (1,1) {$x$};
    \node[state, rectangle] (00) at (0,0) {};
    \node[] (phantom1) at (2,0.4) {};
    \node[state, rectangle] (01) at (0,2) {};
    \node[state, rectangle] (10) at (2,0) {};
    \node[state, rectangle] (11) at (2,2) {};
    \node[state, rectangle] (20) at (4,0) {};
    \node[state, rectangle] (30) at (6,0) {};
    \path[] (00) edge node[below]  {$a$} (10);
    \path[] (10) edge node[below]  {$b$} (20);
    \path[] (20) edge node[below]  {$c$} (30);
    \path[draw] (01) -- (11);
    \path[draw] (phantom1) -- (11);
    \path[] (00) edge node[left]  {$b$} (01);
    \node[] (bot) at (-0.3,0) {$\bot$};
    \node[] (top1) at (2.3,2) {$\top$};
    \node[] (top2) at (6,0.3) {$\top$};
    \node[] (top3) at (4,0.3) {$\top$};
  \end{tikzpicture}
  \caption{Determinization as a pHDA, ``breaking'' an edge.}
  \label{fig:determinization_as_phda1}
\end{figure}



\begin{figure}
\centering
\begin{tikzpicture}[x=.7cm, y=.7cm]
    \path[fill=black!10!white] (0,0) -- (1.4,-1.4) -- (0,-2.8) -- (-1.4,-1.4) -- (0,0);
    \node () at (0.5,-.9) {$\vphantom{b}a$};
    \node () at (-0.5,-.9) {$b$};
    \node[state, rectangle] (00) at (0,0) {};
    \node[state, rectangle] (10) at (2,0) {};
    \node[state, rectangle] (-10) at (-2,0) {};
    \node[state, rectangle] (-20) at (-4,0) {};
    \node[state, rectangle] (20) at (4,0) {};
    \node[state, rectangle] (30) at (6,0) {};
    \node[state, rectangle] (sqbot) at (0,-2.8) {};
    \path[] (00) edge node[below]  {$a$} (10);
    \path[] (10) edge node[below]  {$b$} (20);
    \path[] (20) edge node[below]  {$c$} (30);
    \path[] (00) edge node[below]  {$b$} (-10);
    \path[] (-10) edge node[below]  {$a$} (-20);
    \node[] (bot) at (-0,0.5) {$\bot$};
    \node[] (top2) at (6,0.5) {$\top$};
    \node[] (top3) at (4,0.5) {$\top$};
    \node[] (top4) at (-4,0.5) {$\top$};
    \node[] (top5) at (0,-3.3) {$\top$};
  \end{tikzpicture}
  \caption{Determinization as a pHDA, ``exploded'' to build an alternation of upsteps and downsteps.}
  \label{fig:determinization_as_phda2}
\end{figure}

However, there are several ways to obtain deterministic pHDAs that recognize $L$.
We illustrate two of these in Figures~\ref{fig:determinization_as_phda1} and~\ref{fig:determinization_as_phda2}.
In the first one, we ``break the connection'' between an edge and its lower face, thus removing the only source of non-determinism.
In the second one, we consider that for each ipomset $P$ that is recognized, there is one path that recognizes $P$ that is more important than the others: the sparse path (see Lem.~\ref{le:sparse_existence}).
In the pHDA in Fig~\ref{fig:determinization_as_phda2}, the only \emph{existing} paths are sparse paths.
It is then deterministic, and recognizes $L$.
In the following, we show that such a deconstruction of a pHDA always exists, and that a power-set like construction can be applied to obtain a deterministic pHDA.

Let $\iPoms^q\subseteq \iPoms$ be the set of ipomsets $P$ such that the last element of its sparse decomposition $P=Q_1*\dotsm*Q_m$ is not a starter, that is, either $Q_m$ is a terminator or $m=0$ and $P$ is the identity ipomset.
Further, for $U\in \iPoms$ let $\iPoms^q_U=\{P\in \iPoms^q\mid T_P=U\}$.
The next lemma follows immediately from the uniqueness of sparse decompositions.

\begin{lemma}
  \label{l:IPomsetSplitting}
  Every ipomset $P$ has a unique presentation as $P=P'*\ilo VUU$, where $P'\in\iPoms^q$.
\end{lemma}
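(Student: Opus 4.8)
The plan is to reduce everything to the uniqueness of the sparse step decomposition recalled above. A first, easy observation: in any factorization $P = P' * \ilo{V}{U}{U}$, the terminating interface of the right-hand side is the image of $U$ under the canonical map $j$ into the gluing, so $U \simeq T_P$; thus $U$ is forced to be $T_P$ and all that remains is to pin down $P' \in \iPoms^q$ and $V \subseteq T_P$.

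Next I would isolate a lemma describing how right-multiplication by $\ilo{V}{U}{U}$ acts on sparse step decompositions. Let $R \in \iPoms^q$ with $T_R = V$. If $V = U$, then $\ilo{V}{U}{U} = \id_U$ and $R * \ilo{V}{U}{U} = R$, so the sparse step decomposition is unchanged. If $V \ne U$, then $\ilo{V}{U}{U}$ is a proper starter, and appending it to the sparse step decomposition of $R$ still produces an alternating sequence of proper starters and terminators --- this is where $R \in \iPoms^q$ is used, since that decomposition is either a single identity or ends in a terminator --- so by uniqueness it \emph{is} the sparse step decomposition of $R * \ilo{V}{U}{U}$. A corollary I would record: for $R \in \iPoms^q$ with $T_R = V$, one has $R * \ilo{V}{U}{U} \in \iPoms^q$ if and only if $V = U$.

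Existence then follows by splitting off the last piece of the sparse step decomposition $P = Q_1 * \dotsm * Q_m$. If $Q_m$ is a terminator, or $P$ is an identity iconclist, then $P \in \iPoms^q$ and $P = P * \id_{T_P}$ works. Otherwise $Q_m$ is a proper starter; I take $P' = Q_1 * \dotsm * Q_{m-1}$ (to be read as $\id_{S_{Q_1}}$ when $m = 1$) and $\ilo{V}{U}{U} = Q_m$, and observe that $P' \in \iPoms^q$ because its sparse step decomposition is $Q_1 * \dotsm * Q_{m-1}$, which either is an identity or ends in the terminator $Q_{m-1}$ by alternation.

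For uniqueness, suppose $P = P' * \ilo{V}{U}{U}$ with $P' \in \iPoms^q$. By the lemma, the sparse step decomposition of $P$ equals that of $P'$, possibly with the single proper starter $\ilo{V}{U}{U}$ appended, and the ``nothing appended'' case occurs exactly when $P \in \iPoms^q$. Hence if $P \in \iPoms^q$ we are forced to $\ilo{V}{U}{U} = \id_{T_P}$ and $P' = P$, while if $P \notin \iPoms^q$ then $\ilo{V}{U}{U}$ must be the (proper starter) last piece of the sparse step decomposition of $P$ and $P'$ the gluing of the remaining pieces --- in both cases $V$ and $P'$ are determined, using uniqueness of sparse step decompositions together with uniqueness of ipomset isomorphisms. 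The main obstacle, such as it is, is purely bookkeeping: keeping the degenerate cases ($m = 1$, $P$ an identity, empty gluings) straight and stating the $\iPoms^q$ condition uniformly across them; the mathematical content is entirely carried by the cited uniqueness result.
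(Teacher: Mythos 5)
Your proof is correct and takes essentially the same route as the paper, whose entire proof is the remark that the lemma ``follows immediately from the uniqueness of sparse decompositions''; your writeup is a careful elaboration of exactly that observation (split off the last factor of the sparse step decomposition if it is a proper starter, otherwise append an identity, and derive uniqueness from uniqueness of the sparse decomposition). The only point to watch is the degenerate case where $P'$ is an identity, in which ``appending'' the proper starter $\ilo{V}{U}{U}$ really means replacing the single-identity decomposition by the single-starter one, but you already flag this explicitly as bookkeeping.
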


For a pHDA $X$, an ipomset $P$ and a subset $K\subseteq X$ define
\begin{multline*}
  X(K,P)=\{x\in X\mid \exists \alpha \text{ path in } X: \\
  \ev(\alpha)=P,\; \src(\alpha)\in K,\; \tgt(\alpha)=x\}
\end{multline*}
and let $\iPoms(X)=\{P\in\iPoms\mid X(\bot_X,P)\neq\emptyset\}$.

\begin{lemma}
  \label{l:TrackSetDecomp}
  For all $K\subseteq X$, $U\subsetneq V\in\sq$, $P\in \iPoms^q_U$ we have
  $X(K,P*\ilo UVW)=X(X(K, P), \ilo UVW)$.
\end{lemma}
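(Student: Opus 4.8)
The plan is to reduce the statement to Lemma~\ref{le:decomp} after a short combinatorial observation about sparse step decompositions. The crucial point is that, under the hypotheses, the sparse step decomposition of $P*\ilo UVW$ is obtained by simply juxtaposing the sparse step decompositions of $P$ and of $\ilo UVW$. Indeed, since $P\in\iPoms^q_U$, the sparse decomposition of $P$ is either the single identity $\id_U$ (when $P$ is an identity ipomset) or an alternating sequence of proper starters and terminators whose last entry is a \emph{terminator}; and since $U\subsetneq V$, the iconclist $\ilo UVW$ is not an identity, so its sparse decomposition begins with the proper starter $\starter{V}{V\setminus U}$ (and, when $W\subsetneq V$, ends with the proper terminator $\terminator{V}{V\setminus W}$). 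Concatenating the two lists therefore preserves the alternation, since a trailing terminator of $P$ is followed by a leading proper starter of $\ilo UVW$ without any merging; in the degenerate case $P=\id_U$ the left part is simply absorbed by the gluing. By uniqueness of sparse decompositions~\cite[Prop.~3.5]{DBLP:journals/fuin/FahrenbergZ24}, this juxtaposition is the sparse decomposition of $P*\ilo UVW$; in particular $P*\ilo UVW$ is interval, and the hypothesis of Lemma~\ref{le:decomp} is met.

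Granting this, Lemma~\ref{le:decomp} gives that every path $\alpha$ in $X$ with $\ev(\alpha)=P*\ilo UVW$ factors as $\alpha=\beta*\gamma$ with $\ev(\beta)=P$ and $\ev(\gamma)=\ilo UVW$, and necessarily $\tgt(\beta)=\src(\gamma)$. (When $P=\id_U$ this is immediate without Lemma~\ref{le:decomp}: take $\beta$ the constant path at $\src(\alpha)$ and $\gamma=\alpha$.) The two inclusions are then bookkeeping with the definition of $X(-,-)$ and multiplicativity of $\ev$ under path concatenation. For ``$\subseteq$'': given $x\in X(K,P*\ilo UVW)$ witnessed by $\alpha$, factor $\alpha=\beta*\gamma$; then $\src(\beta)=\src(\alpha)\in K$, so $\tgt(\beta)\in X(K,P)$, and $\gamma$ is a path from $\tgt(\beta)$ to $x$ with $\ev(\gamma)=\ilo UVW$, whence $x\in X\bigl(X(K,P),\ilo UVW\bigr)$. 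For ``$\supseteq$'': given $x\in X\bigl(X(K,P),\ilo UVW\bigr)$, pick a path $\gamma$ with $\ev(\gamma)=\ilo UVW$, $\src(\gamma)=:y\in X(K,P)$, $\tgt(\gamma)=x$, and a path $\beta$ with $\ev(\beta)=P$, $\src(\beta)\in K$, $\tgt(\beta)=y$; then $\beta*\gamma$ is a well-formed path and $\ev(\beta*\gamma)=\ev(\beta)*\ev(\gamma)=P*\ilo UVW$, with source in $K$ and target $x$, so $x\in X(K,P*\ilo UVW)$.

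I expect the only genuine content to sit in the first paragraph: one must verify that no collapsing occurs at the junction of the two sparse decompositions, and this is exactly where both hypotheses are used — $P\in\iPoms^q_U$ forbids a trailing proper starter (which would merge with $\starter{V}{V\setminus U}$), while $U\subsetneq V$ forces $\ilo UVW$ to contribute a leading proper starter rather than a terminator (which would merge with the trailing terminator of $P$). Everything after that is routine unwinding of definitions. This lemma is the step that, together with Lemma~\ref{l:IPomsetSplitting}, will drive the inductive power-set construction used to determinize pHDAs in the sequel.
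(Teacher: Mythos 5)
Your proof is correct and follows essentially the same route as the paper's: both reduce the statement to Lemma~\ref{le:decomp} and then unwind the definition of $X(-,-)$. You additionally spell out why the hypothesis of Lemma~\ref{le:decomp} holds (no merging at the junction of the two sparse decompositions, using $P\in\iPoms^q_U$ and $U\subsetneq V$), a verification the paper leaves implicit.
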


\begin{proof}
  If $\alpha$ is a sparse path in $X$,
  then using Lemma \ref{le:decomp} there exists a decomposition $\alpha=\beta*\gamma$
  such that $\ev(\beta)=P$ and $\ev(\gamma)=\ilo UVW$.
  This shows that $X(K,P*\ilo UVW)\subseteq X(X(K, P), \ilo UVW)$. The inverse inclusion is obvious.
\end{proof}

\begin{lemma}
\label{l:pHDAWithProperStartCells}
For every pHDA $X$ there exists a language equivalent pHDA $X'$ such that for every $x\in \bot_{X'}$ and $A\subseteq \ev(x)$ the face $\delta^0_A(x)$ is undefined.
\end{lemma}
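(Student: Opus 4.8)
The plan is to avoid constructing anything new: the pHDA $X(\mcal A)$ attached to a reduced gST-automaton $\mcal A$ in the previous section already has the desired property, essentially because reducedness condition~(a) forbids transitions into initial states. So given a pHDA $X$, I would first pass to its operational semantics $\ST(X)$, an ST-automaton and hence a P-automaton, with $L(\ST(X))=L(X)$ by the bijection between paths of $X$ and paths of $\ST(X)$. Applying the proposition that turns any P-automaton into a language-equivalent reduced gST-automaton gives $\mcal A$ with $L(\mcal A)=L(X)$, and I would take $X'=X(\mcal A)$; by the two lemmas of that section, $X'$ is a pHDA with $L(X')=L(\mcal A)=L(X)$.

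What remains is to check that no initial cell of $X'=X(\mcal A)$ has a non-trivial lower face. Recall that $\bot_{X'}=\bot_{\mcal A}$ and that in $X(\mcal A)$ the only defined face maps are $\delta^0_{U\setminus S}([e])=s(e)$ and $\delta^1_{U\setminus T}([e])=t(e)$ for transitions $e$ with $\lambda(e)=\ilo SUT$. Fix $q\in\bot_{\mcal A}$. A non-trivial lower face of $[q]$ can only come from a transition $e$ with $[e]=[q]$ for which $U\setminus S\neq\emptyset$. By the definition of the identification $\sim$, $[e]=[q]$ forces either that $\lambda(e)$ is a terminator with $s(e)=q$---then $S=U$, so $U\setminus S=\emptyset$ and the face is trivial---or that $\lambda(e)$ is a starter with $t(e)=q$, which is impossible since condition~(a) gives $t^{-1}(q)=\emptyset$ for $q\in\bot$. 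Hence $[q]$ has no non-trivial lower face, which is exactly the required condition ($\delta^0_\emptyset=\id$ being always defined).

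I do not foresee a genuine obstacle; the difficulty is conceptual rather than technical. The naive attempt---keep $X$ and simply adjoin a fresh ``input copy'' of each initial cell carrying only the downward faces of the original---fails, because an initial cell of positive dimension may occur in the \emph{interior} of an accepting path, so deleting or rerouting its lower faces breaks such paths, and restoring them by duplicating the cells above would cascade. Routing through the reduced-gST-automaton construction sidesteps this precisely because conditions~(a) and~(b) there already force initial (and accepting) states to be path endpoints only. The remaining work is the bookkeeping of verifying that each step in passing from $X$ to $\ST(X)$ to $\mcal A$ to $X(\mcal A)$ preserves the language, all of which was established earlier.
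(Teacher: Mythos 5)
Your proof is correct, but it takes a genuinely different route from the paper's. The paper constructs $X'$ directly and locally: it adjoins to $X$ a tagged copy $(x,C)$ of every cell $x$ having an initial lower face $\delta^0_C(x)\in\bot_X$, declares $\delta^0_A((x,C))$ defined only for $A\subseteq C$ (with value $(\delta^0_A(x),C\setminus A)$) while upper faces fall back into the old part $X$, and takes $\bot_{X'}=\{(x,\emptyset)\mid x\in\bot_X\}$; language equivalence is then a short path-lifting argument. You instead make the round trip $X\to\ST(X)\to\mcal{A}\to X(\mcal{A})$ through the reduced-gST-automaton machinery of the previous section, and your verification that condition (a) kills all non-trivial lower faces of initial cells of $X(\mcal{A})$ is sound: since there are no silent transitions, a transition glued to $q\in\bot$ must be a terminator out of $q$, whose only lower face map is the trivial $\delta^0_\emptyset$, and starters into $q$ are excluded by (a). What your route buys is brevity---everything is delegated to results already proved---and your $X(\mcal{A})$ even satisfies the stronger property (initial cells are not upper faces either) that the determinization argument implicitly relies on. What it costs: $X(\mcal{A})$ is structurally degenerate (no two face maps compose), so all of $X$'s higher-dimensional gluing is discarded, which is harmless here but a heavier hammer than needed; and, more substantively, your route tacitly assumes $X$ is finite, since P-automata are finite by definition and the termination argument in Lemma~\ref{l:gSTcd2} uses boundedness of dimensions in a finite gST-automaton, whereas the paper's local construction applies verbatim to arbitrary pHDAs.
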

\begin{proof}
	Let $X'$ be the pHDA with set of cells
	\[X'=X\sqcup\{(x,C)\mid x\in X,\; C\subseteq \ev(x),\;  \delta^0_C(x)\in\bot_X\}.
	\]
	We extend the event by putting $\ev((x,C))=\ev(x)$,and face maps by $\delta^0_A((x,C))=(\delta^0_A(x),C-A)$ and $\delta_{A,B}((x,C))=\delta_{A,B}(x)$ whenever $B\neq\emptyset$; the faces $\delta^0_A((x,C))$ are undefined for $A\not\subseteq C$. Finally, we put $\bot_{X'}=\{(x,\emptyset)\mid x\in\bot_X\}$ and $\top_{X'}=\top_X\cup \{(x,A)\in X'\mid x\in\top_X\}$. Let $\alpha$ be a sparse accepting path in $X$. If $\alpha=(x_0\arrO{A}x_1\arrI{B}x_2\dotsm)$ starts with an upstep, then $((x_0,\emptyset)\arrO{A}(x_1,A)\arrI{B}x_2\dotsm)$ is an accepting path in $X'$ that recognizes $\ev(\alpha)$; if $\alpha=(x_0\arrI{B}x_1\dotsm)$ starts with a downstep, then $((x_0,\emptyset)\arrI{B}x_1\dotsm)$ is such a path. Thus, $L(X)\subseteq L(X')$. To prove the converse, note that the map $X'\to X$ that forgets the second coordinate ($x\mapsto x$, $(x,A)\mapsto x$) preserves accepting paths.
\end{proof}

Fix a pHDA $X$ that satisfies Lemma \ref{l:pHDAWithProperStartCells}.
The construction of a deterministic pHDA that recognizes $L(X)$ is an adaptation of the classical argument: its states are sets of cells that can be reached from a start cell by a path that recognizes a given ipomset. We take into account only ipomsets belonging to $\iPoms^q$.

Introduce an equivalence relation $\sim^X_U$ on $\iPoms^q_U(X)=\iPoms^q\cap \iPoms(X)\cap \iPoms_U$ by
\[
	P\sim^X_U Q \iff X(\bot_X, P)=X(\bot_X, Q).
\]
Let $[P]$ denote the equivalence class of $P$ in $\iPoms^q_U/{\sim^X_U}$.
Note that if $\id_U\in\iPoms^q_U(X)$, then $[\id_U]=\{\id_U\}$: we have $X(\bot_X,\id_U)=\bot_X\cap X[U]$, while Lemma \ref{l:pHDAWithProperStartCells} assures that no start cell belongs to $X(\bot_X,P)$ when $P$ is not an identity.

We define a pHDA $Y=\mathrm{Det}(X)$ as follows (face maps are defined only if both source and target is defined).
\begin{itemize}
\item
	$Y[U]$ is the set of triples $([P],A,U)$, where $A\subseteq U\in\sq$, $P\in \iPoms^q_{U\setminus A}(X)$, such that $P*\ilo {U\setminus A}UU\in\iPoms(X)$
\item
	$\delta^0_A([P],A,U)=([P],\emptyset,U\setminus A)$
\item 
	$\delta^1_B([P],A,U)=([P*\ilo{U\setminus A}{U}{U\setminus B}],\emptyset,U\setminus B)$ for $\emptyset\neq A,B\subseteq U$
\item
	$\delta^1_B([\id_V],\emptyset,V)=([\ilo{V}{V}{V\setminus B}],\emptyset,V\setminus B)$
\item
	no other face maps exist, except the compositions 
	\[
          ([\id_V],A,U) \smash{\xrightarrow{\delta^0_A}}
          ([\id_V],\emptyset,V) \smash{\xrightarrow{\delta^1_B}}
          ([\ilo{V}{V}{V\setminus B}],\emptyset,V\setminus B)
        \]
	for $\emptyset\neq A\subseteq U\in\sq$, $V=U\setminus A$
\item
	$\bot_Y=\{(\id_V,\emptyset,V)\mid \bot_X[V]\neq\emptyset\}$
\item
	$([P],A,U)\in\top_Y\iff P*\ilo{U\setminus A}{U}{U}\in L(X)$
\end{itemize}
Below we show that, for an ipomset $P\in\iPoms(X)$ with a presentation $P=P'*\ilo VUU$ from Lemma \ref{l:IPomsetSplitting}, there is exactly one path $\omega_P$ that starts at the start cell and recognizes $P$. Furthermore, $\tgt(\alpha_P)=([P'],U\setminus V,U)$.

\begin{lemma}
	The definition above is valid.
\end{lemma}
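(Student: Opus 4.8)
The plan is to verify that the data defining $Y=\mathrm{Det}(X)$ really constitutes a pHDA, \ie that the face relations given are well-defined partial functions and that they satisfy the lax precubical identity~\eqref{eq:lpcid}. The statement has two parts to check: first, that each clause produces a legitimately defined face map (in particular that the target cell is again an element of $Y$ of the stated event type, and that the definition does not depend on the chosen representative $P$ of the class $[P]$); and second, that any two composable face maps compose to a face map as required.

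First I would check representative-independence. The only clauses whose output involves a gluing are $\delta^1_B([P],A,U)=([P*\ilo{U\setminus A}{U}{U\setminus B}],\emptyset,U\setminus B)$ and the analogous clause for identities. So I must show that if $P\sim^X_{U\setminus A}Q$, \ie $X(\bot_X,P)=X(\bot_X,Q)$, then also $X(\bot_X,P*\ilo{U\setminus A}{U}{U\setminus B})=X(\bot_X,Q*\ilo{U\setminus A}{U}{U\setminus B})$. This is exactly where Lemma~\ref{l:TrackSetDecomp} does the work: since $P,Q\in\iPoms^q_{U\setminus A}$ and $U\setminus A\subsetneq U$ (because $A\neq\emptyset$), we get $X(\bot_X,P*\ilo{U\setminus A}{U}{U\setminus B})=X(X(\bot_X,P),\ilo{U\setminus A}{U}{U\setminus B})$, and likewise for $Q$; since $X(\bot_X,P)=X(\bot_X,Q)$ the two right-hand sides agree. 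I would also record that $P*\ilo{U\setminus A}{U}{U\setminus B}$ lies in $\iPoms^q$: its sparse decomposition ends in the terminator corresponding to removing $B$ (here using $B\neq\emptyset$), or, if $B=U$, one checks the edge cases; and $P*\ilo{U\setminus A}{U}{U\setminus B}\in\iPoms(X)$ follows since $X(\bot_X,P)\neq\emptyset$ and the glued piece is realized by a single downstep from any cell of $X(\bot_X,P)$ in $X[U]$ — but one must be slightly careful here, since $X(\bot_X,P*\ilo{U\setminus A}UU)\neq\emptyset$ is part of the hypothesis on $([P],A,U)\in Y[U]$, and a downstep out of such a cell recognizes exactly $\ilo UU{U\setminus B}$.

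Next I would check the precubical identity. By the explicit description, the \emph{only} composable pairs of nontrivial face maps in $Y$ are those of the form $([\id_V],A,U)\xrightarrow{\delta^0_A}([\id_V],\emptyset,V)\xrightarrow{\delta^1_B}([\ilo VV{V\setminus B}],\emptyset,V\setminus B)$, and the definition already \emph{declares} $\delta^1_B\circ\delta^0_A$ on such a cell to be this composite, so the identity~\eqref{eq:lpcid} holds in the form ``$\delta_{C,D}\delta_{A,B}\subseteq\delta_{A\cup C,B\cup D}$'' essentially by fiat — there is nothing to merge. The content is therefore the \emph{negative} claim: no other compositions are possible. I would argue this by a short case analysis on the shape of a cell $([P],A,U)$: a nontrivial $\delta^0$ out of it exists only when $A\neq\emptyset$, and lands in $([P],\emptyset,U\setminus A)$, which has a nontrivial $\delta^0$ only if it is of the form $([\id_V],\emptyset,V)$ with $V\neq$ its own complement — \ie only the identity cells have iterated lower faces — and a cell $([P],\emptyset,W)$ (with $A=\emptyset$ already) has no further nontrivial $\delta^0$; for $\delta^1$ one similarly observes that after one downstep the class is $[P*\ilo\cdot\cdot\cdot]$ sitting over $W$ with trivial $A$-component, and the list of clauses provides no nontrivial $\delta^1$ out of such a cell except when it is again an identity cell, handled by the displayed composite. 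So all genuinely iterable pairs are the ones already listed, and associativity/compatibility of the $A$- and $B$-bookkeeping is immediate from $(U\setminus A)\setminus B=U\setminus(A\cup B)$.

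The main obstacle I expect is not any single deep step but the bookkeeping in the two edge cases that the clause list quietly folds together: the cells $([P],A,U)$ with $A=U$ (no proper starter has happened, \ie $P$ is still over the empty-ish conclist) versus $A\subsetneq U$, and the cells over identities $([\id_V],\emptyset,V)$ versus the ``generic'' $([P],\emptyset,W)$ with $P$ ending in a terminator. One must make sure that Lemma~\ref{l:TrackSetDecomp} is only invoked with $U\setminus A\subsetneq U$ (it requires $U\subsetneq V$), that $P\in\iPoms^q$ is preserved under each declared operation, and that the remark following Lemma~\ref{l:pHDAWithProperStartCells} — that no start cell lies in $X(\bot_X,P)$ for non-identity $P$ — is what makes $[\id_V]$ a singleton and hence makes $\bot_Y$ well-defined and disjoint from the images of the terminator clauses. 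Once these compatibility conditions are laid out, the verification that every clause lands where it should and that no forbidden composition arises is routine, and the lemma follows.
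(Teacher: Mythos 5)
There is a genuine gap. You correctly identify that the upper face maps $\delta^1_B([P],A,U)=([P*\ilo{U\setminus A}{U}{U\setminus B}],\emptyset,U\setminus B)$ must be shown independent of the representative $P$ of $[P]$, and your use of Lemma~\ref{l:TrackSetDecomp} for this (noting $U\setminus A\subsetneq U$ since $A\neq\emptyset$) is exactly the paper's argument. But you assert that these are ``the only clauses whose output involves a gluing,'' and this overlooks the accepting-cell condition $([P],A,U)\in\top_Y\iff P*\ilo{U\setminus A}{U}{U}\in L(X)$, which is also phrased in terms of a chosen representative $P$ and also involves a gluing. One must check that if $P\sim^X_{U\setminus A}Q$ then $P*\ilo{U\setminus A}{U}{U}\in L(X)$ iff $Q*\ilo{U\setminus A}{U}{U}\in L(X)$; the paper devotes the second half of its proof to precisely this, again via Lemma~\ref{l:TrackSetDecomp}, using $X(\bot_X,P*\ilo{U\setminus A}{U}{U})\cap\top_X=X(X(\bot_X,P),\ilo{U\setminus A}{U}{U})\cap\top_X$ and the corresponding identity for $Q$. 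Without this check, $\top_Y$ is not known to be well defined, so the lemma as stated is not established. The fix is routine and uses the same lemma you already invoke, but it is a missing piece rather than an omitted triviality.

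On the other hand, your verification of the lax precubical identity~\eqref{eq:lpcid} — arguing that the only composable pair of face maps is the one explicitly declared in the definition, so there is nothing further to check — is a reasonable addition that the paper leaves implicit, and your attention to the edge cases ($A=\emptyset$ versus $A\neq\emptyset$, identity classes being singletons by the remark after Lemma~\ref{l:pHDAWithProperStartCells}) is sound.
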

\begin{proof}
	We need to show that the upper face map is well defined, that is, if $P\sim_U^X Q$ for $P,Q\in\iPoms^q_U(X)$ and $\emptyset\neq A,B \subseteq U$, then $P*\ilo{U\setminus A}{U}{U\setminus B}\sim^X_U Q*\ilo{U\setminus A}{U}{U\setminus B}$. From Lemma~\ref{l:TrackSetDecomp} we have
\begin{multline*}
	X(\bot_X,P*\ilo{U\setminus A}{U}{U\setminus B})
	=
	X(X(\bot_X,P),\ilo{U\setminus A}{U}{U\setminus B})
	=\\
	X(X(\bot_X,Q),\ilo{U\setminus A}{U}{U\setminus B})	
	=
	X(\bot_X,Q*\ilo{U\setminus A}{U}{U\setminus B}).
\end{multline*}
We also need to prove that the final states are well-defined, namely that if
$P\sim^X_{U\setminus A}Q$ and $P*\ilo{U\setminus A}{U}{U}\in L(X)$ then $Q*\ilo{U\setminus A}{U}{U}\in L(X)$.
This is again a consequence of Lemma~\ref{l:TrackSetDecomp}:
\begin{multline*}
	X(\bot_X,Q*\ilo{U\setminus A}{U}{U})\cap\top_X
	=
	X(X(\bot_X,Q),\ilo{U\setminus A}{U}{U})\cap\top_X
	=\\
	X(X(\bot_X,P),\ilo{U\setminus A}{U}{U})\cap\top_X	
	=
	X(\bot_X,P*\ilo{U\setminus A}{U}{U})\cap\top_X\neq \emptyset.\qedhere
\end{multline*}
\end{proof}

\begin{lemma}
	$Y$ is deterministic.
\end{lemma}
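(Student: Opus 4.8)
The plan is to verify directly the two clauses in the definition of a deterministic pHDA against the explicit description of $Y=\mathrm{Det}(X)$, exploiting that a cell $([P],A,U)$ of $Y$ carries in its first two coordinates precisely the data a potential parent cell would need, so that parents are forced to be unique. For clause~(1), I recall that $\bot_Y=\{(\id_V,\emptyset,V)\mid \bot_X[V]\neq\emptyset\}$; since every cell of $Y[U]$ has third coordinate $U$, the only element of $\bot_Y$ that can lie in $Y[U]$ is $(\id_U,\emptyset,U)$, and as $[\id_U]=\{\id_U\}$ this is at most one cell, so $|\bot_Y\cap Y[U]|\le 1$.

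For clause~(2), I fix $A\subseteq V\in\sq$ and $x\in Y[V\setminus A]$ and bound the number of $y\in Y[V]$ with $\delta^0_A(y)=x$. When $A=\emptyset$ the face map $\delta^0_\emptyset$ is the identity, so $y=x$ is the only option. When $A\neq\emptyset$, I read off from the list of face maps of $Y$ that a non-trivial lower face $\delta^0_A$ is defined only on cells of the form $([P],A,V)$ — whose second coordinate must be \emph{exactly} $A$ — and that $\delta^0_A([P],A,V)=([P],\emptyset,V\setminus A)$. Hence any $y$ with $\delta^0_A(y)=x$ must equal $([P],A,V)$ for some $P$ with $x=([P],\emptyset,V\setminus A)$; this equation determines $[P]$ as the first coordinate of $x$, so $y=([P],A,V)$ is uniquely determined (and it is absent from $Y[V]$ altogether if $P*\ilo{V\setminus A}{V}{V}\notin\iPoms(X)$). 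In all cases there is at most one such $y$.

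The argument is pure bookkeeping, and the only place where care is needed is in parsing the definition of $Y$ to confirm that a cell $([P],A,U)$ has \emph{at most one} non-trivial lower face, namely $\delta^0_A$: there is no $\delta^0_{A'}$ for $\emptyset\neq A'\subsetneq A$, and the listed ``composition'' face maps of the form $([\id_V],A,U)\xrightarrow{\delta^0_A}([\id_V],\emptyset,V)\xrightarrow{\delta^1_B}([\ilo{V}{V}{V\setminus B}],\emptyset,V\setminus B)$ contribute the same $\delta^0_A$ as the general clause. Granting this reading, uniqueness of parents is immediate, and I do not foresee a genuine obstacle.
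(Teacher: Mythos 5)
Your proof is correct and is essentially the paper's own argument, which simply observes that the maps $\delta^0_A$ are injective by construction and that $\lvert\bot_Y[U]\rvert\le 1$ for each $U$; you have merely spelled out the bookkeeping (reading $[P]$ off the first coordinate of the face, and noting that only cells $([P],A,V)$ admit a nontrivial lower face $\delta^0_A$). No gap.
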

\begin{proof}
	Immediately from the definition follows that the maps $\delta^0_A:Y[U]\to Y[U\setminus A]$ are injective and that $\lvert \bot_Y[U] \rvert \leq 1$ for all $U\in\sq$. 
\end{proof}

\begin{lemma}
\label{l:XInclY}
	Let $P\in\iPoms(X)$ be an ipomset with a decomposition $P=P'*\ilo VUU$.
	Then there exists a sparse path $\omega_P$ such that $\src(\omega_P)=(\id_{S_P},\emptyset, S_P)$, $\ev(\omega_P)=P$ and $\tgt(\omega_P)=([P'],U\setminus V,U)$.
\end{lemma}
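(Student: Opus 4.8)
The plan is to build $\omega_P$ by induction on the length $m$ of the sparse step decomposition $P=Q_1*\dotsm*Q_m$ of $P$ (Lemma~\ref{l:IPomsetSplitting} then supplies the presentation $P=P'*\ilo VUU$ used to name the target cell), matching each block $Q_i$ with exactly one up- or downstep of the deterministic pHDA $Y=\mathrm{Det}(X)$, and carrying along the extra invariant that $\omega_P$ is constant when $m=0$, ends with an upstep when $Q_m$ is a proper starter, and ends with a downstep when $Q_m$ is a proper terminator. First note that, since $P\in\iPoms(X)$, there is a path of $X$ recognizing $P$ whose source lies in $\bot_X$; its source cell then lies in $X[S_P]$, so $\bot_X[S_P]\neq\emptyset$ and $(\id_{S_P},\emptyset,S_P)$ is a legitimate initial cell of $Y$. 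In the base case $m=0$ we have $P=\id_W$ with $W=S_P=T_P$; here Lemma~\ref{l:IPomsetSplitting} gives $P'=\id_W$ and $V=U=W$, and $\omega_P$ is simply the constant path at $(\id_W,\emptyset,W)$, which meets all requirements.

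For the inductive step with $m\geq1$ I would write $P=P^-*Q_m$ with $P^-=Q_1*\dotsm*Q_{m-1}$ (so $P^-=\id_{S_P}$ if $m=1$). Since $Q_1,\dotsc,Q_{m-1}$ is an alternating prefix it is the sparse decomposition of $P^-$, hence the sparse decomposition of $P=P^-*Q_m$ is the concatenation of those of $P^-$ and $Q_m$; applying Lemma~\ref{le:sparse_existence} and then Lemma~\ref{le:decomp} to a path witnessing $P\in\iPoms(X)$ therefore yields $P^-\in\iPoms(X)$ (the argument underlying Lemma~\ref{l:TrackSetDecomp}). The induction hypothesis now gives a sparse path $\omega_{P^-}$ from $(\id_{S_P},\emptyset,S_P)$, recognizing $P^-$, ending at the cell dictated by Lemma~\ref{l:IPomsetSplitting} for $P^-$ and of the shape recorded by the invariant; I attach one further step encoding $Q_m$. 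If $Q_m=\ilo VUU$ is a proper starter then $P^-\in\iPoms^q$, so $\omega_{P^-}$ ends at $([P^-],\emptyset,V)$ (or is constant), and I append the upstep $([P^-],\emptyset,V)\arrO{U\setminus V}([P^-],U\setminus V,U)$; this is defined because $P^-\in\iPoms^q_V(X)$ and $P^-*\ilo VUU=P\in\iPoms(X)$, and it recognizes $\starter U{U\setminus V}=Q_m$. If $Q_m=\terminator WB$ is a proper terminator then $Q_{m-1}$ is a proper starter $\ilo{W'}{W}{W}$ (or $m=1$ and $P^-=\id_W$), so $\omega_{P^-}$ ends at $([\hat P],W\setminus W',W)$ (resp.\ at $([\id_W],\emptyset,W)$), where $\hat P$ is the $\iPoms^q$-prefix of $P^-$, and I append the downstep $\arrI B$ out of that cell; by the face formula of $Y$ it lands at $([\hat P*\ilo{W'}{W}{W\setminus B}],\emptyset,W\setminus B)=([P],\emptyset,T_P)$, using the iconclist identity $Q_{m-1}*Q_m=\ilo{W'}{W}{W\setminus B}$, and recognizes $\terminator WB=Q_m$. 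In both subcases $\omega_{P^-}$ ends with a step of the complementary kind (or is constant), so $\omega_P=\omega_{P^-}*(\text{the new step})$ is sparse, $\ev(\omega_P)=P^-*Q_m=P$, and comparing the terminal cell with the presentation from Lemma~\ref{l:IPomsetSplitting} for $P$ gives $\tgt(\omega_P)=([P'],U\setminus V,U)$ together with the desired shape, closing the induction.

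I expect the main obstacle to be the verification, at each appended step, that the triple I name really is a cell of $Y$: this amounts to the conditions that the relevant ipomset lies in the appropriate $\iPoms^q_V(X)$ (or $\iPoms^q_{W'}(X)$) and that its further gluing stays in $\iPoms(X)$, which is exactly what Lemma~\ref{l:TrackSetDecomp} (via Lemmas~\ref{le:sparse_existence} and~\ref{le:decomp}) is there to supply, and what the induction hypothesis already guarantees for the terminal cell of $\omega_{P^-}$. Two observations keep this painless: the ipomset recognized by an up- or downstep of $Y$ depends only on its two endpoint cells, not on any choice of representatives of their first coordinates, and the cell $([P'],U\setminus V,U)$ records only the class $[P']$; so the equivalence-class bookkeeping never interferes, and the only real work is tracking $\iPoms(X)$-membership through the glued prefixes $P^-$ and $\hat P$, which the lemmas above were set up to handle.
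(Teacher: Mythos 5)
Your proof is correct and follows essentially the same route as the paper's: induction on the length of the sparse step decomposition, appending one up- or downstep of $Y$ per factor, with the identity and lone-terminator cases at the bottom. The extra bookkeeping you add (the invariant on the last step's type, and the verification via Lemmas~\ref{le:decomp} and~\ref{l:TrackSetDecomp} that the named triples are genuine cells of $Y$) is sound and only makes explicit what the paper leaves implicit.
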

\begin{proof}
	We define $\omega_P$ inductively with respect to the length of its sparse decomposition. If $P$ is an identity or a terminator we define
	\begin{itemize}
	\item for $P=\Id_V$ and we put $\omega_P=(\id_V,\emptyset,V)$,
	\item for $P=P'=\ilo WWU$ we put $\omega_P=((\id_W,\emptyset,W)\arrI{U\setminus W}(P,\emptyset,U))$,
	\end{itemize}
	If the sparse decomposition of $P$ contains a starter, we consider two cases:
	\begin{itemize}	
	\item If the sparse decomposition of $P$ terminates with a starter, then $P=P'*\ilo VUU$ for $V\subsetneq U$, and we put $\omega_P=\omega(P')*((P',\emptyset,V)\arrO{U\setminus V}(P',U\setminus V,U))$.
	\item If the last factor is a terminator, then $P=R*\ilo WVU$ for $W\subsetneq V\supsetneq U$ and 
  $R\in\iPoms^q$. We put 
  $\omega_P=\omega_{R*\ilo WVV} * (([R], V\setminus W,V)\arrI{V\setminus U}  ([P],\emptyset,U))$.
  \qedhere
	\end{itemize}
\end{proof}

\begin{lemma}
  \label{l:YInclX}
  If $\alpha$ is a path in $Y$ such that $\src(\alpha)\in\bot_Y$, then $\alpha=\omega_{\ev(\alpha)}$.
\end{lemma}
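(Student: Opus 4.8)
The plan is to prove, by induction on the number of steps of $\alpha$, the formally slightly stronger statement that every path $\alpha$ in $Y$ with $\src(\alpha)\in\bot_Y$ satisfies $\ev(\alpha)\in\iPoms(X)$ and $\alpha=\omega_{\ev(\alpha)}$; the form of $\tgt(\alpha)$ is then automatic from Lemma~\ref{l:XInclY}. First I would reduce to the case that $\alpha$ has no trivial steps $\arrO{\emptyset}$, $\arrI{\emptyset}$, since these are identity face maps, hence identities for $*$, and $\omega_P$ never uses them. The engine of the argument is the rigidity of the face maps of $Y$, read straight off the definition: a cell $([Q],A,U)$ with $A\neq\emptyset$ has exactly one nontrivial lower face, namely $\delta^0_A([Q],A,U)=([Q],\emptyset,U\setminus A)$, and its nontrivial upper faces are the $\delta^1_B$; a cell $([Q],\emptyset,W)$ with $Q$ not an identity has \emph{no} nontrivial faces of its own (though it may be a lower face of a larger cell, which is exactly what makes upsteps out of it possible); and $([\id_W],\emptyset,W)$ has no nontrivial lower face and upper faces $\delta^1_B$. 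From this I get that any step out of a cell with nonempty middle coordinate is a downstep, and that — since $\delta^1_B$ always lands in a cell whose middle ipomset ends, in its sparse decomposition, with a proper terminator, hence is never an identity — any step out of a cell just reached by a downstep is an upstep. So $\alpha$ is automatically sparse, and its shape can be tracked step by step.

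The base case $\alpha=(q)$ is immediate: $q\in\bot_Y$ forces $q=(\id_V,\emptyset,V)$ with $\bot_X[V]\neq\emptyset$, so $\ev(\alpha)=\id_V\in\iPoms(X)$ and $\omega_{\id_V}=(\id_V,\emptyset,V)=\alpha$. For the inductive step I would write $\alpha=\alpha'*\sigma$ with $\sigma$ the last step; then $\alpha'$ is a shorter path from $\bot_Y$, so by the induction hypothesis $\ev(\alpha')\in\iPoms(X)$ and $\alpha'=\omega_{\ev(\alpha')}$, and Lemma~\ref{l:XInclY} gives $\tgt(\alpha')=([R'],U'\setminus V',U')$, where $\ev(\alpha')=R'*\ilo{V'}{U'}{U'}$ is the presentation of Lemma~\ref{l:IPomsetSplitting}, $R'\in\iPoms^q$.

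Then I split on $\sigma$. If $\sigma$ is an upstep, $\tgt(\alpha')$ must have empty middle coordinate, so $\ev(\alpha')=R'\in\iPoms^q$; rigidity and determinism of $Y$ pin down the target of $\sigma$, and one checks that $\ev(\alpha)$ is $R'$ glued with a proper starter, reads off its presentation, and matches $\alpha'*\sigma$ against the clause of Lemma~\ref{l:XInclY} for sparse decompositions ending in a starter, whose recursive call is exactly $\omega_{R'}=\alpha'$. If $\sigma$ is a downstep out of a cell $([R'],A,U')$ with $A\neq\emptyset$, then $\sigma$ appends a proper terminator, $\ev(\alpha)\in\iPoms^q$, and $\alpha'*\sigma$ matches the clause of Lemma~\ref{l:XInclY} whose last factor is a terminator — the key point being that the recursive call there is $\omega$ applied to $R'*\ilo{V'}{U'}{U'}=\ev(\alpha')$, to which the induction hypothesis applies directly and yields $\alpha'$. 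Finally, if $\sigma$ is a downstep out of a cell with empty middle coordinate, rigidity forces that cell to be $([\id_{U'}],\emptyset,U')$, whence $\ev(\alpha')=\id_{U'}$ and $\alpha'$ is (by the induction hypothesis) the constant path at a cell of $\bot_Y$; then $\ev(\alpha)=\terminator{U'}{B}$ and $\alpha$ matches the terminator clause of Lemma~\ref{l:XInclY}.

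The routine parts will be the gluing computations for $\ev(\alpha)$ and the bookkeeping of which clause of the recursive definition of $\omega_P$ is in force. The one genuine obstacle I expect is precisely that matching: one has to notice that when $\alpha$ ends in a downstep out of a ``starter-ended'' cell, the terminator clause of Lemma~\ref{l:XInclY} re-invokes $\omega$ on the \emph{whole} $\ev(\alpha')$, with its trailing starter still attached, rather than on a shorter ipomset, so that the induction hypothesis applies to it verbatim; and one has to keep the cells $([\id_W],\emptyset,W)$ — the only cells with empty middle coordinate admitting a downstep — as their own case throughout, since conflating them with the other empty-middle-coordinate cells would break the inductive bookkeeping.
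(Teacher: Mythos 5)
Your proposal is correct and follows essentially the same route as the paper's proof: both induct on the length of the path (the paper phrases it as a minimal counterexample), use the rigidity of the face maps of $Y$ to show the path is sparse and each final step is forced, and match the result against the recursive clauses defining $\omega_P$ in Lemma~\ref{l:XInclY}. Your three cases (upstep, downstep from nonempty middle coordinate, downstep from an identity cell) correspond exactly to the paper's case split on whether $\ev(\beta)$ is an identity, a non-identity element of $\iPoms^q$, or not in $\iPoms^q$.
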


\begin{proof}
	Clearly $\alpha$ is sparse. Assume that $\alpha$ is a shortest path such that $\alpha\neq\omega_{\ev(\alpha)}$. 
  If $\alpha$ is constant, then $\alpha=([\id_V,\emptyset,V])=\omega_{\id_V}$. 
  Otherwise, there is a decomposition $\alpha=\beta*(x,\varphi,y)$. 
  Since $\beta$ is shorter than $\alpha$, we have $\beta=\omega_P$, $P=\ev{\beta}$. There are three cases to consider:
	\begin{itemize}
	\item $P=\id_V$ and $\beta=(([\id_V],\emptyset,V))$. Then either $\alpha=([\id_V],\emptyset,V)\arrO ([\id_V],A,U)=\omega_{\ilo VUU}$ for $A\subseteq U$, $V\cong U\setminus A$, or $\alpha=([\id_V],\emptyset,V)\arrI{B}([\ilo{U}{U}{U\setminus B}],\emptyset,U\setminus B)=\omega_{\ilo{U}{U}{U\setminus B}}$.
	\item $P\in\iPoms^q_U(X)$ is not an identity. Then $\tgt(\alpha)=([P],\emptyset,U)$. This cell has no defined faces. The only possibility is $\alpha=\omega_P*(([P],\emptyset,V)\arrO{A}([P],A,V))=\omega_{P*\ilo VUU}$ for $V=U\setminus A$.
	\item $P\not\in\iPoms^q_U(X)$ and is not an identity. Then $\tgt(\alpha)=([P],A,U)$, $A\neq0$. This cell is not a face of any other cell, so the only possibility is 
          $\alpha=\omega_P*(([P],A,U)\arrI{B}([P*\ilo{U\setminus A}{U}{U\setminus B}],\emptyset,U\setminus B))=\omega_{P*\ilo {U}{U}{U\setminus B}}$.
	\end{itemize}
	In all cases we yield a contradiction.
\end{proof}

\begin{proposition}
	$L(Y)= L(X)$.
\end{proposition}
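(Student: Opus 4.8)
The plan is to deduce the equality directly from Lemmas~\ref{l:XInclY} and~\ref{l:YInclX}, which between them pin down the accepting paths of $Y$, together with the definitions of $\bot_Y$ and $\top_Y$. First I would record two small bookkeeping facts. (i)~If $P\in\iPoms(X)$, then $X(\bot_X,P)\neq\emptyset$, so $X$ has a path out of $\bot_X$ recognizing $P$; its source is a start cell lying in $X[S_P]$ (the source interface of the first step equals $S_P$), hence $\bot_X[S_P]\neq\emptyset$ and $(\id_{S_P},\emptyset,S_P)\in\bot_Y$. (ii)~For $P\in\iPoms(X)$ with the decomposition $P=P'*\ilo{V}{U}{U}$ from Lemma~\ref{l:IPomsetSplitting}, the cell $([P'],U\setminus V,U)$ lies in $\top_Y$ if and only if $P'*\ilo{V}{U}{U}\in L(X)$, i.e.\ if and only if $P\in L(X)$; this statement is meaningful because verifying that the definition of $Y$ is valid showed $\top_Y$ to be representative-independent.

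For the inclusion $L(X)\subseteq L(Y)$, take $P\in L(X)$, so in particular $P\in\iPoms(X)$. Lemma~\ref{l:XInclY} gives a sparse path $\omega_P$ in $Y$ with $\src(\omega_P)=(\id_{S_P},\emptyset,S_P)$, $\ev(\omega_P)=P$ and $\tgt(\omega_P)=([P'],U\setminus V,U)$. By fact~(i) the source lies in $\bot_Y$, and by fact~(ii) the target lies in $\top_Y$ since $P\in L(X)$. Hence $\omega_P$ is an accepting path of $Y$ recognizing $P$, so $P\in L(Y)$.

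For the converse $L(Y)\subseteq L(X)$, let $\alpha$ be an accepting path of $Y$ and put $P=\ev(\alpha)$. Since $\src(\alpha)\in\bot_Y$, Lemma~\ref{l:YInclX} gives $\alpha=\omega_P$; this presupposes $P\in\iPoms(X)$, which holds because the last cell of $\alpha$ has the form $([P'],U\setminus V,U)$ and, being a cell of $Y$, satisfies $P=P'*\ilo{V}{U}{U}\in\iPoms(X)$ by construction of $Y$. Now $\tgt(\alpha)=\tgt(\omega_P)=([P'],U\setminus V,U)\in\top_Y$, which by fact~(ii) means $P\in L(X)$. Combining the two inclusions yields $L(Y)=L(X)$.

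The substance of the argument has already been absorbed into the preparatory lemmas, so I expect the only delicate point to be the bookkeeping of conclists and interfaces: one must check that the labelling $\mu$ of the start cells of $X$, the interface $S_P$, and the first coordinate of the elements of $\bot_Y$ all line up, and that the index $U\setminus V$ appearing in $\tgt(\omega_P)$ is exactly the one for which the defining condition of $\top_Y$ reconstructs $P'*\ilo{V}{U}{U}=P$. No genuine obstacle beyond this routine verification is anticipated.
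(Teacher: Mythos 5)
Your proof is correct and follows essentially the same route as the paper: the paper likewise observes that $\tgt(\omega_P)\in\top_Y$ iff $P\in L(X)$ and then derives the inclusion $L(X)\subseteq L(Y)$ from Lemma~\ref{l:XInclY} and the converse from Lemma~\ref{l:YInclX}. Your version merely spells out the bookkeeping (the membership of the source in $\bot_Y$ and the representative-independence of $\top_Y$) that the paper leaves implicit.
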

\begin{proof}
	Note that $\tgt(\omega_P)\in\top_Y$ iff $P\in L(X)$. Thus the inclusion $ L(X)\subseteq  L(Y)$ follows from Lemma \ref{l:XInclY}, and the inverse from Lemma \ref{l:YInclX}.
\end{proof}

\section{Conclusion}

\begin{figure}
  \centering
  \newcommand*\sepa{\;---\;\ }
  \begin{tikzpicture}[x=1.8cm, y=1.2cm]
    \node (HDA) at (0,0) {HDA \sepa $\sq^\op\to \Set$};
    \node (iHDA) at (-1,-1) {iHDA \sepa $\isq^\op\to \Set$};
    \node (coHDA) at (1,-1) {cone HDA \sepa $\csq^\op\to \Set$};
    \node (spHDA) at (0,-2) {spHDA \sepa $\sq^\op\to \Set_*$};
    \node (srHDA) at (-1,-3) {\vphantom{p}srHDA \sepa $\sq^\op\to \Rel$};
    \node (pHDA) at (1,-3) {pHDA \sepa $\sq^\op\laxto \Set_*$};
    \node (rHDA) at (0,-4) {rHDA \sepa $\sq^\op\laxto \Rel$};
    \node (STA) at (0,-5) {STA}; 
    \node (gSTA) at (0,-6) {gSTA};
    \node (PA) at (0,-7) {PA};
    \path (HDA) edge (iHDA);
    \path (iHDA) edge (spHDA);
    \path (HDA) edge (coHDA);
    \path (coHDA) edge (spHDA);
    \path (spHDA) edge (srHDA) edge (pHDA);
    \path (srHDA) edge (rHDA);
    \path (pHDA) edge (rHDA);
    \path (rHDA) edge (STA);
    \path (STA) edge (gSTA);
    \path (gSTA) edge (PA);
  \end{tikzpicture}
  \caption{%
    Variants of HDAs, including new ones introduced in this paper.}
  \label{fig:hdaschema-new}
\end{figure}


We have brought order into the zoo of HDA variants, but in the process we have introduced several new ones.
Figure~\ref{fig:hdaschema-new} updates the spectrum from the beginning of the paper to include these new variants:
cone HDAs, generalized ST-automata, and P-automata.

We have shown that language-wise, the spectrum collapses into two classes,
HDAs and iHDAs on one side (languages closed under subsumption),
and all others, including the new cone HDAs,
whose languages are not necessarily closed under subsumption.
Indeed, the Kleene theorem of \cite{DBLP:journals/lmcs/FahrenbergJSZ24}
together with the one of the present paper
show that (i)HDA languages are generated by rational expressions with subsumption closure,
whereas the other class is generated by rational expressions \emph{without} subsumption closure.

We conjecture that also the Myhill-Nerode theorem \cite{DBLP:journals/fuin/FahrenbergZ24}
and the Büchi-Elgot-Trakhtenbrot theorem \cite{DBLP:conf/dlt/AmraneBFF24}
for HDAs have analogues for partial HDAs,
and due to the lack of subsumption closure,
we expect their proofs to be easier than the ones in \cite{DBLP:journals/fuin/FahrenbergZ24, DBLP:conf/dlt/AmraneBFF24}.

We have also shown that contrary to HDAs and iHDAs \cite{DBLP:journals/fuin/FahrenbergZ24},
partial HDAs may be determinized.
Further, the example in Fig.~\ref{fig:hda_not_determinizable} illustrates that neither spHDAs nor srHDAs are always determinizable.
It remains to be seen whether swap invariance is still a necessary and sufficient condition for determinizability.

It thus appears that from an automata-theoretic perspective,
partial HDAs are easier to work with than HDAs proper.
On the other hand, the \emph{geometry} of HDAs is much better behaved than the one of partial HDAs:
we have already seen that pHDAs may be ``geometrically broken'',
and the recent \cite{conf/fossacs/ChamounM26} shows that
geometric realizations of (lax) relational presheaves are difficult to understand from a geometric point of view.
To be more precise, \cite{conf/fossacs/ChamounM26} realizes relational presheaves as \emph{non-Hausdorff} topological spaces,
whereas realizations of HDAs are Hausdorff.
Whether other definitions are more suitable,
for example akin to \cite{DBLP:conf/fossacs/Dubut19} which breaks the directed structure rather than the topology,
is up to debate.

\bibliographystyle{plain}
\bibliography{mybib}

\end{document}